\documentclass[aip,pre]{revtex4-2}
\usepackage[final]{changes}
\definechangesauthor[color=red]{1}
\definechangesauthor[color=red]{2} 
\definechangesauthor[color=blue]{3} 
\usepackage{stackengine}
\usepackage{siunitx}
\usepackage{amsmath,amssymb}
\usepackage{eucal}  
\usepackage{mathrsfs}
\usepackage{placeins} 
\usepackage{czt}  
\usepackage{tabularx}
\usepackage{array} 
\usepackage{booktabs} 
\usepackage{tikz}  %
\usetikzlibrary{tikzmark}
\usepackage{tzplot}
\usepackage{hyperref}
\usepackage{cancel}  
\usepackage[normalem]{ulem}
\usepackage{graphicx}
\usepackage{dcolumn}
\usepackage{bm}
\usepackage{color}
\usepackage{enumitem}
\usepackage{empheq} 
\usepackage{url}
\usepackage{gensymb}  
\usepackage{stackengine}
\DeclareFontFamily{U}{mathc}{}
\DeclareFontShape{U}{mathc}{m}{it}%
{<->s*[1.03] mathc10}{}
\DeclareMathAlphabet{\mathscr}{U}{mathc}{m}{it}

\newcommand{\pv}{\mbox{;}\,}
\newcommand{\doubleangle}[1]{\langle\kern-0.2em\langle #1 \rangle\kern-0.2em\rangle}

\def\xsum{\mathop{\sum\nolimits'}}
\def\xprod{\mathop{\prod\nolimits'}}
\newcommand{\R}{\mathbb{R}}

\newcommand{\I}{\text{i}} 
\newtheorem{teorema_}{Theorem}{\bf}{\it}
\newtheorem{proposition}{Proposition}{\bf}{\it}
{\bf}{\it}
{\bf}{\it}
{\bf}{\it}
{\bf}{\it}

{\bf}{\it}
\newtheorem{lemma}{Lemma}{\bf}{\bf}

\begin{document}
\title{Exact and limit results for the CTRW in presence of drift and position dependent noise intensity}
\author{Marco Bianucci}
\email[]{marco.bianucci@cnr.it}
\affiliation{Istituto di Scienze Marine, Consiglio Nazionale delle Ricerche (ISMAR - CNR),\\
19032 Lerici (SP), Italy}
\author{Mauro Bologna}
\affiliation{Departamento de Ingenier\'ia El\'ectrica-Electr\'onica, Universidad de Tarapac\'a, Arica, Chile}
\author{Riccardo Mannella}
\affiliation{Dipartimento di Fisica, Universit\`a di Pisa, 56100 Pisa, Italy}

\date{\today}
%

%
%
\begin{abstract}
Continuous-time random walks (CTRWs) with drift and position-dependent jumps provide a
highly general framework for describing a wide range of natural and engineered systems.
We analyze the stochastic differential equation (SDE) associated with this class of
models, in which the driving noise $\xi(t)$ consists of spike (shot) events, and we
derive two exact analytical results.
First, we obtain a closed-form expression for the $n$-time correlation functions of
$\xi(t)$, expressed as a sum over all $2^{\,n-1}$ ordered partitions of the observation
times (Proposition~\ref{prop:procedure_fligh}). Second, using the $G$-cumulant
formalism, we derive an \emph{exact} non-local master equation (ME) for the probability
density function of the CTRW variable $x(t)$, valid without invoking diffusive limits,
fractional scaling assumptions, or closure hypotheses (Proposition~\ref{prop:exactME}).
In interaction representation, this ME retains the same structural form as that of the
standard CTRW without drift or position-dependent jumps.
Our main result is the emergence of a \textbf{universal local master equation}: at long
times, the exact non-local ME is universally and accurately approximated by a
time-local ME whose only coefficient is the instantaneous renewal rate $R(t)$. 
From this equation, exact in the well known Poissonian case, both local and
global properties of the PDF can be readily inferred. For example,
the temporal behavior of the PDF is directly controlled by that of the
rate function $R(t)$: if the waiting-time distribution decays as a
power law with exponent $\mu>2$, then $R(t)\to\mathrm{const}$ and the
system converges to the Poissonian equilibrium. By contrast, for
$\mu<2$, the rate decays in time and the effective diffusion induced
by the noise slowly weakens, without leading to a 
stationary state.
Numerical
experiments confirm its remarkable accuracy even far beyond regimes where a naive
time-scale separation would justify it.
\end{abstract}
\pacs{}

\maketitle 

%

\section{\label{sec:intro}Introduction}
Many natural, social, economic, and engineered systems evolve under
the action of \emph{intermittent} external impulses whose timing is
\emph{non-Markovian} and whose effects depend sensitively on the
current state of the system. Such forcing is neither diffusive nor Gaussian, and
its statistics typically exhibit memory, heavy tails, and burstiness. A
very general representation of this interplay between deterministic
dynamics and renewal-type stochastic perturbations is the
state-modulated, renewal-driven stochastic differential equation (SDE)
\begin{equation}
    \dot x = -C(x,t) - f\!\left(x,\xi \pv t\right)[t],
    \label{eq:central_model}
\end{equation}
where $C$ is the drift (unperturbed velocity field) and
$f\!\left(x,\xi \pv t\right)[t]$ is a renewal point process whose
jumps and waiting times (WTs) may depend on the instantaneous state.

To keep notation transparent while retaining the essential structure,
we focus on the widely used and physically motivated setting in which:
(i) the drift is time-independent, $C(x,t)=C(x)$; (ii) the renewal
impulses factorize into a state-dependent gain $I(x)$ and a purely
temporal renewal process $\xi[t]$,
\(
f(x,\xi \pv t)[t] = I(x)\,\xi[t];
\)
and (iii) WTs
and jump amplitudes are independent of each
other, with PDFs $\psi(t)$ 
and $p(\xi)$, respectively, 
that do not depend on~$x$.
Under these assumptions, a realization $\xi(t)$ is a weighted sum of
Dirac-delta spikes occurring at random event times with random weights
(see Fig.~\ref{fig:trajectory_leapers}).
We use the terms “spike”, “shock”, and “intermittent”
interchangeably. Shot noise is the paradigmatic example of such
discrete-event fluctuations, appearing across many fields, from current
statistics and quantum transport to photon counting and interferometric
noise \citep{Schottky1918,Blanter2000,Loudon2000,Caves1981}. As a
modeling ingredient, spike renewal processes widely serve as driving
terms in generalized CTRW-based Langevin equations, for instance in
\emph{neuroscience} (stochastic synaptic input)
\citep{Brunel2000,Gerstner2002} and in \emph{climate science} (impulsive
forcing of slow modes) \citep{Hasselmann1976,Majda1999}.

The resulting SDE,
\begin{equation}
    \dot x = -C(x) - I(x)\,\xi[t],
    \label{SDE}
\end{equation}
constitutes a minimal yet flexible model capturing three key features
often inseparable in complex dynamical systems:
\begin{enumerate}
    \item[(i)] external deterministic forcing (the drift);
    \item[(ii)] temporal memory through non-Poissonian renewal statistics;
    \item[(iii)] multiplicative effects arising from state-dependent gain.
\end{enumerate}
This combination naturally generates non-Gaussian propagators, anomalous
transport, and history-dependent relaxation, extending far beyond the
scope of classical diffusion and Fokker--Planck descriptions
\citep{mksPRE58}.

When $C(x)=0$ and $I(x)\equiv 1$, Eq.~\eqref{SDE} reduces to the
standard CTRW velocity model $\dot x = \xi[t]$
\citep{Montroll_Weiss_1965,Scher_Motroll_1975}. In this classical
framework, heavy-tailed WTs
lead to 
asymptotic 
fractional diffusion
equations, while external forces and space-dependent drifts produce
    generalized fractional Fokker--Planck equations
\citep{mksPRE58,hlsPRL105,Torrejon2018_SIAM,Kolokoltsov2007_arXiv,Meerschaert2018_Handbook}.
Equation~\eqref{SDE} extends these constructions by introducing
unperturbed driving drift, 
state-dependent (and thus multiplicative) spike noise, 
without making any assumption about the WT and jump PDFs. 

Non-Poissonian impulses with drift and state-dependent amplitude arise
in numerous applications. In climate dynamics, the growth and phase of
the El~Ni\~no--Southern Oscillation are modulated by intermittent
atmospheric bursts whose non-exponential statistics and state-dependent
impact lead to non-Gaussian predictability regimes
\citep{bGRL43,bcmmA2018,bcmmCHAOS28,sGL10,llJC38}. Early-warning signals
for climatic tipping depend sensitively on the noise law, and classical
variance-based indicators may fail under heavy-tailed forcing
\citep{mbPRX14,Layritz2023_arXiv}. In computational neuroscience,
synaptic shot noise is well described by renewal processes with
state-dependent efficacy, producing voltage fluctuations and
first-passage statistics that elude diffusion approximations
\citep{Privault2020,rgNC17,EPFL_RenewalBook}. In materials science,
intermittent crack-growth increments with Weibull- or
lognormal-distributed waiting times provide direct examples of
state-dependent renewal forcing \citep{NIST_Weibull,yAISM46,DayGoswami2002}.

\paragraph{Objectives and scope of this work}
This paper has a dual objective.

First, we derive an explicit and self-contained exact expression for the
\emph{multi-time correlation functions} of spike-type renewal processes,
$\langle \xi(t_1)\cdots\xi(t_n)\rangle_{t_0}$, generalizing and
completing our recent results for two-time correlations and for
step-type renewal processes \citep{bblmCSF196,bblmCSF202}.

The second objective is to use the previous result to derive,
via the $G$-cumulant formalism
\citep{bbJSTAT4,bCSF148,bCSF159,bmJPC4,bbmJSP191},
the \emph{exact nonlocal master equation} (ME), given in
Eq.~\eqref{ME_fin}, for the PDF $P(x\pv t)$ of the variable $x$
in Eq.~\eqref{SDE}. This result, stated as
Proposition~\ref{prop:exactME} in Section~\ref{sec:MEx},
is valid for arbitrary WT and jump PDFs with finite moments.
It requires no diffusive or fractional scaling limit and
invokes no phenomenological closure. 
Notably, when expressed in the
interaction representation, the structure of this ME is
identical to the well-known one of the CTRW case
(i.e., $C(x)=0$ and $I(x)=1$).
To the best of our
knowledge, this is the first time that an exact ME of such
generality has been obtained.

The central and structurally organizing result of this paper, however,
is not the exact ME in itself but rather its remarkable 
 simplification, precisely formulated by
Theorem~\ref{prop:ULME}, and less formally summarized as
follows.

\begin{proposition}[Universal Local ME — informal statement]
\label{prop:ULME_informal}
The exact nonlocal ME~\eqref{ME_fin} collapses, in a precise
asymptotic sense, onto a local-in-time equation formally
identical to the Poissonian ME, with the constant rate
$1/\tau$ replaced by the time-dependent renewal rate $R(t)$:
\begin{empheq}[box=\fbox]{align}
\label{ME_universal_intro}
\partial_t P(x\pv t)
\;\approx\;
\partial_x C(x)\,P(x\pv t)
\;+\;
R(t)\,\left[\hat{p}(\I\,\partial_x I(x))-1\right]\,P(x\pv t).
\end{empheq}
This holds both when the mean waiting time $\tau$ is finite
(Proposition~\ref{muG2}) and when it diverges with
$1<\mu<2$ (Proposition~\ref{prop:approxME}). It is not a
phenomenological approximation: it follows systematically
from the $G$-cumulant structure and from the dominance of
specific compositions in the correlation-function sum
(Propositions~\ref{prop:universal} and~\ref{prop:PDF_power}).
\end{proposition}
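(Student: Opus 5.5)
We work in the interaction representation that removes the drift. Write $\mathcal{L}_0 P=\partial_x C(x)P$ and $\mathcal{L}_1=\partial_x I(x)$, and set $\tilde P(t)=e^{-\mathcal{L}_0 (t-t_0)}P(t)$. Then the SDE~\eqref{SDE} gives the Liouville-type equation $\partial_t \tilde\rho=\xi(t)\tilde{\mathcal{L}}_1(t)\tilde\rho$, where $\tilde{\mathcal{L}}_1(t)=e^{-\mathcal{L}_0(t-t_0)}\mathcal{L}_1e^{\mathcal{L}_0(t-t_0)}$. Averaging over $\xi$ with the $G$-cumulant formalism, and using the multi-time correlation functions of Proposition~\ref{prop:procedure_fligh} (a sum over the $2^{n-1}$ ordered partitions of the times), gives the exact nonlocal ME~\eqref{ME_fin} of Proposition~\ref{prop:exactME}. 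We take this as the starting point.

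The key structural fact is that within one block of an ordered partition all spike times coincide, because they belong to the same delta event. A block of $k$ times therefore contributes $\langle\xi^k\rangle\tilde{\mathcal{L}}_1(s)^k$ at a single time $s$. Summing over $k$ turns the within-block contribution into $\hat p(\I\tilde{\mathcal{L}}_1(s))-1$. The inter-block structure carries the renewal kernels: the first block is weighted by the rate $R(s)$ (the density of events at $s$ given a start at $t_0$), and later blocks by the memory kernels $\psi$. In the interaction representation the exact ME thus has the same form as the pure CTRW ME. It reads $\partial_t\tilde P=\int_{t_0}^t K(t,s)\,[\hat p(\I\tilde{\mathcal{L}}_1)-1]\ldots$, with a nonlocal kernel obtained by resumming the $G$-cumulants, i.e. the connected parts of the compositions.

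The approximation step asks which compositions dominate at long times. I would split the $G$-cumulant of each order into two parts. The first is the single-block composition, which is local and equals $R(t)[\hat p(\I\tilde{\mathcal{L}}_1(t))-1]$. The second is the multi-block connected remainders, each involving a difference of the form $\psi$-convolutions minus products of rates. For finite $\tau$ (Proposition~\ref{muG2}) I would show that these connected remainders are integrable and decay on the scale of the correlation time of the renewal process. For $\mu>2$ this happens because $R(t)\to1/\tau$ and the connected kernel $R(t,s)-R(t)R(s)$ decays, so the remainder is subleading relative to the local term. Transforming back with $e^{\mathcal{L}_0(t-t_0)}$ then turns $\tilde{\mathcal{L}}_1(t)$ into $\mathcal{L}_1$ and yields Eq.~\eqref{ME_universal_intro}. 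For Poisson statistics the connected parts vanish identically, which is why the local equation is exact in that case.

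The main obstacle is the regime $1<\mu<2$ (Proposition~\ref{prop:approxME}). There $R(t)\sim t^{\mu-2}$ decays, and the multi-block kernels are not integrable in any uniform sense. I would first try to use the power-law asymptotics of Proposition~\ref{prop:PDF_power} together with Proposition~\ref{prop:universal}, via Tauberian estimates in Laplace space. The aim is to show that each multi-block contribution carries an extra factor of the relative order $R(t)$-weighted tail compared with the leading single-block term. Their ratio to $R(t)[\hat p-1]$ would then tend to zero as $t\to\infty$, which makes the statement asymptotic in the sense claimed.
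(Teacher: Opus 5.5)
\textbf{Verdict: genuine gap.} In both regimes your argument rests on discarding the multi-block (connected) contributions, and you justify that only by the decay of kernels at large lags and by letting $t\to\infty$. Neither survives the time integrations out of which the ME is built. Your $1<\mu<2$ step would fail as planned, and you never use the time-scale hypothesis that the paper's Theorem~\ref{prop:ULME} requires.

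\textbf{The remainder is not small.} You can see this exactly from Eq.~\eqref{MEGx_I}. Since $\int_0^t[R'(t-u)+R(0)\delta(t-u)]\,\mathrm{d}u=R(t)$, an integration by parts gives
\begin{equation*}
\partial_t\tilde P(x\pv t)=\big[\hat p(-\I\tilde{\mathcal L}_I(t))-1\big]\Big\{R(t)\,\tilde P(x\pv t)+\int_0^t\mathrm{d}v\,\big[R(t-v)-R(t)\big]\,\partial_v\tilde P(x\pv v)\Big\}.
\end{equation*}
Inserting $\partial_v\tilde P\simeq[\hat p(-\I\tilde{\mathcal L}_I(v))-1]R(v)\tilde P(x\pv v)$ in the last term reproduces, at leading order, your connected kernel $R(v)[R(t-v)-R(t)]$.

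For finite $\tau$, the integrated weight of this kernel is
\begin{equation*}
\int_0^\infty[R(w)-\tau^{-1}]\,\mathrm{d}w=(2\tau^2)^{-1}\int_0^\infty w^2\psi(w)\,\mathrm{d}w-1 .
\end{equation*}
This is a generically nonzero $O(1)$ constant. It is infinite for $2<\mu\le 3$, where $R(w)-\tau^{-1}\sim\tau^{-1}(T/w)^{\mu-2}$, so your integrability claim fails in part of the finite-$\tau$ range. Even where it is finite, the remainder is not $o(1)$ as $t\to\infty$; it is the local term times that constant times one more factor of $[\hat p-1]$.

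For $1<\mu<2$ the situation is worse. $R(t-v)-R(t)$ is large over the whole past, and each extra block costs a factor $\int_0^tR\sim(t/T)^{\mu-1}$. In Laplace space this is $\hat R(s)^p$ with $\hat R(s)\propto(sT)^{1-\mu}\to\infty$ (cf.\ Appendix~\ref{app:Montroll_Weiss}). A Tauberian estimate therefore shows the multi-block terms \emph{growing} relative to the single block.

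A concrete check is the case $C=0$, $I=1$:
\begin{itemize}
\item The exact law \eqref{MW} is $\hat P(k\pv t)=\sum_n\Pr[N(t)=n]\,\hat p(k)^n\hat P(k\pv 0)$, with $\Pr[N(t)=0]=\Psi(t)\sim(T/t)^{\mu-1}$.
\item Eq.~\eqref{ME_universal_intro} gives Poisson weights $e^{-\Lambda}\Lambda^n/n!$ with $\Lambda=\int_0^tR$.
\end{itemize}
Already the $n=0$ weights differ by the factor $\Psi(t)\,e^{\Lambda(t)}\to\infty$. So no estimate can make your ratio vanish as $t\to\infty$.

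\textbf{What the paper does instead.} It does not derive an integrated estimate.
\begin{itemize}
\item For $1<\mu<2$, it takes the single-block dominance of Proposition~\ref{prop:universal} as input. That statement is for large lags and, per the text after Lemma~\ref{lem:2}, under a separation between the drift time scale and that of the noise. The paper transfers it to the $G$-cumulants through \eqref{cum-corr}, obtaining the local form \eqref{cum-corr_2}, and inserts this into \eqref{GCum_x_L}--\eqref{MEGx_GEN}.
\item For finite $\tau$, it uses Blackwell's theorem to make the correlation functions Poissonian (Proposition~\ref{muG2_}), again with the drift time scale assumed large compared with $T$.
\end{itemize}

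Your decomposition is a sharper tool than either, and for finite $\tau$ it is a genuinely different route from the Blackwell argument. To close it, you must state a time-scale condition and show that it makes the remainder above small. In the Schr\"odinger picture the remainder contains $e^{\mathcal{L}_a(t-v)}$ acting on the jump-induced change of $P$. What must be small is therefore $[\hat p-1]$ weighted by $\int R(w)\,\mathrm{d}w$ over the window in which the drift has not yet relaxed a jump. That is a ratio of time scales, not a large-$t$ limit.

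\textbf{Two smaller points.}
\begin{itemize}
\item The inter-block weights are $R(t_B-t_{B-1})$, not $\psi$, because unobserved intermediate events are summed over.
\item Proposition~\ref{prop:PDF_power} concerns heavy-tailed jump amplitudes, so it supplies no waiting-time asymptotics for a Tauberian argument.
\end{itemize}
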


\noindent
This result admits a clear physical interpretation:
\emph{all the non-Markovian complexity of the renewal
process is distilled, at the level of the macroscopic
PDF, into the single scalar function $R(t)$ evaluated
at the local time $t$}. When $R$ is constant the
equation is exact; when $R(t)$ decays as a power law it
quantitatively captures the progressive quenching of
stochastic forcing.
The remainder of the paper is organized around
Proposition~\ref{prop:ULME_informal}.
Section~\ref{sec:summary} gives a concise map of all results.
Sections~\ref{sec:corr_leapers_2-4}--\ref{sec:corr_leapers} establish
the exact correlation functions (Proposition~\ref{prop:procedure_fligh}).
Section~\ref{sec:MEx} derives the exact ME
(Proposition~\ref{prop:exactME}) and the $G$-cumulant identification
that is the key algebraic step.
Sections~\ref{sec:further}--\ref{sec:universalCTRW} classify the
simplifications leading to
Proposition~\ref{prop:ULME_informal}, Eq.~\eqref{ME_universal_intro}, 
and establish its regime of validity.
Section~\ref{sec:numericaltechinques} discusses numerical implementation
and extensive validation.

Before that, however, it is worth noting that when $\xi[t]$ is shot
noise and $I(x)$ is nonconstant, a convention is required for
evaluating $I(x(t))$ ``during a spike'' (pre-, post-, or mid-kick),
analogously to It\^o vs.\ Stratonovich interpretations for white noise.
Throughout this work we adopt the \emph{Stratonovich prescription}
(see Sec.~\ref{sec:numericaltechinques}), corresponding to representing
each Dirac spike as the zero-width limit of a fixed-area smooth pulse.
This choice is natural within a stochastic Liouville framework along
single trajectories, which is the perspective used in this work.

\begin{figure}[ht]
\centering
\includegraphics[width=\textwidth]{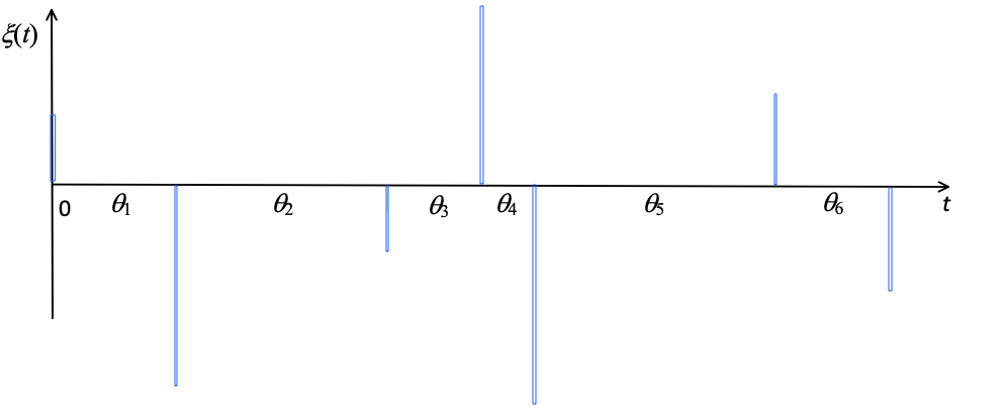}
\caption{A trajectory realization $\xi(t)$
for the case of L\'evy flight-CTRW, with $t_0=0$. We have $\xi( t)=\sum_{q=0}^{\infty} \xi_q \; \delta\left(t-\sum_{k=0}^q \theta_k\right)$, (see text for details). }
\label{fig:trajectory_leapers}
\end{figure}


\section{Summary of the main results.\label{sec:summary}}

The paper establishes three interlocking results, all organized around
the Universal Local ME stated in
Proposition~\ref{prop:ULME_informal} above.

\paragraph{Result 1: Exact $n$-time correlation functions}
(Proposition~\ref{prop:procedure_fligh}, Section~\ref{sec:corr_leapers}.)
The $n$-time joint correlation function
$\langle\xi(t_1)\cdots\xi(t_n)\rangle_{t_0}$ is given by a sum over
all $2^{n-1}$ ordered partitions (compositions) $\pi(n)$ of the times
$t_1\leq\cdots\leq t_n$:
\begin{align}
\label{corrGen_delta_pre}
\langle \xi(t_1)\xi(t_2)\cdots\xi(t_n)\rangle_{t_0}
=
\sum_{\pi(n)}
\prod_{B\in\pi(n)}
\overline{\xi^{\lvert B\rvert}}\,
R(t_B - t_{B-1})\,
\delta(\Delta t_{B}).
\end{align}
Each partition into $p$ blocks contributes a product of jump moments
$\overline{\xi^{m_i}}$, rate functions $R$, and Dirac deltas enforcing
time coincidences within blocks.
This expression is exact, requires no assumptions on the waiting-time or
jump distributions, and generalizes our earlier two-time result
\citep{bblmCSF196}.
The connection to the $G$-cumulant formalism
(see~\cite[Section~4.4.3, Eq.~(94)]{bbJSTAT4}) is immediate: the
partition structure of Eq.~\eqref{corrGen_delta_pre} mirrors
the moment--$G$-cumulant relation, which is what makes the subsequent ME
derivation exact.

\paragraph{Result 2: Exact nonlocal ME}
(Proposition~\ref{prop:exactME}, Section~\ref{sec:MEx}.)
Inserting Result~1 into the $G$-cumulant framework applied to the full
SDE~\eqref{SDE} yields the exact ME for $P(x\pv t)$:
\begin{align}
\label{ME_fin}
\partial_t P(x\pv t)
=&\;\partial_x C(x)\, P(x\pv t)
\nonumber \\
&+
\left[\hat p\!\left(\I\partial_x I(x)\right)-1\right]
\Bigg[
\int_{0}^{t}\!{\rm d}u\,
R'(t-u)\,
e^{\partial_x C(x) u}
P(x\pv u)
+ R(0)\,P(x\pv t)
\Bigg].
\end{align}
This is a partial integro-differential equation coupling drift and
memory.
It is exact and reduces, when $R$ is constant (exponential WT), to the
known Poissonian ME:
\begin{align}
\label{ME_fin_Pois}
\partial_t P(x\pv t)
=
\partial_x C(x)\, P(x\pv t)
+
\frac{1}{\tau}
\left[
\hat p\!\left(\I\partial_x I(x)\right)-1
\right]
P(x\pv t).
\end{align}
The reader should appreciate that the structure of the general
exact ME~\eqref{ME_fin}, when written in the interaction
representation (see Eq.~\eqref{MEGx_I}), \emph{is identical to
that of the well-known standard CTRW case} (corresponding to
the SDE~\eqref{SDE} when $C(x)=0$ and $I(x)=1$), reported here
in Eq.~\eqref{MEGx_}~\citep{Montroll_Weiss_1965,Scher_Motroll_1975}
(for details see Section~\ref{sec:Gcumulants_x}).
Although theoretically exact, Equation~\eqref{ME_fin} has been validated by extensive numerical
simulations, a representative subset of which is shown in
Figs.~\ref{fig:def_comp_t0.2_pde_dico-gauss_mu1.50_tz1.00}--\ref{fig:def_comp_t0.2_pde_dico_mu3.50_tz1.00}
(Note that in the latter figure we also exploit the analytical results
of Appendix~\ref{app:ME_resuls}; for details, see the figure caption).

\paragraph{Result 3: Universal Local ME Theorem}
(Propositions~\ref{muG2} and~\ref{prop:approxME},
Sections~\ref{sec:PLWTmu_gt_2}--\ref{sec:universalCTRW}.)
Equation~\eqref{ME_fin} admits, in a precise asymptotic regime, the
local-in-time reduction~\eqref{ME_universal_intro}.
The two cases are:
\begin{itemize}
    \item \label{en:1}\emph{Finite mean waiting time $\tau$} (Proposition~\ref{muG2}):
          by the Blackwell renewal theorem
          \citep{Blackwell1948,FellerVol2}, $R(t)\to\tau^{-1}$,
          so Eq.~\eqref{ME_universal_intro} converges to the Poissonian
          ME~\eqref{ME_fin_Pois}.  
    In this case, over long time intervals, the PDF converges 
    toward the equilibrium one, provided that this relaxation process is 
    compatible with the characteristics of the dynamics induced by the drift 
    field (see, e.g.,  Figs.~\ref{fig:Def_comp_t100-30_pde_eq80_mu2.50_tz2.00__ga1.2_Cubic_appr} and \ref{fig:Def_comp_t0.5-2-6_pde_eq80_mu2.50_tz1.00_ga5.00_be0.50}). 

    \item \label{en:2}\emph{Infinite mean waiting time} ($1<\mu<2$,
          Proposition~\ref{prop:approxME}):
          the dominant composition in the sum~\eqref{corrGen_delta_pre}
          is the single-block ($p=1$) term
          (Proposition~\ref{prop:universal}),
          which reduces all $G$-cumulants to their single-block form,
          yielding Eq.~\eqref{ME_universal_intro} with
          $R(t)\sim T^{-1}(T/t)^{2-\mu}$. In this case, the effect of the noise progressively weakens as time
          increases, but the PDF never reaches an equilibrium state (see, e.g.,
          Figs.~\ref{fig:Def_pdedico_t5-30-100_mu1.5_tz2.00_ga1.2_Cubic_appr}
          and
          \ref{fig:Def_comp_t0.5-2-6_pde_eq80_mu1.50_tz1.00_ga5.00_be0.50}).
\end{itemize}
In both cases the approximation is not a closure: it is a structural
consequence of the renewal algebra.
Equation~\eqref{ME_universal_intro} has been validated by extensive
numerical simulations across a wide range of parameters, drift
strengths, and jump distributions, including cases well outside the
formal regime of the proofs. A representative subset of these results
is shown in
Figs.~\ref{fig:Def_pdedico_t5-30-100_mu1.5_tz2.00_ga1.2_Cubic_appr}--\ref{fig:Def_comp_t0.5-2-6_pde_eq80_mu2.50_tz1.00_ga5.00_be0.50}.
As for Fig.~\ref{fig:def_comp_t0.2_pde_dico_mu3.50_tz1.00}, in
Fig.~\ref{fig:Def_comp_t0.5-2-6_pde_eq80_mu2.50_tz1.00_ga5.00_be0.50}
we also exploit the analytical results of Appendix~\ref{app:ME_resuls}.
For details, see the figure captions.

%
\begin{figure}
    \centering
    \includegraphics[width=1.\linewidth]{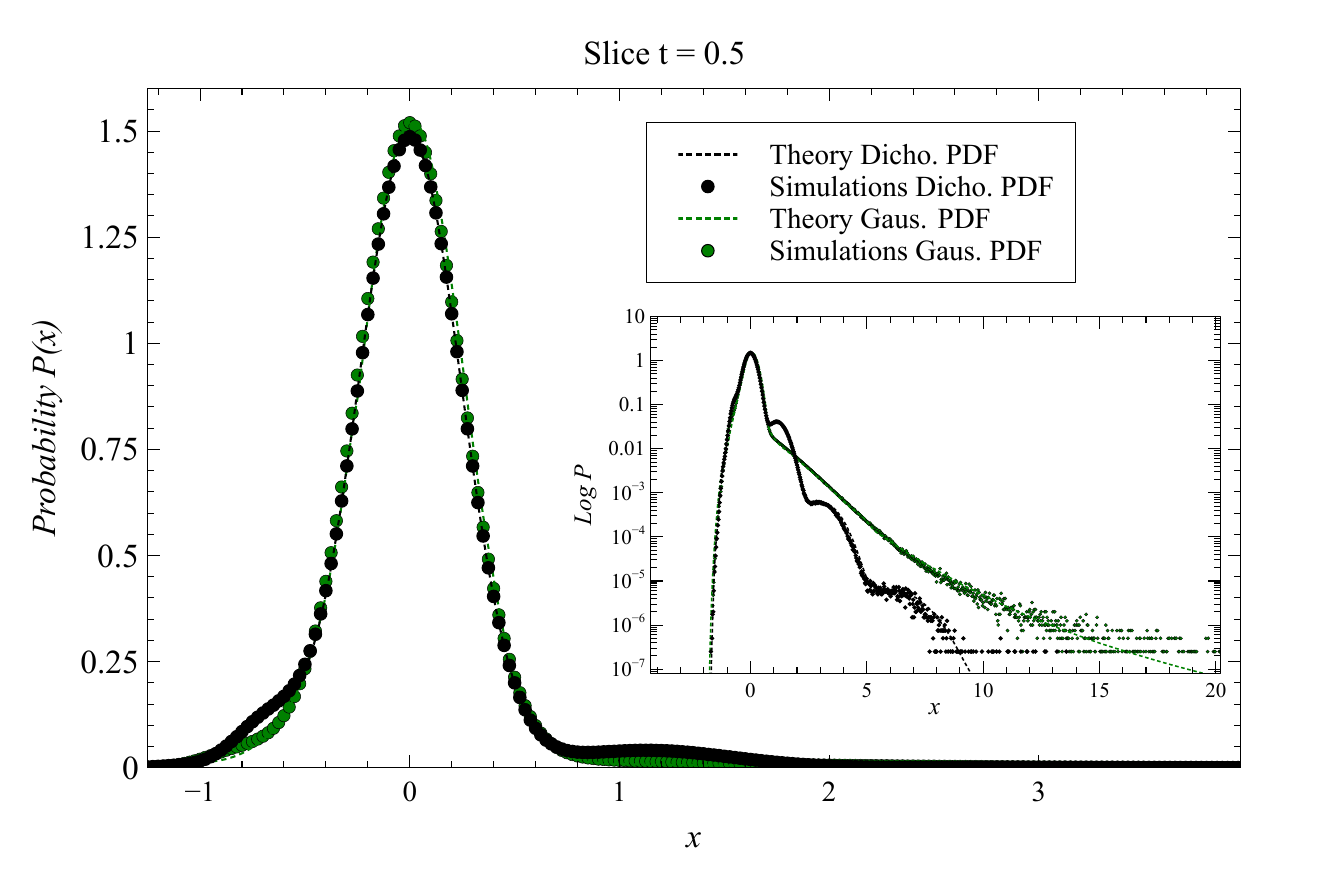}
    \caption{\( P(x\pv t) \) at times  \( t = 0.5 \) for the
SDE~\eqref{SDE} in the \textit{multiplicative} case, i.e., with \( C(x) =\gamma\,x \) and
\( I(x)=1+\beta \,x \), where $\gamma=1.0$ and $\beta =0.5$. The jump PDF is both dichotomous (black color) and Gaussian (green color), the WT is 
                $\psi(t)=(\mu -1){T}^{-1}\left(1+t/T\right)^{-\mu }$ with $T=1$ and $\mu=1.5$. The figure shows the results of numerical
simulations of the SDE together with the 
                 solution of the exact theoretical  PDE in Eq.~\eqref{ME_fin}
                 (dashed lines, barely visible). The insert shows the same plot but in log scale. 
As expected, the agreement
between theory and simulation is perfect. }
    \label{fig:def_comp_t0.2_pde_dico-gauss_mu1.50_tz1.00}
\end{figure}
\begin{figure}
    \centering
    \includegraphics[width=1.0\linewidth]{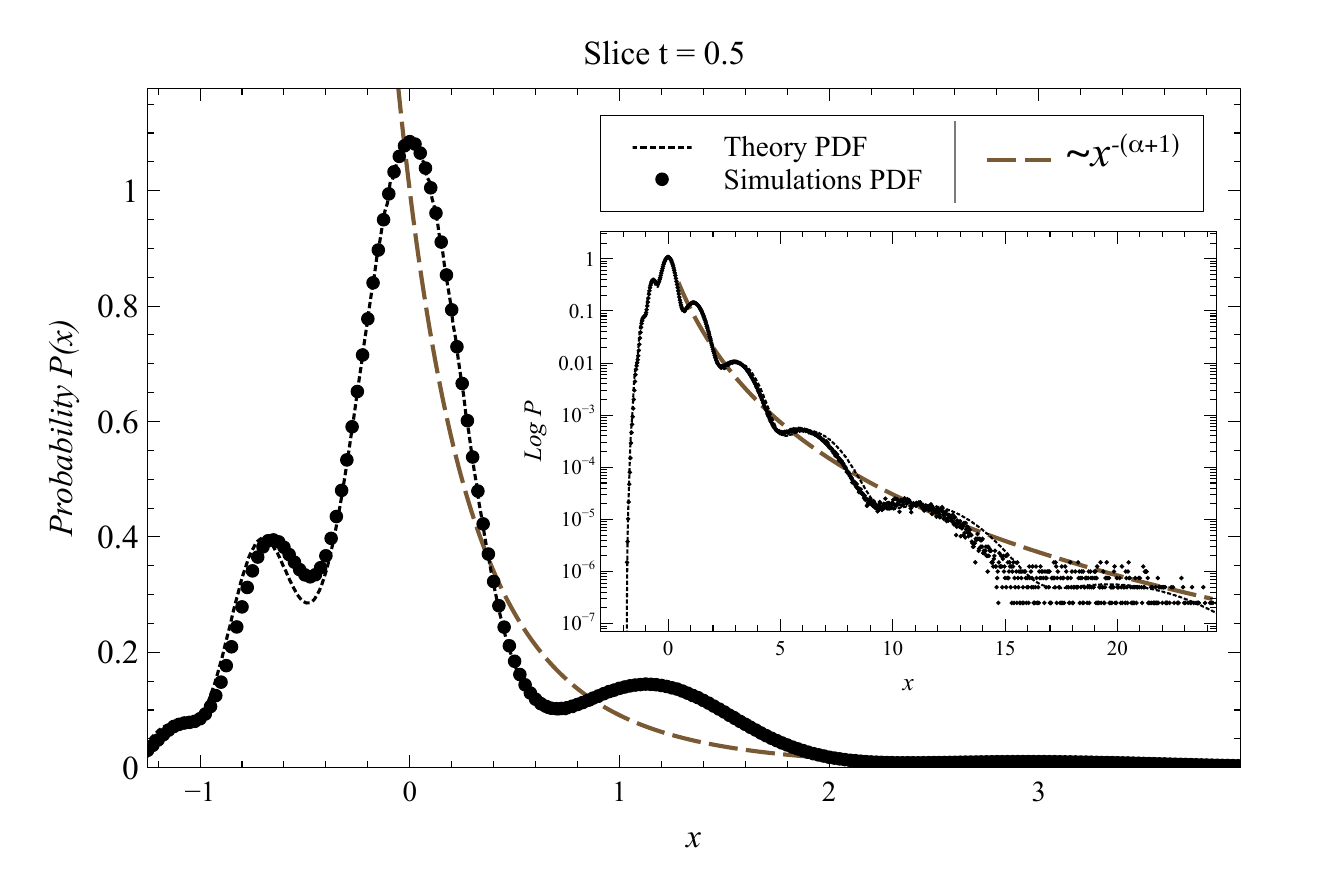}
    \caption{
    The same as fig.~\eqref{fig:def_comp_t0.2_pde_dico-gauss_mu1.50_tz1.00}, but for $\mu=3.5$ (from which $\tau=T/(\mu-2)=2/3$)
     and for the sole dichotomous jump PDF case. Because $\mu>2$ from the result~3, 
point~\ref{en:1}, we can exploit the analytical findings of Appendix~\ref{app:ME_resuls}.
      For example,  the tail of the PDF goes as $x^{-(\alpha+1)}$, where $\alpha\approx3.9$ is obtained 
    from the transcendental implicit Eq.~\eqref{power2} (dashed brown line).    
    }
    \label{fig:def_comp_t0.2_pde_dico_mu3.50_tz1.00}
\end{figure}

\begin{figure}[!htbp]
\centering
\includegraphics[width=0.8\textwidth]{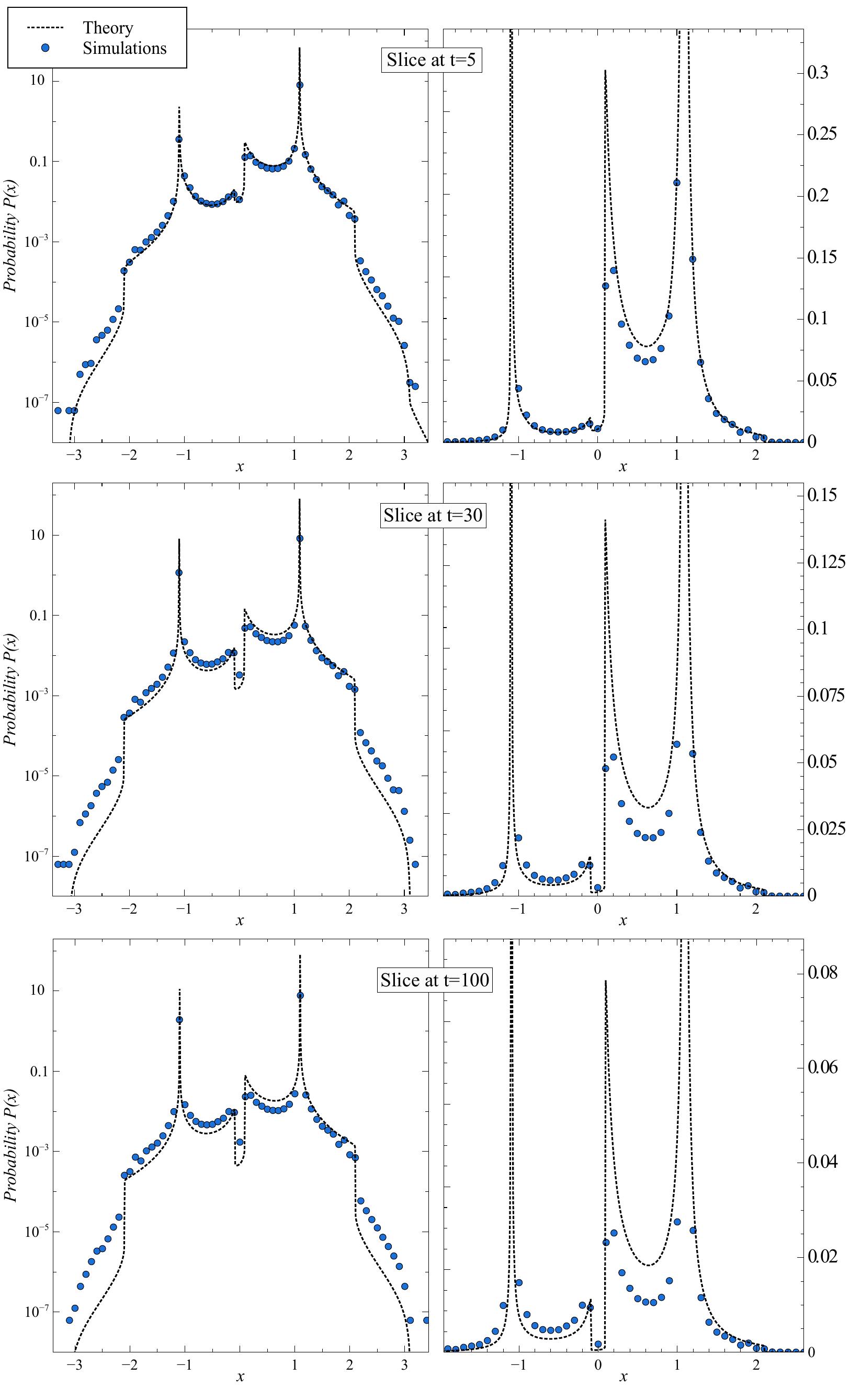}
\caption{Log-plot (left) and linear plot (right) of \( P(x,t) \) at 
$t=5$ (top), $t=30$ (center), and $t=100$ (bottom) for the SDE~\eqref{SDE}
in the \textit{additive} case (i.e., with $I(x)=1$) with strongly nonlinear drift,
\( C(x) = -x(1.2 - x^2) \).
The system is driven by renewal noise with dichotomous jumps PDF and WT PDF
\(
\psi(t)=(\mu -1)T^{-1}(1+t/T)^{-\mu}\), with $T=2$ and $\mu=1.5$ (infinite aging).
Circles correspond to direct numerical simulations of
the SDE~\eqref{SDE}, while the dotted curves show the solution of
the \textit{approximate} local-in-time ME given in
Eq.~\eqref{ME_universal_intro}.  
The vertical scale on the right hand plates decreases
as time increases, to show the vanishing of the PDF as predicted by Eq.~\eqref{ME_universal_intro}, and it is clipped:
at the two equilibrium points the PDF reaches a stationary value outside the range shown (see the left hand plates)
The agreement between theory and simulations is very good. }
\label{fig:Def_pdedico_t5-30-100_mu1.5_tz2.00_ga1.2_Cubic_appr}
\end{figure}
 
\begin{figure}[ht]
\centering
\includegraphics[width=1.0\textwidth]{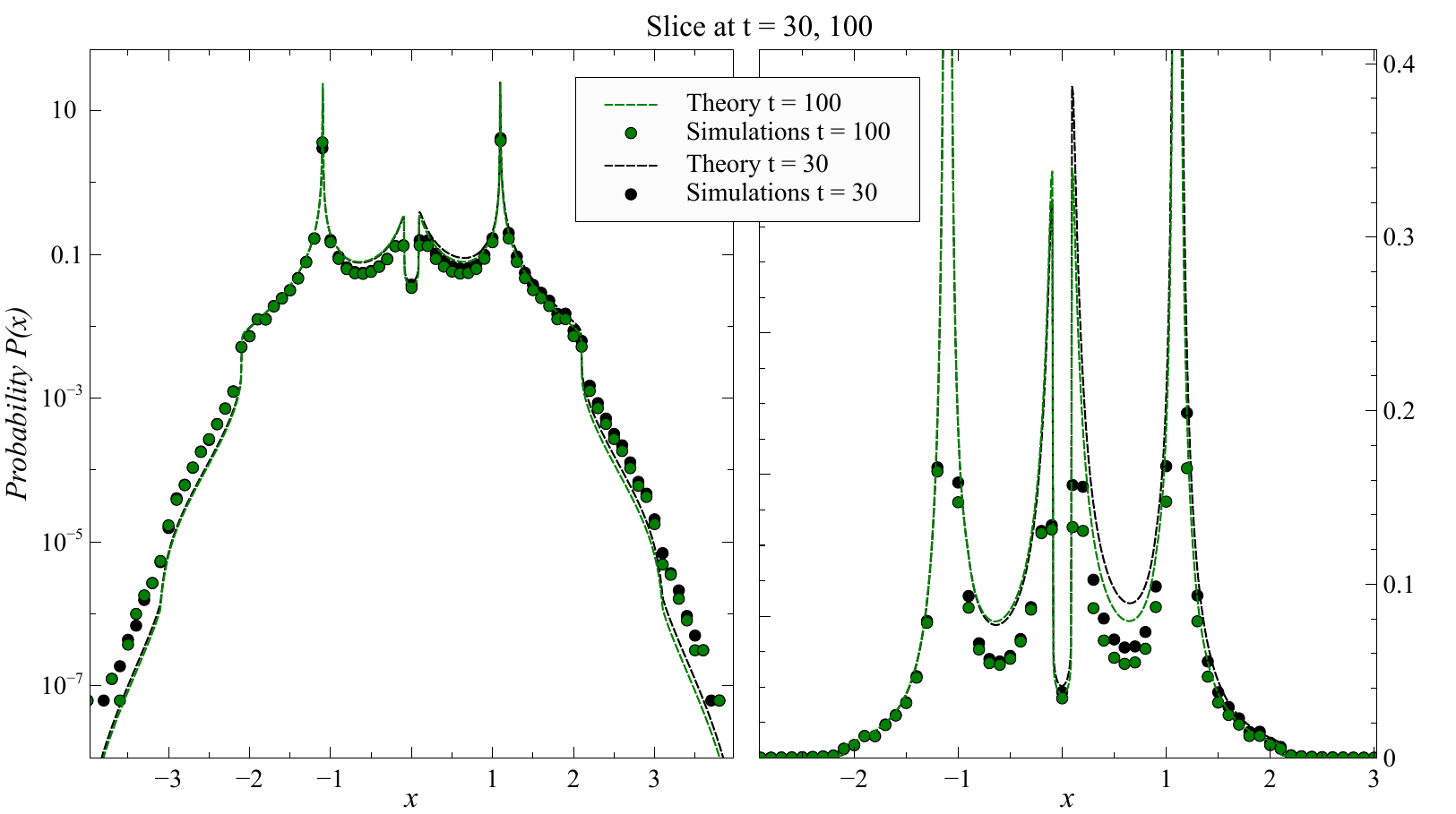}
\caption{Same as Fig.~\ref{fig:Def_pdedico_t5-30-100_mu1.5_tz2.00_ga1.2_Cubic_appr},
but with $\mu=2.5$ and shown only for $t=30$ and $t=100$.
The two curves are almost perfectly superimposed, indicating that the
Poissonian-like equilibrium has effectively been reached. 
}
\label{fig:Def_comp_t100-30_pde_eq80_mu2.50_tz2.00__ga1.2_Cubic_appr}
\end{figure}

\begin{figure}[ht]
\centering
\includegraphics[width=1.0\textwidth]{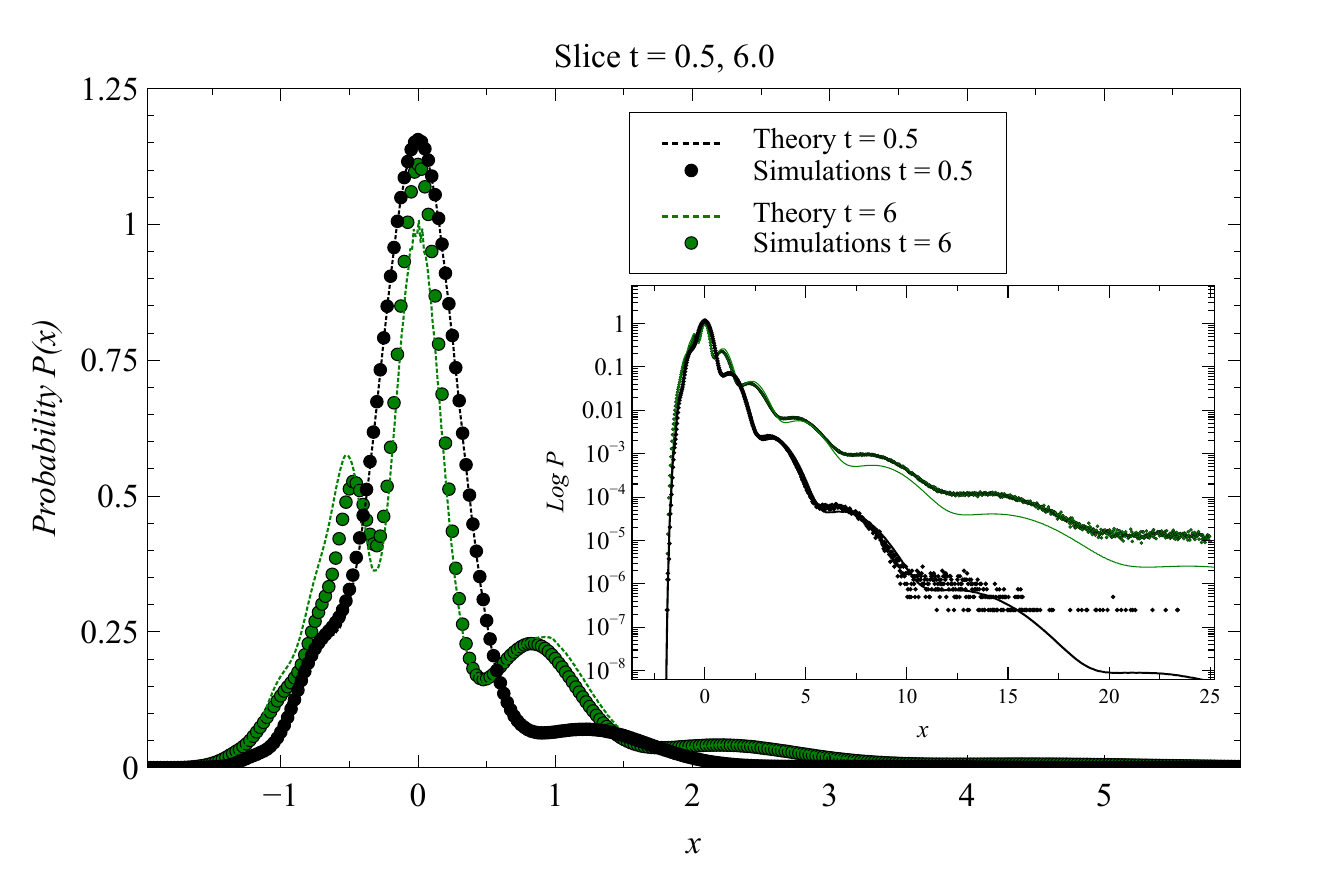}
\caption{Plot and log-plot (inset) of \( P(x,t) \) at $t=0.5$ (black) and
$t=6.0$ (green) for the SDE~\eqref{SDE} in the
\textit{multiplicative} case, with
\( C(x) = -\gamma x \) and \( I(x) = 1 + \beta  x \),
where $\gamma=0.1$ (slow drift) and $\beta =0.5$.
The system is driven by the same renewal noise with power-law
WT as in
Fig.~\ref{fig:Def_pdedico_t5-30-100_mu1.5_tz2.00_ga1.2_Cubic_appr},
i.e.\ with $\mu=1.5$ (infinite aging).
Circles represent direct numerical simulations of the
SDE~\eqref{SDE}, while the dotted curves correspond to the solution
of the \textit{approximate} local-in-time
ME
given in 
Eq.~\eqref{ME_universal_intro}.
The agreement between theory  (although in the simplified version of Eq.~\eqref{ME_universal_intro}) and simulations is very good.
}
            \label{fig:Def_comp_t2-6_pde_eq80_mu1.50_tz1.00_ga0.10_be0.50}
\end{figure}

\begin{figure}[ht]
\centering
\includegraphics[width=1.0\textwidth]{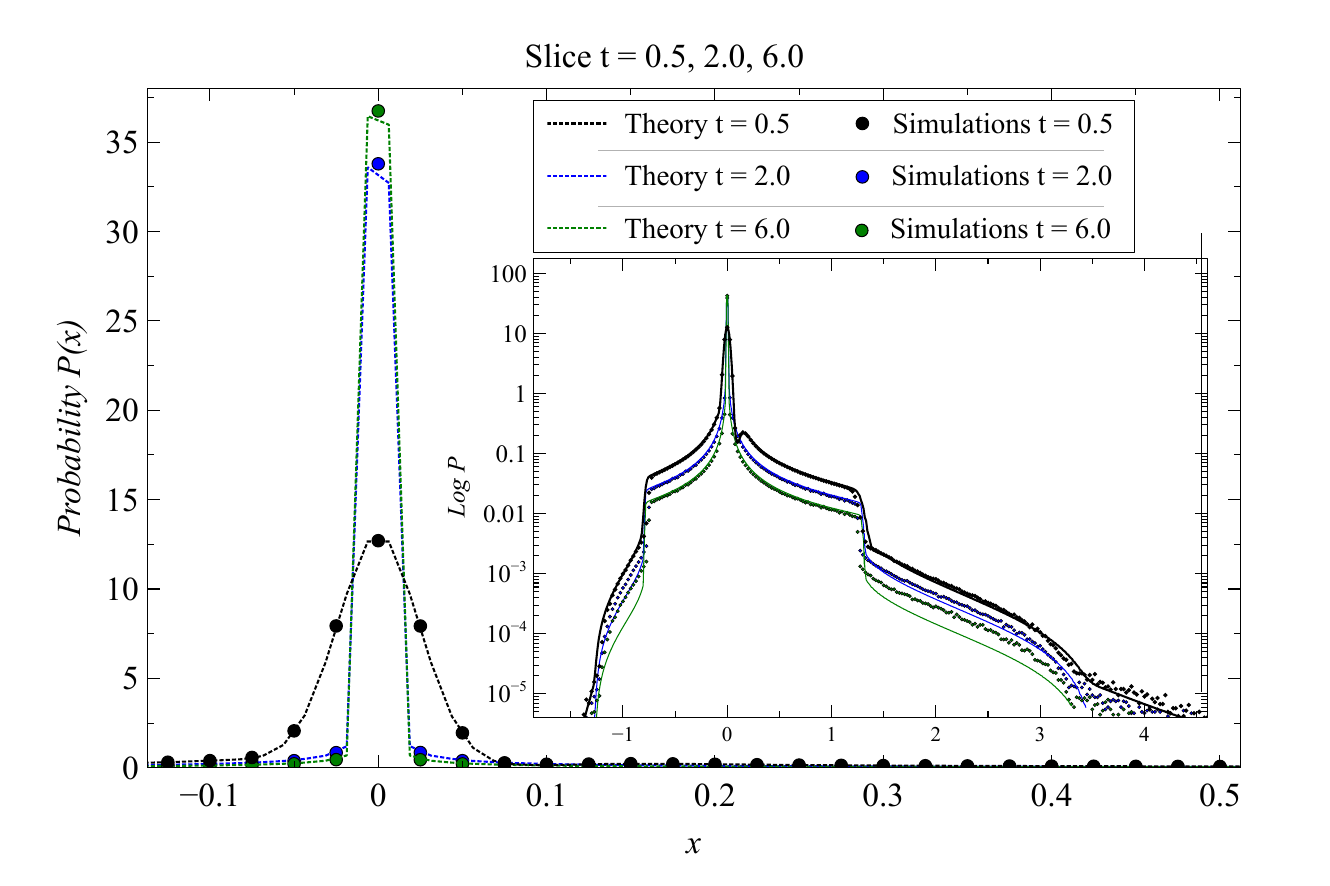}
\caption{Plot and log-plot (inset) of \( P(x,t) \) at $t=0.5$ (black).$t=2.0$ (blue)  and
$t=6.0$ (green) for the SDE~\eqref{SDE} in the
\textit{multiplicative} case, with
\( C(x) = -\gamma x \) and \( I(x) = 1 + \beta  x \),
where $\gamma=5$ (fast drift) and $\beta =0.5$.
The system is driven by the same renewal noise with power-law
WT as in
Fig.~\ref{fig:Def_pdedico_t5-30-100_mu1.5_tz2.00_ga1.2_Cubic_appr},
i.e.\ with $\mu=1.5$ (infinite aging).
Circles represent direct numerical simulations of the
SDE~\eqref{SDE}, while the dotted curves correspond to the solution
of the \textit{approximate} local-in-time
ME
given in 
Eq.~\eqref{ME_universal_intro}.  Compare this case to the case of Fig.~\ref{fig:Def_comp_t2-6_pde_eq80_mu1.50_tz1.00_ga0.10_be0.50}: here, for $t=6$ the dissipative drift is dominant: the solutions tends to a Dirac-delta function centered in $x=0$.
The agreement between theory (although in the simplified version of Eq.~\eqref{ME_universal_intro}) and simulations is very good.
            }
\label{fig:Def_comp_t0.5-2-6_pde_eq80_mu1.50_tz1.00_ga5.00_be0.50}
\end{figure}

        \begin{figure}[ht]
\centering
\includegraphics[width=1.0\textwidth]{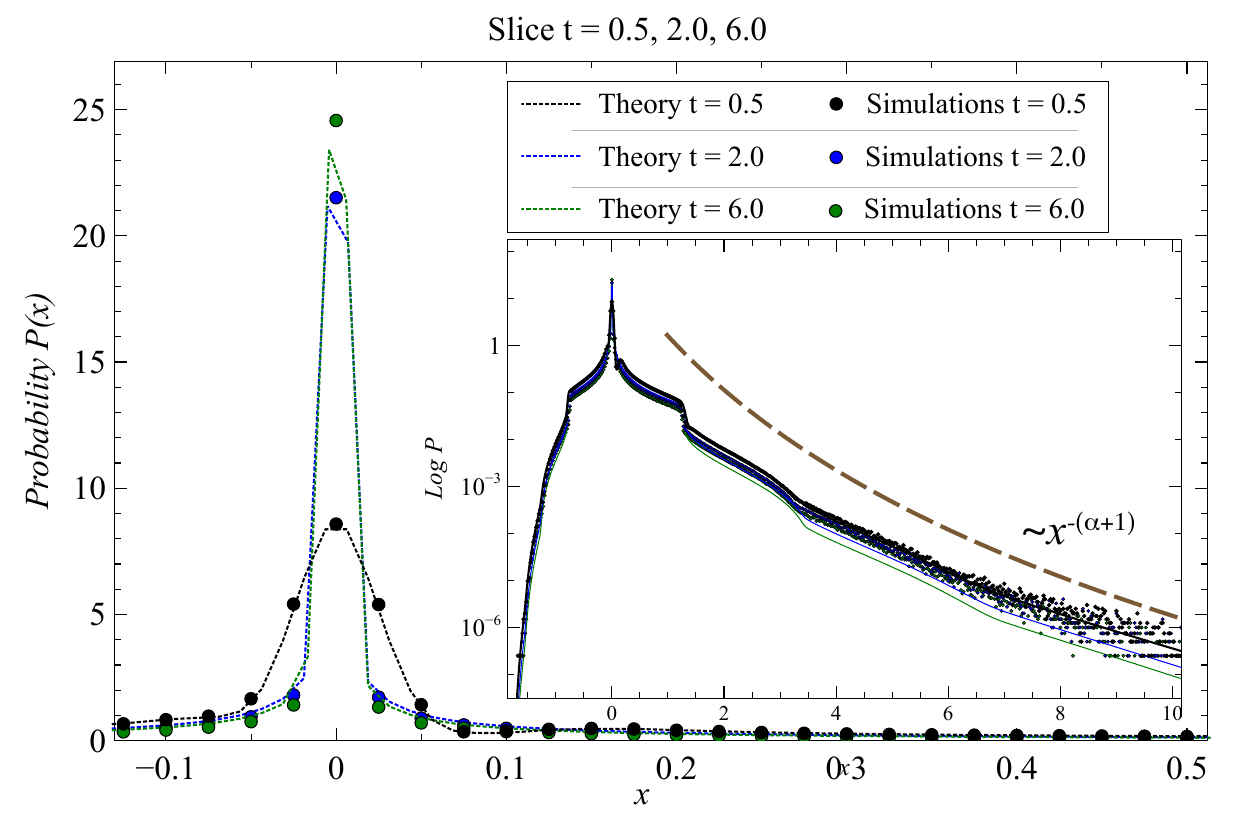}
\caption{
The same  as Fig.~\ref{fig:Def_comp_t0.5-2-6_pde_eq80_mu1.50_tz1.00_ga5.00_be0.50},
but with $\mu=2.5$, from which $\tau=2$.
In contrast to the heavy-tailed case, a characteristic time scale emerges for the convergence toward the corresponding Poissonian equilibrium PDF. The PDF agrees with the analytical findings of Appendix~\ref{app:ME_resuls}. For example, because $\gamma \tau=2>1$,
 close to $x=0$ the equilibrium solution (here for $t\gtrsim 6$) has an integrable divergence (see text after Eq.~\eqref{MEGx_dichoquat}. 
Moreover, for large $x$ values, the PDF scales as $x^{-(\alpha+1)}$,
where $\alpha \approx 6.7$, obtained from the transcendental implicit
Eq.~\eqref{power2} (dashed brown line).}
\label{fig:Def_comp_t0.5-2-6_pde_eq80_mu2.50_tz1.00_ga5.00_be0.50}
\end{figure}
%

%
%
\section{Preliminary  considerations and definitions\label{sec:preliminary}}

We assume
the time ordering defined by the notation $t_i \le t_j$ for $i <j$.
We denote with  $t_0$ the initial time at which the stochastic process begins. 
The time lag $t_1-t_0$ is important for measuring the aging of the
process.
The average process, indicated by the angle brackets $\langle...\rangle_{t_0}$,
is over
all the possible trajectories  $\xi(t)$, starting at the time $t_0$,   
each one corresponding
to a realization of the stochastic
process $\xi[ t]$, with its proper functional differential measure  $ P_{t_0}[\xi(t)]\delta\xi(t)$:
\begin{equation}
\label{n_corr_def}
\langle\xi( t_1)\xi( t_2)...\xi( t_n)\rangle_{t_0} =\int \xi( t_1)\xi( t_2)...\xi( t_n) P_{t_0}[\xi(t)]\delta\xi(t).
\end{equation}

Because all $\xi_k$ and  $\theta_k$ are independent random numbers, the PDF for the trajectory realization is 
\begin{equation}
\label{PDF_trajectory}
P_{t_0}[\xi(t)]\delta\xi(t)=p_0(\xi_0)\text{d}\xi_0\prod_{q=1}^\infty\psi(\theta_{\!q})\text{d}\theta_{\!q}\,p(\xi_q)\text{d}\xi_q
\end{equation}
where $\xi_0$ represents the value of $\xi$ at the initial time $t_0$, and 
$p_0(\xi_0)$ is the PDF used to sample the initial value 
of $\xi$ (i.e., at $t = t_0$). For example, if all trajectories start from the same initial 
value $\xi'$ (e.g., $\xi' = 0$), then $p_0(\xi_0) = \delta(\xi_0 - \xi')$. 
On the other hand, if the initial value $\xi_0$ is a random number with the same 
PDF as the random variable $\xi$, then $p_0(\xi_0) = p(\xi_0)$.
Of course, the influence of the initial PDF 
$p_0(\xi_0)$ is particularly significant (i.e., persistent over time) when the aging 
time of the process is long or infinite.

The average of a function of the random number $\xi$ will be indicate with a bar over the same
function, i.e.  
$\int f(\xi)p(\xi) \text{d}\xi:= \overline{f(\xi)}$.
Thus,  
$\int \xi^np(\xi) \text{d}\xi:= \overline{\xi^n}$. 
    With some abuse of notation, we also define  
$\int \xi_0^{n} p_0(\xi_0) \text{d}\xi_0 := \overline{\xi_0^{n}}$.
    A hat over a function, with argument $s$, denotes its Laplace
transform.

    We introduce some standard quantities
commonly used in the theory of stochastic renewal processes.

Let $\psi(t)$ be the waiting-time (WT) PDF, the mean WT is defined as:
\begin{equation}
\label{tau}
\tau := \int_0^\infty t\,\psi(t)\,\mathrm{d}t .
\end{equation}

    Let $\psi_i(t)$ be the probability density that the
$i$-th event occurs at time $t$: $\psi_i(t)$ can be 
    written as\footnote{$\psi_i(t)$  is also known as the $i$-fold convolution of $\psi(t)$ and Eq.~\eqref{psi_i} provides an alternative
definition.}
\begin{equation}
\label{psi_i}
\psi_i(t)
=\int \delta\!\left(t-\sum_{k=1}^i\theta_k\right)
\prod_{k=1}^i \psi(\theta_k)\,\mathrm{d}\theta_k .
\end{equation}

The probability density for an event to occur exactly at time $t$ is then given by
the rate function $R$, defined as 
\begin{equation}
\label{Rtilde}
R(t-t_0)
:= \sum_{i=1}^{\infty} \psi_i(t-t_0)
\;\Rightarrow\;
\hat{R}(s) = \frac{\hat{\psi}(s)}{1-\hat{\psi}(s)},
\end{equation}
%
    %
%

A related quantity is
\begin{equation}
\label{R}
\tilde{R}(t-t_0)
:= R(t-t_0) + \delta(t-t_0)
\;\Rightarrow\;
\hat{\tilde{R}}(s) = \frac{1}{1-\hat{\psi}(s)}.
\end{equation}
By defining $\psi_0(\theta) := \delta(\theta)$, $\tilde{R}$ can also be written as
$$\tilde{R}(t-t_0)
= \Theta(t-t_0)\sum_{n=0}^{\infty} \psi_n(t-t_0).$$

In the special case of an exponential WT distribution,
$\psi(\theta)=\tau^{-1}e^{-\theta/\tau}$, the rate function reduces to a
constant value: $R(t-t_0)=1/\tau$, corresponding to the usual Poissonian event
rate. By contrast, if the WT PDF exhibits a power-law decay,
$\psi(\theta)\sim T^{-1}(T/\theta)^{\mu}$, where $T$ is a characteristic time
scale, the rate function becomes time dependent. In this case, the mean waiting time $\tau$,
defined as in Eq.~\eqref{tau},
is finite for $\mu>2$, and $R(t)\sim \tau^{-1}[1+(T/t)^{\mu-2}]$ at large
times. For $1<\mu<2$, the mean waiting time diverges and asymptotically
$R(t)\sim T^{-1}(T/t)^{2-\mu}$ (e.g., \cite{bJSTAT2020}).

We turn now to the derivation of the multi-time correlation functions of the
stochastic process $\xi[t]$.

%
%
%
\section{The general frame and the heuristic approach to the two and four times correlation functions\label{sec:corr_leapers_2-4}}

A trajectory realization $\xi(t)$ of the stochastic process $\xi[t]$, depicted in 
Fig.~\ref{fig:trajectory_leapers}, consists of a sequence of Dirac-delta functions separated by
random times $\theta_1,$  $\theta_2,\ldots,$ $\theta_k, \ldots$, 
drawn from the WT PDF $\psi(\theta)$. 
Each pulse is weighted by a random amplitude $\xi_1, \xi_2, \xi_3, \ldots, \xi_k, \ldots$, 
sampled independently from the PDF $p(\xi)$.

In formula, the spike-function of time of  Fig.~\ref{fig:trajectory_leapers} can be written as 
(we set $\theta_0:=0$):
\begin{equation}
\label{trajectory_delta}
\xi( t)=\sum_{q=0}^{\infty} \xi_q \; \delta\left(t-t_0-\sum_{k=0}^q \theta_k\right) 
\end{equation}
where $\delta(t)$ is the Dirac-delta function.

From \eqref{n_corr_def},  \eqref{PDF_trajectory} and \eqref{trajectory_delta} 
we get
\begin{align}
\label{n_corr_def_2_delta}
&  \langle\xi( t_1)\xi( t_2)...\xi( t_n)\rangle_{t_0}\nonumber \\
&=\int \sum_{i_1=0}^{\infty} \xi_{i_1} \; \delta\left(t_1-t_0-\sum_{k_1=0}^{i_1} \theta_{k_1}\right)
\sum_{i_2=i_1}^{\infty} \xi_{i_2} \; \delta\left(t_2-t_0-\sum_{k_2=0}^{i_2} \theta_{k_2}\right)
\times...\nonumber \\
&...\times\sum_{i_n=i_{n-1}}^{\infty} \xi_{i_n} \; \delta\left(t_n-t_0-\sum_{k_n=0}^{i_n} \theta_{k_n}\right)
p_0(\xi_0)\text{d}\xi_0\prod_{q=1}^\infty\psi(\theta_{\!q})\text{d}\theta_{\!q}\,p(\xi_q)\text{d}\xi_q
\end{align}
%
%
\begin{figure}[ht]
\centering
\includegraphics[width=0.5\textwidth]{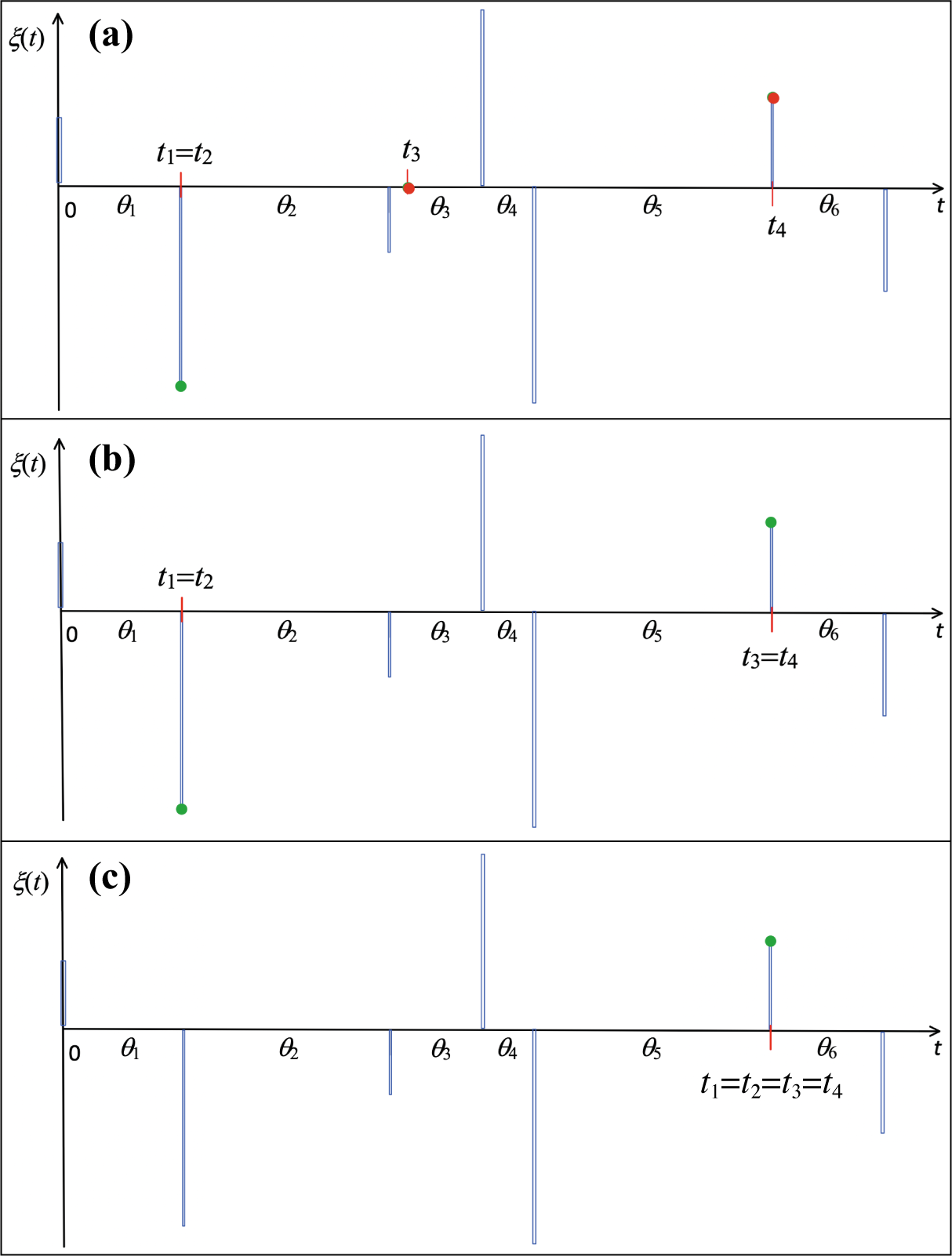}
\caption{A trajectory realization for the spike stochastic renewal process. The product $\xi( t_1)\xi( t_2)\xi( t_3)\xi( t_4)$ is zero in the case of the panel (a), where at the time
$t_3$ there is not an event. After averaging over the trajectories, assuming the odd moments  of $\xi$
are zero, the only non vanishing terms in the sum  of Eq.~\eqref{n_corr_def_2_delta}, are those
for which $ t_1, t_2...t_n$ are equal in pairs, for example, 
$ t_1 = t_2 $  
and $ t_3 = t_4 $ for $n=4$, as in panel (b) in figure, or in blocks containing an even number of times,
for example $ t_1 = t_2 = t_3 = t_4 $ for $n=4$, as  in panel (c) in figure. 
}
\label{fig:trajectory_leapers_corr_1}
\end{figure}

It is important to emphasize that  
Eq.~\eqref{n_corr_def_2_delta} is exact,  following directly from the definition of the 
multi-time joint correlation function of the  stochastic renewal process  $\xi[t]$.

The reader should appreciate that, due to the nonanalytic nature of the Dirac delta
function, caution must be exercised when interpreting
Eq.~\eqref{n_corr_def_2_delta}. Indeed, consider the two-point correlation function
evaluated at times \( t_1 \) and \( t_2 \ge t_1 \). This can be written as follows (for
simplicity we set \( t_0 = 0 \); in this case we omit the subscript ``\( t_0 \)'' in the
angle brackets):
\begin{align}
\label{2_corr_def_2_delta}
&  \langle \xi(t_1)\xi(t_2) \rangle
\nonumber \\
&= \int \sum_{i_1=0}^{\infty} \xi_{i_1} \;
\delta\!\left(t_1-\sum_{k_1=0}^{i_1} \theta_{k_1}\right)
\sum_{i_2=i_1}^{\infty} \xi_{i_2} \;
\delta\!\left(t_2-\sum_{k_2=0}^{i_1} \theta_{k_2}
- \xsum_{k_2=i_1+1}^{i_2} \theta_{k_2}\right) \times
\nonumber \\ &
p_0(\xi_0)\,\text{d}\xi_0 \times
\prod_{k_1=0}^{i_1} \psi(\theta_{k_1})\,\text{d}\theta_{k_1}
p(\xi_{k_1})\,\text{d}\xi_{k_1}
\xprod_{k_2=i_1+1}^{i_2} \psi(\theta_{k_2})\,\text{d}\theta_{k_2}
p(\xi_{k_2})\,\text{d}\xi_{k_2} \; .
\end{align}
Here we adopt the convention that a primed summation or product is equal to zero when
the upper limit is smaller than the lower one. This situation arises when
\( t_2 = t_1 \). We now consider the case \( t_2 = t_1 + \epsilon \), with
\( \epsilon \to 0 \).
In this case,
however small \( \epsilon \) is, the WT PDF and the PDF of \( \xi \) in the primed
product never vanish, whereas, as we have observed, they should be exactly zero when
\( \epsilon = 0 \). Thus, \( t_2 = t_1 \) is a point of discontinuity. This issue will be
addressed again later, where it will be examined in more detail.

Although Eq.~\eqref{n_corr_def_2_delta} may appear complex, it can be handled straightforwardly. 
For the case $n = 2$, assuming,  for the sake of simplicity, that $\overline{\xi}=0$, the result was derived in~\cite{bblmCSF196}:
\begin{align}
\label{cor2t_delta_pre}
\langle \xi(t_1) \xi(t_2) \rangle_{t_0} &= 
\overline{\xi_0^2} \,\delta(t_1 - t_0)\delta(t_2 - t_1) +
\overline{\xi^2} R(t_1 - t_0) \delta(t_2 - t_1).
\end{align}
This expression can be interpreted probabilistically: if $\xi(t_1)$ or 
$\xi(t_2)$ do not coincide with a spike (i.e., a transition event), their 
values are zero. If $t_1$ and $t_2$ correspond to distinct spikes, the average 
vanishes due to the independence of the $\xi_k$ and the assumption 
$\bar{\xi} = 0$. 

Therefore, the correlation is nonzero only when $t_2 = t_1$, as indicated by the 
Dirac delta $\delta(t_2 - t_1)$, and it is proportional to the probability that 
a transition event occurs at  that common time, after $t_0$, as captured by the 
rate function $R$. If no events have occurred, i.e., $t_1 = t_2 = t_0$, this 
situation is encoded by the term $\delta(t_1 - t_0)$.

    Note that $\xi_0$
contributes only at the initial time $t_0$. Since the spike process considered
  is usually interpreted as the velocity of a generalized CTRW, as in
Eq.~\eqref{SDE}, it would be
unnatural to start the random walk with a jump at $t_0$. Typically, the first
jump occurs only after the first random waiting time. This corresponds to
assuming $\xi_0 = 0$, i.e., $\overline{\xi_0^n} = 0$. Therefore we 
    can take
$\overline{\xi_0^2} = 0$, which, when inserted into
Eq.~\eqref{cor2t_delta_pre}, leads to
\begin{align}
\label{cor2t_delta_}
\langle \xi(t_1) \xi(t_2) \rangle_{t_0} &=
\overline{\xi^2}\,R(t_1 - t_0)\,\delta(t_2 - t_1).
\end{align}

Importantly, Eqs.~\eqref{cor2t_delta_pre}-\eqref{cor2t_delta_} depend only on the second moment 
$\overline{\xi^2}$ of $p(\xi)$ 
and is unaffected by its specific form. Whether $\xi$ is drawn from a dichotomous, Gaussian, or 
power-law distribution, the result remains valid.
For this reason, the result was referred to as universal in~\cite{bblmCSF196}.

No assumptions are made on the WT distribution $\psi(\theta)$; in particular, $\psi(t)$ may 
exhibit a power-law decay, $\psi(t) \sim  T^{-1} (T/t)^{\mu}$ with $\mu > 1$ and $T$ is the scaling time. 
Because of the dependence on $t_1-t_0$, the process is generally non-stationary. 
In fact, although the Dirac-delta functions
in \eqref{cor2t_delta_pre} make the stochastic process $\xi[t]$
to resemble white 
noise, the rate $R(t_1 - t_0)$ introduces an age-dependent modulation.

Following~\cite{bJSTAT2020} and \cite{mssbgCSF188}, if $\psi(t) \sim  T^{-1}(T/t)^{\mu}$, then:
\begin{enumerate}[label=\Roman*]
\item\label{I_I_} For $1 < \mu < 2$: $R(t) \sim T^{-1} (T/t)^{2 - \mu}$,
\item\label{II_I_}  For $\mu = 2$: $R(t) \sim T^{-1} / \log(t/T)$,
\item\label{III_I_} For $\mu > 2$: $R(t) \sim  \tau^{-1}\left[ 1 + (T/t)^{\mu - 2}\right]$;
\end{enumerate}
where the mean time  $\tau$ is defined in \eqref{tau}.
Thus, for $1 < \mu \le 2$, the shot noise intensity decreases with age, while for $\mu > 2$, 
$\xi[t]$ behaves like stationary white noise in the long-time limit~\cite{bblmCSF196}.

We now extend our analysis beyond the $n = 2$ case. While a rigorous derivation is provided in Appendix~\ref{app:spikes}, we shall continue to utilize the statistical interpretation of Eq.~\eqref{n_corr_def_2_delta}.
The presence of Dirac delta functions in this expression implies that, for any given trajectory realization, the product $\xi(t_1)\xi(t_2)\dots\xi(t_n)$ vanishes unless every time $t_i$ coincides with a transition event. Consequently, the sum in Eq.~\eqref{n_corr_def_2_delta} reduces to a summation over all possible partitions of the set $\{t_1, \dots, t_n\}$ into $p$ blocks of coincident times, where $1 \le p \le n$. 
Each partition into $p$ blocks represents the joint probability that a transition event occurs at each of those $p$ distinct timestamps. Specifically, a block of size $m_i$ (where $\sum_{i=1}^p m_i = n$) contributes a factor $\overline{\xi^{m_i}}$, representing the $m_i$-th moment of $\xi$.
To illustrate this reasoning, consider the four-time correlation function. For simplicity, we shall maintain the assumption that all odd moments vanish. We have
\begin{align}
\label{cor4t_delta}
&\langle\xi( t_1)\xi( t_2)\xi( t_3)\xi( t_4)\rangle_{t_0} \nonumber
\\[4pt]
&=
\left\{
\begin{array}{ll}
\left[ \overline{\xi_0^4} \,\delta(t_1-t_0)+ \overline{\xi^4}\, R(t_1-t_0)\right]  
\\
\times  \delta(t_2-t_1) \delta(t_3-t_2) \delta(t_4-t_3)+&(a)
\\
\, & \,\\
+\overline{\xi^2}\left[\overline{\xi_0^2}\,\delta(t_1-t_0)
+\overline{\xi^2}\,R(t_1-t_0) \right] \delta(t_2-t_1) 
\\
\times R(t_3-t_2) \delta(t_4-t_3)&(b)
\end{array} 
\right.
%
%
%
%
\end{align}

It is apparent from Eq.~\eqref{cor4t_delta} that for the four-time correlation function to remain non-zero, only two configurations are possible: either all four time points coincide at the same transition event (case $(a)$ in Eq.~\eqref{cor4t_delta}, illustrated in Fig.~\ref{fig:trajectory_leapers_corr_1}c), or the first two times coincide at one transition event while the remaining two coincide at a subsequent transition event (case $(b)$, illustrated in Fig.~\ref{fig:trajectory_leapers_corr_1}b).

In a similar manner to the two-time joint correlation function, it should be noted that the expression in Eq.~\eqref{cor4t_delta} depends exclusively on the second and fourth moments of $\xi$, rather than on any higher-order moments.\\

For a fixed $t_0$, the process is generally non-stationary, as it depends on $t_1$ through the rate function $R(t_1 - t_0)$. However, as outlined in case \ref{III_I_} above, for $\mu > 2$ (i.e., when the average waiting time $\tau$ is finite) and for $t_1 - t_0 \gg \tau$, the stochastic renewal spike process becomes stationary.

Conversely, for $1 < \mu < 2$, the rate function behaves as $R(t) \sim T^{-1} (T/t)^{2 - \mu}$ (refer to case \ref{I_I_} in the list above)~\cite{bJSTAT2020}, thus, 
stationarity is never achieved. We also note that, in this situation, in the limit of large $t_1$ and $t_3 - t_2 \gg \tau$,  in Eq.~\eqref{cor4t_delta} the  term $(a)$ becomes dominant. 
Specifically, this term is \textit{proportional to the two-time joint correlation function evaluated at the two temporal extremes} (i.e., the initial and final time points):
\begin{align}
\label{cor4t_delta_power_slow}
&\langle\xi( t_1)\xi( t_2)\xi( t_3)\xi( t_4)\rangle_{t_0}
\stackrel{t_1\gg T}{\Rightarrow} \,  
\frac{\overline{\xi^4}}{\overline{\xi^2}}\langle\xi( t_1)\xi( t_4)\rangle_{t_0}
\delta(t_2-t_1)\delta(t_3-t_2)
\nonumber\\
&
=\overline{\xi^4}\,
R(t_1-t_0)\delta(t_4-t_1)\delta(t_2-t_1)\delta(t_3-t_2)
\nonumber\\
&
\sim\overline{\xi^4}\, T^{-1} \left(\frac{T}{t_1-t_0}\right)^{2 - \mu}\delta(t_4-t_1)\delta(t_2-t_1)\delta(t_3-t_2).
\end{align}

It is noteworthy that, starting from Eq.~\eqref{cor2t_delta_pre}, 
straightforward algebraic manipulation yields
\begin{align}
\label{cor4t_delta_bis}
&\langle\xi( t_1)\xi( t_2)\xi( t_3)\xi( t_4)\rangle_{t_0} 
=\langle\xi( t_1)\xi( t_2)\rangle_{t_0}\langle\xi( t_3)\xi( t_4)\rangle_{t_2}\nonumber \\&
+\langle\xi( t_1)\xi( t_4)\rangle_{t_0}\delta(t_2-t_1)\delta(t_3-t_2)\delta(t_4-t_3)
\,
\left(\frac{\overline{\xi^4}}{\overline{\xi^2}}-\overline{\xi^2}\right).
\end{align}
Consequently, if $\xi$ is a dichotomous random variable, the second term on the right-hand side of Eq.~\eqref{cor4t_delta} vanishes. The expression then reduces to the product of two separate two-time joint correlation functions, where the initial time of the second function corresponds to the final time of the first. That is:
\begin{align}
\label{cor4t_delta_dicho}
&\langle\xi( t_1)\xi( t_2)\xi( t_3)\xi( t_4)\rangle_{t_0}
=\langle\xi( t_1)\xi( t_2)\rangle_{t_0}\langle\xi( t_3)\xi( t_4)\rangle_{t_2}
\nonumber\\
&
=\left(\overline{\xi^2}\right)^2 \tilde R(t_1-t_0) \delta(t_2-t_1)\tilde R(t_3-t_2)\delta(t_4-t_3)
\end{align}
This represents a generalization to the non-Poissonian regime of the factorization property inherent to the four-time correlation function of a Poissonian dichotomous stochastic process (also known as telegraph noise). This property will appear more generally in the context of $n$-time correlation functions.


\section{The exact formal expressions for the $n$-time correlation function\label{sec:corr_leapers}}

As previously noted, Appendix~\ref{app:spikes} provides a rigorous derivation 
of the general multi-time correlation function, starting from the definition 
established in Eq.~\eqref{n_corr_def_2_delta}. \\  

Here, we present the same result using a more intuitive, albeit less formal, 
approach; specifically, we leverage the statistical arguments previously 
employed to interpret the $n=2$ and $n=4$ cases.

\begin{proposition}
\label{prop:procedure_fligh}
Let us consider $n$ ordered times $t_1 \le t_2 \le \dots \le t_n$. The
$n$-time joint correlation function for the stochastic process defined
as random spikes with renewal (the noise driving the L\'evy flights or
CTRW) can be obtained through the following procedure:
\end{proposition}
\begin{enumerate}[label=(\roman*)]
\item \label{itema}
Consider a sequence of $n$ ordered times $t_1, t_2, \dots, t_n$, 
delimited by a bar ``$|$'' before the first time and another after the last.
\item \label{itemb}
Consider any composition of this sequence (i.e., a partition where the order matters), 
consisting of $p \le n$ subsequences (or blocks) separated by bars. 
Specifically, we partition the ordered sequence into blocks 
$| \{ m_1 \} | \{ m_2 \} | \dots | \{ m_p \} |$, where each $m_i$ denotes the 
(even) number of elements in the $i$-th block. 
Naturally, these must satisfy $\sum_{i=1}^p m_i = n$. 
For example, a partition into $p$ blocks might be
\begin{equation}
\label{compositions_}
\vert \underbrace{t_1\,t_2\,t_3\,t_4\,t_5\,t_6}_{m_1=6}\bm{\mid}
\underbrace{t_7\,t_8}_{m_2=2}\vert 
\underbrace{t_9\,t_{10}\,t_{11}}_{m_3=3}\vert \,...\,\vert 
\underbrace{t_{n-3}\,t_{n-2}\,t_{n-1}\,\,t_n}_{m_p=4}\vert 
\end{equation}
The number of such possible compositions is $2^{n-1}$, as a bar can 
be either ``on'' or ``off'' at each even position (excluding the two endpoints).
\item \label{itemc} For each block of $m_i$ times:
\begin{itemize}
\item   supply $m_i-1$ Dirac delta functions to enforce the coincidence of all time variables within the block;
\item   retain only one time variable (e.g., the first or the last) and discard the others, as they are rendered redundant by the delta functions.
Thus, for example, from Eq.~\eqref{compositions_} we obtain:
\begin{align}
\label{corrGen_delta_temp2_}
&
\bm{\delta}_1(\bm{\Delta t}_{m_1})
\,\bm{\delta}_2(\bm{\Delta t}_{m_2})
\,\bm{\delta}_3(\bm{\Delta t}_{m_3})
...\bm{\delta}_p(\bm{\Delta t}_{p})
\nonumber\\&
\times \vert t_{m_1}\vert t_{m_1+m_2}\vert t_{m_1+m_2+m_3}\vert ...
\vert t_n\vert .
\end{align}
where, for $m_i=1$ (only one element/time in the block)
\begin{equation}
\label{def_delta1}
\bm{\delta}_i(\bm{\Delta t}_{m_i}):=1
\end{equation}
and, for $m_i>1$
\begin{align}
\label{def_delta}
&   \bm{\delta}_1(\bm{\Delta t}_{m_1}):=              \delta(t_{2}-t_{1})     \delta(t_{3}-t_{2})...  \delta(t_{m_1}-t_{m_1-1})\nonumber \\
&   \bm{\delta}_2(\bm{\Delta t}_{m_2}):=              \delta(t_{m_1+2}-t_{m_1+1})...\delta(t_{m_1+m_2}-t_{m_1+m_2-1})
\nonumber \\
&   \bm{\delta}_3(\bm{\Delta t}_{m_3}):=              \delta(t_{m_1+m_2+2}-t_{m_1+m_2+1})
...\delta(t_{m_1+m_2+m_3}-t_{m_1+m_2+m_3-1})
\nonumber \\
&...\nonumber \\
&   \bm{\delta}_i(\bm{\Delta t}_{m_i}):=      \delta(t_{m_1+...+m_{i-1}+2}-t_{m_1+...+m_{i-1}+1})...\nonumber \\
&...\delta(t_{m_1+...+m_{i-1}+m_i}-t_{m_1+...+m_{i-1}+m_i-1}).
\end{align}
Assuming time-ordering, with $t_{i+1} \geq t_i$, $\bm{\delta}_i(\bm{\Delta t}_{m_i})$ 
can also be interpreted as a single Dirac delta function whose argument is the 
difference between the endpoints of the block (except for the case $m_i=1$).
\item 
In the transformed partition
\begin{equation*}
    \vert t_{m_1}\vert t_{m_2}\vert t_{m_3}\vert \dots \vert t_{m_p}\vert
\end{equation*}
replace the first block $\vert t_{m_1} \vert$ with 
$\overline{\xi_0^{m_1}}\delta(t_{m_1}-t_0) + \overline{\xi^{m_1}}R(t_{m_1}-t_0)$. 
For $i > 1$, replace each $i$-th block with 
$\overline{\xi^{m_i}}R(t_{m_i} - t_{m_{i-1}})$. 
Specifically, with the exception of the first block (which accounts for the possibility that no transition occurs before $t_1$), each $i$-th block is replaced by the rate function $R$ evaluated at the time difference between the current and preceding blocks.
\end{itemize}
Thus, at the end, the
example partition in \eqref{compositions_} becomes 
\begin{align}
&
\left( \overline{\xi_0^{m_1}}\,\delta(t_{m_1}-t_0)+\overline{\xi^{m_1}}
R(t_{m_1}-t_0)\right)\, \bm{\delta}_1(\bm{\Delta t}_{m_1})
\nonumber \\&
\times\overline{\xi^{m_2}}R(t_{m_1+m_2}-t_{m_1})\,\bm{\delta}_2(\bm{\Delta t}_{m_2})
\;\overline{\xi^{m_3}}R(t_{m_1+m_2+m_3}-t_{m_1+m_2})\,\bm{\delta}_3(\bm{\Delta t}_{m_3})
\nonumber \\&
\times...\times \overline{\xi^{m_p}}
R(t_{n}-t_{_{n-m_{p}}})\,\bm{\delta}_p(\bm{\Delta t}_{p})
\end{align}
\item \label{iteme} $\langle\xi(t_1)\xi(t_2)\dots\xi(t_n)\rangle$ is obtained by summing all   $2^{n-1}$ compositions 
so worked. A way to do that is by first summing all those corresponding to
a fixed number $p$ of blocks  (they are 
$N(p)=\frac{\left(n-1\right)!}{(p-1) !\left[n-p\right] !}$) and then summing
for all $p=1, 2,...n$ (of course, $\sum_{p=1}^{n} N(p)=2^{n-1}$):
\begin{align}
\label{corrGen_delta}
&\langle\xi(t_1)\xi(t_2)...\xi(t_n)\rangle_{t_0}
\nonumber \\&
=
\sum_{p=1}^{ n} \bigg[
\sum_{\{m_i\}:\sum_{i=1}^p m_i={ n}}  
\left( \overline{\xi_0^{m_1}}\,\delta(t_{m_1}-t_0)+\overline{\xi^{m_1}}
R(t_{m_1}-t_0)\right)\, \bm{\delta}_1(\bm{\Delta t}_{m_1})
\nonumber \\&
\times\overline{\xi^{m_2}}R(t_{m_1+m_2}-t_{m_1})\,\bm{\delta}_2(\bm{\Delta t}_{m_2})
\;\overline{\xi^{m_3}}R(t_{m_1+m_2+m_3}-t_{m_1+m_2})\,\bm{\delta}_3(\bm{\Delta t}_{m_3})
\nonumber \\&
\times...\times \overline{\xi^{m_p}}
R(t_{n}-t_{_{n-m_{p}}})\,\bm{\delta}_p(\bm{\Delta t}_{p})
\bigg].
\end{align}
\end{enumerate}

We reiterate that in practical applications where the spike stochastic
renewal process serves as the velocity of a generalized CTRW, as in
Eq.~\eqref{SDE}, it is customary to set $\xi_0=0$ (i.e.,
$\overline{\xi_0^n} = 0$), thereby eliminating the first Dirac delta
function in Eq.~\eqref{corrGen_delta}.

Equation~\eqref{corrGen_delta} can then be conveniently rewritten as a
sum over compositions of $n$ distinguishable objects, as in
Eq.~\eqref{corrGen_delta_pre}.

In the simplified case where the odd moments of $\xi$ vanish, and by 
exploiting the result in Eq.~\eqref{cor2t_delta_pre}, the expression in 
Eq.~\eqref{corrGen_delta} can be rewritten in terms of the two-time 
correlation function as
\begin{align}
\label{corrGen_delta_2}
&\langle\xi(t_1)\xi(t_2)...\xi(t_n)\rangle_{t_0}
=
\sum_{p=1}^{ n/2} \bigg[
\sum_{\{m_i\}:\sum_{i=1}^p m_i={ n}}  
\frac{\overline{\xi^{m_1}}\,\overline{\xi^{m_2}}...\overline{\xi^{m_p}}}
{(\overline{\xi^{2}})^p}
\left\langle\xi(t_1)\xi(t_{m_1}\right\rangle_{t_0}
\times\nonumber \\&\times 
\left(
\left\langle\xi(t_{m_1+1})\xi(t_{m_1+m_2})\right\rangle_{t_{m_1}}
-\overline{\xi^{2}}\delta(t_{m_1+1}-t_{m_1})
\right)\times
...
\nonumber \\
&...\times 
\left(\left\langle\xi(t_{n-m_{p}+1})\xi(t_{n})\right\rangle_{t_{_{n-m_{p}}}}
-\overline{\xi^{2}}\delta(t_{n-m_{p}+1}-t_{n-m_{p}})\right) 
\bigg].
%
\end{align}

It is important to emphasize that the spike-like nature of the noise, 
reflected by the Dirac delta functions in Eq.~\eqref{corrGen_delta}, 
imposes a stringent constraint on which terms survive the summation 
over compositions in Eq.~\eqref{corrGen_delta} for a fixed set of 
times $\{t_1, t_2, \dots, t_n\}$. Specifically, only the term 
(or composition) consistent with both the chosen time values and 
the Dirac delta constraints contributes to the correlation function. 
For example, if all $n$ times are distinct, none of the Dirac delta 
constraints are satisfied except for the minimal partition, and the 
sum in Eq.~\eqref{corrGen_delta} reduces to the term corresponding 
to $p = n$. At the opposite extreme, if all times are identical, 
only the $p = 1$ term survives, corresponding to the full 
contraction of all times into a single block. Nonetheless, by 
invoking the same detailed argument  developed at the end of
Appendix~\ref{app:4-time}, we can 
always safely retain all terms on the r.h.s. of Eq.~\eqref{corrGen_delta} 
without affecting the final result.

The procedure outlined in Proposition~\ref{prop:procedure_fligh} may initially seem intricate, but
it is, in fact, straightforward. To illustrate, we provide a couple of examples (one in Appendix~\ref{app:nequals8case}) 
        where, for the sake of simplicity and ``readability'', we assume that the odd moments of $\xi$ are zero, that $t_0=0$, $t_1>t_0$ and/or  $\overline{\xi_0^2} = 0$. 

Consider the case $ n = 6 $. The possible compositions
of the six times, made of blocks with even elements, are as follows: 
\begin{align*}
&\vert t_1\,t_2\,t_3\,t_4\,t_5\,t_6\vert ,&(p=1)\\
&\vert t_1\,t_2\,t_3\,t_4\vert \,t_5\,t_6\vert \text{ and }  \vert t_1\,t_2\vert \,t_3\,t_4\,t_5\,t_6\vert  &(p=2)\\
&\vert t_1\,t_2\vert \,t_3\,t_4\vert \,t_5\,t_6\vert , &(p=3)
\end{align*}
Then, following steps  \ref{itema}-\ref{iteme}, we obtain
\begin{align}
\label{corr6_}
\begin{array}{ll}
\langle\xi(t_1)\xi(t_2)\xi(t_3)\xi(t_4)\xi(t_5)\xi(t_6)\rangle&  \\
& \\
=\overline{\xi^{6}}\left[\delta(t_2-t_1)\delta(t_3-t_2)\delta(t_4-t_3)\delta(t_5-t_4)\right]
R(t_1)\delta(t_6-t_1) &  (p=1) \\
& \\
+\overline{\xi^{4}}\,\overline{\xi^{2}}\bigg\{ \left[\delta(t_2-t_1)\delta(t_3-t_2)\delta(t_4-t_3)\right] R(t_1)\delta(t_4-t_1)R(t_5-t_4)\delta(t_6-t_5) \\
+ 
\left[\delta(t_4-t_3)\delta(t_5-t_4)\delta(t_6-t_5)\right]
R(t_1)\delta(t_2-t_1)R(t_3-t_2)\delta(t_6-t_3)\bigg\} & (p=2) \\
& \\
+(\overline{\xi^{2}})^3 R(t_1)\delta(t_2-t_1) R(t_3-t_2)\delta(t_4-t_3) R(t_5-t_4)\delta(t_6-t_5) & (p=3)
\end{array}
\end{align}
that, by employing \eqref{cor2t_delta_pre}, and considering $t_1>t_0$, can also be written as in \eqref{corrGen_delta_2}:
%

\begin{align*}
& \langle\xi(t_1)\xi(t_2)\xi(t_3)\xi(t_4)\xi(t_5)\xi(t_6)\rangle
\nonumber \\
&=\frac{\overline{\xi^{6}}}{\overline{\xi^{2}}}\langle \xi(t_1)\xi(t_6)\rangle
\delta(t_3-t_2)\delta(t_4-t_3)\delta(t_5-t_4)\delta(t_6-t_5)
\nonumber \\
&+\frac{\overline{\xi^{4}}}{\overline{\xi^{2}}}\bigg[\langle \xi(t_1)\xi(t_4)\rangle \delta(t_3-t_2)\delta(t_4-t_3)
\left(\langle \xi(t_5)\xi(t_6)\rangle_{t_4}-\overline{\xi^{2}}\delta(t_5-t_4)\delta(t_6-t_5)\right)
\nonumber \\
&+\langle \xi(t_1)\xi(t_2)\rangle \delta(t_4-t_3)\delta(t_5-t_4)\delta(t_6-t_5)
\left(\langle \xi(t_3)\xi(t_6)\rangle_{t_2}-\overline{\xi^{2}}\delta(t_3-t_2)\delta(t_6-t_3)\right)
\bigg]
\nonumber \\[3pt]
&+\langle \xi(t_1)\xi(t_2)\rangle
\left(
\langle \xi(t_3)\xi(t_4)\rangle_{t_2}-\overline{\xi^{2}}\delta(t_3-t_2)\delta(t_4-t_3)\right)
\nonumber \\
&\left(
\langle \xi(t_5)\xi(t_6)\rangle_{t_4}-\overline{\xi^{2}}\delta(t_5-t_4)\delta(t_6-t_5)\right)
\end{align*}
With a little algebra, from the previous expression we have
\begin{align}
\label{cor6t_delta_4}
& \langle\xi(t_1)\xi(t_2)\xi(t_3)\xi(t_4)\xi(t_5)\xi(t_6)\rangle  
\nonumber \\
&=\langle \xi(t_1)\xi(t_6)\rangle \left(\frac{\overline{\xi^{6}}}{\overline{\xi^{2}}}
+\left(\overline{\xi^{2}}\right)^2-2\,\overline{\xi^{4}}\right)
\delta(t_3-t_2)\delta(t_4-t_3)\delta(t_5-t_4)\delta(t_6-t_5)
\nonumber \\
&+\left(\frac{\overline{\xi^{4}}}{\overline{\xi^{2}}}-\overline{\xi^{2}}\right)
\bigg[\langle \xi(t_1)\xi(t_4)\rangle \langle \xi(t_5)\xi(t_6)\rangle_{t_4}
\delta(t_3-t_2)\delta(t_4-t_3)
\nonumber \\
&+\langle \xi(t_1)\xi(t_2)\rangle \langle \xi(t_3)\xi(t_6)\rangle_{t_2}
\delta(t_5-t_4)\delta(t_6-t_5)
\bigg]
\nonumber \\
&+
\langle \xi(t_1)\xi(t_2)\rangle \langle \xi(t_3)\xi(t_4)\rangle_{t_2}
\langle \xi(t_5)\xi(t_6)\rangle_{t_4}.
\end{align}

All the considerations discussed for the $n=4$ case also apply here. 
In particular, we observe that, as for $n=4$, if $\xi$ is a 
dichotomous random variable, the only term that remains is the one 
in the last line of Eq.~\eqref{cor6t_delta_4}, resulting in a 
generalized (non-stationary) "factorization property".
Appendix~\ref{app:nequals8case} shows how to derive the eight-time joint correlation function.

As a further check of the key result established in Proposition~\eqref{prop:procedure_fligh} 
and the corresponding Eq.~\eqref{corrGen_delta}, in 
Appendix~\ref{app:Montroll_Weiss} 
we apply the Laplace transform technique to this equation to recover the 
standard Montroll-Weiss and Scher result~\cite{Montroll_Weiss_1965,Scher_Motroll_1975} 
for the Fourier-Laplace transform of the PDF of the CTRW defined by $\dot{x}= \xi[t]$.

As previously noted, the expression in Eq.~\eqref{corrGen_delta}---and 
particularly its formal representation in Eq.~\eqref{corrGen_delta_pre}---closely 
resembles the general relationship between correlation functions and 
cumulants. Furthermore, the fact that the partitions in Eq.~\eqref{corrGen_delta} 
preserve the ordering of elements to form compositions highlights the 
specific relevance of $G$-cumulants in this context 
(see~\cite[Section~4.4.3, Eq.~(94)]{bbJSTAT4}). 
As discussed in~\citep{bblmCSF202}, this observation, combined with the results 
established in~\cite{bbJSTAT4,bCSF148,bCSF159}, provides a robust 
framework for addressing cases where the stochastic renewal process 
acts as a noise source driving a variable subject to an external 
field, as in Eq.~\eqref{SDE}. Indeed, this approach enables us to 
derive the general exact ME presented in Eq.~\eqref{ME_fin}, along 
with its simplified form in Eq.~\eqref{ME_universal_intro}, in the 
following sections.

\section{The Master Equation for the variable of interest $x$ driven by a drift and a multiplicative shot noise\label{sec:MEx}}
In this section we present the derivation of the exact ME for the variable $x$ governed by the
SDE~\eqref{SDE}, introduced in Eq.~\eqref{ME_fin}.

To obtain this result, we make extensive use of the definition of
$M$-cumulants, where $M$ denotes the mapping applied to the exponential 
function that relates the characteristic function to the cumulant generator. 
Specifically, we adopt the choice $M=G$, which is further specified and 
motivated below. While the definition and properties of $M$- and 
$G$-cumulants originate from the early works of 
Kubo~\cite{kuboGenCumJPSJ17,kuboGenCumJMP4}, we refer the reader to 
Refs.~\cite{bbJSTAT4,bCSF148,bCSF159} for a more systematic 
presentation and further results. To keep the paper self-contained 
and to provide a streamlined introduction tailored to our purposes, 
in Appendix~\ref{app:M-cumulants} we briefly summarize the basic concepts 
of $M$-cumulants that are relevant to the present analysis.

\subsection{$G$-cumulants for  the CTRW $x(t)$\label{sec:Gcumulants_CTRW}}
To introduce the procedure we will adopt later more simply, we begin 
with the well-known case of the standard CTRW, where 
$x(t)=\int_0^t \xi(u)\text{d}u+x_0$. For any fixed time $t$, $x(t)$ 
can be treated as a univariate random variable rather than a 
stochastic process. Its characteristic function (CF) is 
therefore given by
\begin{align}
    \label{CF_CTRW}
    \hat{P}_{CTRW}(k; t) &:= \langle \exp[\text{i}k\,x(t)]\rangle \hat{P}(k; 0) \nonumber \\
    &= \langle \exp\left[\text{i}k\int_{0}^t \xi(u) \, \text{d}u\right]\rangle \hat{P}(k; 0).
\end{align}
The above equation clearly shows that the CF of the \emph{univariate} CTRW at any fixed 
time $t$ is simply the CF of the \emph{stochastic} process $\xi[t]$ 
evaluated at a constant wavenumber, multiplied by the CF of the 
initial PDF of $x$ (for a more detailed discussion, see \cite{bCSF148}).

Thus, by expanding the exponential function in a power series, we can express 
the CF of $x(t)$ in terms of the correlation functions of the 
process $\xi[t]$, which we denote as $\hat{P}_{\xi[t]}(k, t)$ 
in \ref{sec:Gcumulants_xi} (see Eq.~\eqref{MG}):
\begin{align}
    \label{CF_CTRW_xi}
    \hat{P}_{CTRW}(k; t) &= \sum_{n=0}^\infty (\text{i} k)^n 
    \int^t_0{\text{d}u_n}\int^{u_{n}}_0{\text{d}u_{n-1}}\dots 
    \int^{u_{2}}_0\text{d}u_1\, 
    \langle \xi (u_1)\dots \xi (u_n)\rangle \hat{P}(k; 0),
\end{align}
from which it follows that
\begin{align}
    \langle x^n \rangle &= \left.\frac{1}{\text{i}^n}\frac{\partial^n}{\partial k^n}
    \hat{P}_{CTRW}(k; t)\right|_{k=0} \nonumber \\
    &= \int^t_0{\text{d}u_n}\int^{u_{n}}_0{\text{d}u_{n-1}}\dots 
    \int^{u_{2}}_0\text{d}u_1 \langle \xi (u_1)\dots \xi (u_n)\rangle 
    + \left.\frac{1}{\text{i}^n}\frac{\partial^n}{\partial k^n}\hat{P}(k; 0)\right|_{k=0}.
\end{align}

We denote with ${K }^{(G)}_{x}(k\pv t)$ the $G$-cumulant generator for the univariate random process $x(t)$~\cite{bbJSTAT4,bCSF148,bCSF159}.  As for the CF, 
        also ${K }^{(G)}_{x}(k\pv t)$ can be expressed in terms of $G$-cumulants of
        the stochastic process $\xi[t]$ (see also Appendix~\ref{app:M-cumulants}):
\begin{align}
\label{CGx}
&      
{K }^{(G)}_{x}(k\pv t)
:=\sum_{n=1}^\infty (\I k)^n 
\int^t_0{\text{d}u_n}\int^{u_{n}}_0{\text{d}u_{n-1}}\dots 
\int^{u_{2}}_0\text{d}u_1
\nonumber \\
&
\doubleangle{ \xi (u_1)\xi (u_2)\dots \xi (u_n)}^{(G)}.%
\end{align}
The relationship between (and the definition of) $G$-cumulants and correlation functions is given by the following 
identity:
\begin{equation}
\label{K_CTRW}
\hat{{P }}_{CTRW}(k\pv t)=  \exp_G\!\left[{K }^{(G)}_{x}(k\pv t)\right]
\end{equation}
where the symbol $G$ (or, more generally, $M$) applied to a
function or operator (here, the exponential function) indicates
that the $G$ map (or the generic map $M$) acts on that object.
The $G$ map corresponds to the \emph{total time ordering} (TTO) map
(see again Appendix~\ref{app:M-cumulants} for details).
        Expanding both sides of Eq.~\eqref{K_CTRW} in powers of $k$ and equating
coefficients term by term, we obtain the corresponding relations between
moments and  $G$-cumulants:
\begin{align}
	\label{mukappaCombiMulti}
	&\int^t_0{\text{d}u_n}\int^{u_{n}}_0{\text{d}u_{n-1}}\dots 
	\int^{u_{2}}_0\text{d}u_1
	\langle\xi(u_{1}),\xi(u_{2}),...,\xi(u_{n})\rangle		\nonumber \\
	=&\int^t_0{\text{d}u_n}\int^{u_{n}}_0{\text{d}u_{n-1}}\dots 
	\int^{u_{2}}_0\text{d}u_1\sum_{\pi(n)}  \prod_{B\in\pi(n)} \doubleangle{   \xi(u_{1}),\xi(u_{2}),...,\xi(u_{n})}^{(G)}
\end{align}
where $\pi(n)$ runs through the list of al set-composition (or ordered partitions in groups) of $n$ distinguishable objects (the $n$ times) and  $B$ runs through the list of all blocks of the composition $\pi(n)$.

We look for a ME of the type:
\begin{equation}
\label{MEGx}
\partial_t      \hat{P}_{CTRW}(k\pv t)=\int_0^t \text{d}u\,G(k\pv t,u)
\hat{P}_{CTRW}(k\pv u).
\end{equation} 
Integrating iteratively the ME~\eqref{MEGx} and comparing the result with 
Eq.~\eqref{K_CTRW} we get:
\begin{align}
\label{K_Gx}
&       {K }^{(G)}_{x}(k\pv t)
:=
\int^t_0{\text{d}u}\int^{u}_0{\text{d}u'}G(k\pv u,u').%
\end{align}  
By exploiting the definition ${K }^{(G)}_{x}(k\pv t)$ as the $G$-cumulant generator
as explicated in Eq.~\eqref{CGx}, we obtain the Green function of the ME~\eqref{MEGx} in terms of
        $G$-cumulants of $\xi[t]$:
\begin{align}
\label{GCum_x}
& G(k\pv u,u')
=\sum_{n=1}^\infty (\I k)^n \int_{u'}^{u} \text{d}u_{n-1} \int_{u'}^{u_{n-1}} \text{d}u_{n-2}...\int_{u'}^{u_{3}} \text{d}u_{2}
\nonumber\\
&\times 
\doubleangle{\xi(u') \xi(u_{2}) \dots \xi(u_{n-1})\xi(u)}^{(G)} 
\end{align}

\subsection{From $G$ cumulants to the ME of the CTRW\label{sec:Gcumulants_renewal}} 

We have already observed that Eq.~\eqref{corrGen_delta_pre} closely resembles the general relation between correlation 
functions and $G$-cumulants reported in Eq.~\eqref{mukappaCombiMulti}. In effect, by
using the key result Eq.~\eqref{corrGen_delta_pre} in the l.h.s. of Eq.~\eqref{mukappaCombiMulti} we get:
		\begin{align}
	\label{corr-cum}
	&\int^t_0{\text{d}u_n}\int^{u_{n}}_0{\text{d}u_{n-1}}\dots 
	\int^{u_{2}}_0\text{d}u_1
	\sum_{\pi(n)}
	\prod_{B\in\pi(n)}
	\overline{\xi^{\lvert B\rvert}}\,
	R(u_B - u_{B-1})\,
	\delta(\Delta u_{B})\nonumber \\
	=&\int^t_0{\text{d}u_n}\int^{u_{n}}_0{\text{d}u_{n-1}}\dots 
	\int^{u_{2}}_0\text{d}u_1\sum_{\pi(n)}  \prod_{B\in\pi(n)} \doubleangle{ \xi(u_1) \xi(u_{2}) \dots \xi(u_{n-1})\xi(u_N)}^{(G)}
\end{align}

We note that a direct identification of the integrands on the left-hand 
and right-hand sides is prevented by the temporal coupling between 
consecutive blocks in the former: in each ``block'' of the compositions, 
the function $R$ depends not only on the time associated with that block 
but also on the time of the preceding one. This does \emph{not} occur 
in the product of cumulants on the right-hand side. This issue is 
resolved when $R$ is constant, which occurs only when the WT distribution 
decays exponentially (the Poissonian case).

Nevertheless, if $R$ is not a constant, we can exploit the integration by part
technique applied to the ``multi-integral'' in the  l.h.s. of 
Eq.~\eqref{corr-cum}. Doing that, and rewriting in a more explicit form the sum over compositions, the
l.h.s. of    Eq.~\eqref{corr-cum} becomes

%
\begin{align}
\label{corrGen_delta_CF_}
&\sum_{p=1}^{ n}  k^n
\sum_{\{m_i\}:\sum_{i=1}^p m_i={ n}}  
\int^t_0{\text{d}u_n}\int^{u_{n}}_0{\text{d}u_{n-1}}\dots 
\int^{u_{2}}_0\text{d}u_1\,
\nonumber \\&
\times\overline{\xi^{m_1}}
\left[R^\prime(u_{2}-u_1)      +
R(0)\delta(u_{2}-u_{1})\right]  
\delta(u_{3}-u_{2})...  \delta(u_{m_1}-u_{m_1-1})
\nonumber \\&
\times\overline{\xi^{m_2}}
\left[R^\prime(u_{m_1+2}-u_{m_1+1})+
R(0)\delta(u_{m_1+2}-u_{m_1+1})\right]
\nonumber \\&
\times\delta(u_{m_1+3}-u_{m_1+2})
...\delta(u_{m_1+m_2}-u_{m_1+m_2-1})
\nonumber \\&
\times\overline{\xi^{m_3}}\,
\left[R^\prime(u_{m_1+m_2+2}-u_{m_1+m_2+1})+
R(0)\delta(u_{m_1+m_2+2}-u_{m_1+m_2+1})\right]
\nonumber \\&
\times\delta(u_{m_1+m_2+3}-u_{m_1+m_2+2})
...\delta(u_{m_1+m_2+m_3}-u_{m_1+m_2+m_3-1})\times...
\nonumber \\&
\times \overline{\xi^{m_p}}
\Bigg[R^\prime(u_{m_1+..+m_{p-1}+2}-u_{m_1+...+m_{p-1}+1})
+
R(0)\delta(u_{m_1+...+m_{p-1}+2}-u_{m_1+...+m_{p-1}+1})\Bigg]
\nonumber \\&
\times\delta(u_{m_1+...+m_{p-1}+3}-u_{m_1+...+m_{p-1}+2})
...\delta(u_{m_1+...+m_{p-1}+m_p}-u_{m_1+...+m_{p-1}+m_p-1}).
\end{align}
Comparing this expression with the right-hand side of
Eq.~\eqref{corr-cum}, one is naturally led to the following identification:
\begin{align}
\label{M-GC}
\doubleangle{\xi(t_{1}) \xi(t_{2}) \dots \xi(t_{n})}^{(G)} 
&= \overline{\xi^{n}}
\left[
R^\prime(t_{2}-t_{1})
+ R(0)\,\delta(t_{2}-t_{1})
\right]
\nonumber\\
&\quad \times
\delta(t_{3}-t_{2}) \dots \delta(t_{n}-t_{n-1}) .
\end{align}
Finally, by using Eq.~\eqref{GCum_x} one finds
\begin{align}
\label{Gx}
G(k\pv t,u)
= \left[ \hat p(k) - 1 \right]
\left[
R^\prime(t-u)
+ R(0)\,\delta(t-u)
\right] .
\end{align}
Here, $\hat p(k)$ denotes the CF (i.e., the Fourier
transform) of the PDF of the random variable $\xi$:
\begin{equation}
\label{pk}
\hat p(k)
:= \int e^{\I k \xi}\, p(\xi)\,\mathrm{d}\xi
= \overline{e^{\I k \xi}} .
\end{equation}

From Eq.~\eqref{Gx}, it should be noted that the Green function
$G(k\pv t, u)$ depends only on the time difference $t-u$, and that its
dependence on $k$ and time is separable; that is, it can be written as
the product of a function of $k$ and a function of time.

By inserting Eq.~\eqref{Gx} into the ME~\eqref{MEGx}, we obtain
\begin{equation}
\label{MEGx_}
\partial_t \hat{P}_{CTRW}(k\pv t)
=
\left[ \hat p(k)-1 \right]
\left[
\int_0^t \mathrm{d}u\,
R^\prime(t-u)\,
\hat{P}_{CTRW}(k\pv u)
+ R(0)\,\hat{P}(k\pv t)
\right].
\end{equation}
By taking the Laplace transform of this equation and performing some
straightforward algebra, one readily obtains
\begin{align}
\label{MW}
\hat{P}_{CTRW}(k\pv s)
&=
\frac{1}{s}\,
\frac{1}{1-\hat R(s)\left[\hat p(k)-1\right]}
\,\hat{P}(k\pv 0).
\end{align}
%
By expressing the Laplace transform of the rate function $R$ in terms 
of the Laplace transform of the waiting-time PDF, as given in 
Eq.~\eqref{Rtilde}, we recover the standard Montroll--Weiss and Scher result 
\cite{Montroll_Weiss_1965,Scher_Motroll_1975} for the Fourier--Laplace transform 
of the CTRW PDF. As noted in Appendix~\ref{app:Montroll_Weiss}, this same 
result is also obtained by directly employing the correlation 
functions provided in Eq.~\eqref{corrGen_delta}.\\

While the $G$-cumulant procedure has been used here to reproduce a 
well-known result, its primary advantage lies in its broad applicability. 
Indeed, the method extends almost seamlessly beyond the free CTRW with 
additive noise, and applies equally when the stochastic variable $x$ is 
subject to a drift field $-C(x)$ and/or to multiplicative noise, i.e., in the 
general setting described by Eq.~\eqref{SDE}. The key modification with 
respect to the standard CTRW case is the requirement to work with random 
Liouville operators rather than scalar random variables.

\subsection{$G$-cumulants for the random variable $x$ of the SDE \eqref{SDE}\label{sec:Gcumulants_x}}
We now extend the $G$-cumulant formalism of
Sections~\ref{sec:Gcumulants_CTRW}--\ref{sec:Gcumulants_renewal}
to the general case in which the noise is multiplicative and a drift
velocity field is present.
The result,  one of the main of the present paper, is the following:
\begin{proposition}\label{prop:exactME}
If  $\xi[t]$ is a spike stochastic renewal process and for arbitrary
WT PDFs and jump PDFs with finite moments, the
exact master equation (ME) for the PDF of the general
PDE in
Eq.~\eqref{SDE} is given in Eq.~\eqref{ME_fin}
\end{proposition}

To demonstrate Proposition~\ref{prop:exactME} we start from 
 the  general SDE~\eqref{SDE}. 

For any given realization of the noise $\xi(u)$, where $u \in [0,t]$, the time evolution of the probability density function (PDF) of the system~\eqref{SDE}---denoted by $P(x,\xi(t) \pv t)$---is governed by a continuity equation. This relationship is expressed by the following stochastic Liouville equation:        
        %
\begin{align}
\label{stochLiouv}
\partial_t P(x,\xi(t)\pv t)={\mathcal{L}}_a P(x,\xi(t)\pv t) +{\mathcal{L}}_I \xi(t)P(x,\xi(t)\pv t),
\end{align} 
where 
\begin{equation}
\label{La}
{\mathcal{L}}_a:=\partial_x C(x)
\end{equation}
and 
\begin{equation}
\label{LI}
{\mathcal{L}}_I:=-\partial_x I(x)
\end{equation}
%
denote the unperturbed and perturbation Liouvillians, respectively. Transforming to the interaction representation, we obtain:
\begin{equation}
\label{continuity_}
{\partial}_t \tilde P(x,\xi(t)\pv t)=
\tilde {\mathcal{L}}_I (t)\xi(t) P(x,\xi(t)\pv t)
\end{equation} 
where 
\begin{equation}
\label{Ptilde}
\tilde P(x,\xi(t)\pv t) :=e^{-{\mathcal{L}}_a t}
P(x,\xi(t)\pv t)
\end{equation}
%
and
%
\begin{equation}
\label{LITilde}
\tilde {\mathcal{L}}_I (t):=- e^{{-\mathcal{L}}_at} {\partial}_{x} I(x) e^{{\mathcal{L}}_at}.
\end{equation}
%

Assuming that the initial preparation of the ensemble, $P(x\pv 0)$, 
is independent of the possible realizations of $\xi$ (i.e., the 
initial PDF factorizes as $P(x\pv 0) p_0(\xi)$), the temporal 
integration of Eq.~\eqref{continuity_}, combined with an average 
over all realizations of $\xi(u)$ for $u \in [0, t]$, leads to the 
reduced PDF for the variable $x$ at any time $t \geq 0$:
\begin{equation}
\label{pExp_}
\tilde P(x\pv t):=\left \langle\tilde  P(x,\xi(t)\pv t)\right \rangle
=\left\langle 
\exp_O\!\left[ \int^t_0 \text{d}u\,\tilde {\mathcal{L}}_I (u) \xi(u) \right]\right \rangle P(x\pv 0)
\end{equation}
where, as in Eq.~\eqref{K_CTRW}, the symbol $O$ indicates that
the $O$ map must be combined with the exponential function. The
$O$ map corresponds to the \emph{partial time-ordering} (PTO) map.
Consequently, $\exp_O[\dots]$ denotes the standard time-ordered
(or $t$-ordered) exponential function, often represented as
$\overleftarrow{\exp}[\dots]$
\footnote{See \ref{sec:Gcumulants_xi} for details on the difference
between the ``$O$'' (PTO) map and the ``$G$'' (TTO) map.}.

By comparing Eq.~\eqref{pExp_} with Eq.~\eqref{CF_CTRW}, the time-evolution operator $\langle \exp_O \left[ \int_{0}^{t} \text{d}u \tilde{\mathcal{L}}_I(u) \xi(u) \right] \rangle$ in Eq.~\eqref{pExp_} can be interpreted as a generalized characteristic function (CF). While it no longer represents a simple Fourier transform of a scalar PDF, it serves as the operator-valued generating function for the moments of the random operator
$   \mathscr{S}(t) := \int_0^t \text{d}u \tilde{\mathcal{L}}_I(u) \xi(u)$:
\begin{align}
\label{pExp2_}
&      \tilde P(x\pv t)
= \left\langle \exp_O\!\left[\mathscr{S}(t)\right]\right \rangle P(x\pv 0)
=\left\{\sum_{n=0}^\infty \frac{1}{n!}
\left\langle \left\{\mathscr{S}(t)\right\}_O^n\right\rangle \right\}P(x \pv t)\nonumber\\
&=\left\{\sum_{n=0}^\infty 
\int^t_0{\text{d}u_n}\int^{u_{n}}_0{\text{d}u_{n-1}}\dots 
\int^{u_{2}}_0\text{d}u_1\,
\tilde {\mathcal{L}}_I(u_n)\tilde {\mathcal{L}}_I(u_{n-1})\dots \times \right.
\nonumber \\
&\left.       \dots \times \tilde {\mathcal{L}}_I(u_1)
\langle \xi (u_1)\xi (u_2)\dots \xi (u_n)\rangle \right\} P(x\pv 0)\nonumber \\
&= \left\{\sum_{n=0}^\infty 
\int^t_0{\text{d}u_n}\int^{u_{n}}_0{\text{d}u_{n-1}}\dots 
\int^{u_{2}}_0\text{d}u_1\,
\right.
\nonumber \\
&       \sum_{\pi(n)}
\prod_{B\in\pi(n)}\left[\tilde {\mathcal{L}}_I(u_B)\right]^{\lvert B\rvert}
\overline{\xi^{\lvert B\rvert}}\,
R(u_B - u_{B-1})\,
\delta(\Delta u_{B})\Bigg\} P(x\pv 0).
\end{align}

where, in the last line, we have exploited the central result in 
Eq.~\eqref{corrGen_delta_pre} (or the equivalent one in Eq.~\eqref{corrGen_delta}).

Therefore, as in Sections~\ref{sec:Gcumulants_CTRW}, we can formally introduce the corresponding
$G$-cumulants via
\begin{align}
\label{P_cum_temp}
\tilde P(x\pv t) =& \exp_G
\Bigg[\sum_{n=1}^\infty 
\int_0^t \mathrm{d}u_n \int_0^{u_n} \mathrm{d}u_{n-1}\dots \nonumber \\
&\dots 
\int_0^{u_2} \mathrm{d}u_1 \left[\tilde{\mathcal{L}}_I(u_n)\right]^n 
\doubleangle{\xi(u_1)\xi(u_2)\dots\xi(u_n)}^{(G)}
\Bigg] P(x\pv 0),
\end{align}
that yields to the ME akin to that of Eq.~\eqref{MEGx}:
        \begin{equation}
\label{MEGx_GEN}
\partial_t      \tilde P(x\pv t)=\int_0^t \text{d}u\,G\left(-\I\tilde{\mathcal{L}}_I(u)\pv t,u\right)
\tilde P(x\pv t).
\end{equation} 
with 
        \begin{align}
\label{GCum_x_L}
& G(-\I \tilde{\mathcal{L}}_I(t)\pv u,u')
=\sum_{n=1}^\infty \left[\tilde{\mathcal{L}}_I(t)\right]^n \int_{u'}^{u} \text{d}u_{n-1} \int_{u'}^{u_{n-1}} \text{d}u_{n-2}...\int_{u'}^{u_{3}} \text{d}u_{2}
\nonumber\\
&\times 
\doubleangle{\xi(u') \xi(u_{2}) \dots \xi(u_{n-1})\xi(u)}^{(G)} 
\end{align}

Comparing Eq.~\eqref{pExp2_} with Eq.~\eqref{P_cum_temp} we arrive at the counterpart of Eq.~\eqref{corr-cum}:
		\begin{align}
	\label{corr-cum_Gen}
	&\int_0^t \mathrm{d}u_n \int_0^{u_n} \mathrm{d}u_{n-1} \dots \int_0^{u_2} \mathrm{d}u_1\nonumber \\
	&\times			\sum_{\pi(n)} \prod_{B \in \pi(n)} 
	\left[{\mathcal{L}}_I(u_B)\right]^{|B|} \,\overline{\xi^{|B|}}\, R(u_B - u_{B-1})\, \delta(\Delta u_B) \nonumber \\
	&= \int_0^t \mathrm{d}u_n \int_0^{u_n} \mathrm{d}u_{n-1} \dots \int_0^{u_2} \mathrm{d}u_1 \nonumber \\
	&\times	
	\sum_{\pi(n)} \prod_{B \in \pi(n)} \left[{\mathcal{L}}_I(u_B)\right]^{|B|} \doubleangle{\xi(u_1)\xi(u_2)\dots \xi(u_{n-1})\xi(u_n))}^{(G)}.
\end{align}

Following the approach in Section~\ref{sec:Gcumulants_renewal}, integrating the left-hand side of Eq.~\eqref{corr-cum_Gen} by parts leads to the same identification of $G$-cumulants as provided in Eq.~\eqref{M-GC}. By substituting Eq.~\eqref{M-GC} into Eq.~\eqref{GCum_x_L} and subsequently into Eq.~\eqref{MEGx_GEN}, we obtain 
a generalized ME that is formally identical to that for the free CTRW
case (Eq.~\eqref{MEGx_}), but applied directly to the interaction
representation of the PDF of $x$, rather than its Fourier transform,
and where the wave number $k$ is replaced with $-\I \tilde{\cal L}_I(t)$%
\footnote{The $G$-cumulant technique provides a general formal framework
for deriving the ME with memory kernels in Eq.~\eqref{MEGx_GEN}.
Alternatively, one may recursively integrate the same equation and
directly verify that this procedure yields the right-hand side of
Eq.~\eqref{corr-cum_Gen}, with the cumulants of $\xi[t]$ given by
Eq.~\eqref{M-GC}.}:
\begin{equation}
\label{MEGx_I}
\partial_t \tilde P(x\pv t) =
\left[\hat p\left(-\I \tilde{\mathcal{L}}_I(t)\right)-1\right]
\left[
\int_0^t \mathrm{d}u\, R^\prime(t-u) \tilde P(x\pv u) + R(0)\,\tilde P(x\pv t)
\right].
\end{equation}

Removing the interaction representation, Eq.~\eqref{MEGx_I} leads to the central
result, Eq.~\eqref{ME_fin}, which can also be written as
\begin{align}
\label{MEGx_I_LaPOI}
&\partial_t P(x\pv t) = {\cal L}_a P(x\pv t) \nonumber \\
&+ \int \mathrm{d}\xi\, p(\xi) \sum_{n=1}^\infty (-1)^n \frac{\xi^n}{n!} [\partial_x I(x)]^n
\Bigg[
\int_0^t \mathrm{d}u\, R^\prime(u) e^{{\cal L}_a u} P(x\pv t-u)
+ R(0) P(x\pv t)
\Bigg].
\end{align}

        In the absence of drift, the Laplace transform yields:
\begin{align}
\label{MW_I}
P(x\pv s) &= \frac{1}{s} \frac{1}{1 - \hat R(s)\left[\hat p(-\I {\cal L}_I) - 1\right]} P(x\pv 0),
\end{align}
where
$\hat p(-\I {\cal L}_I)-1 = \sum_{n=1}^\infty (-1)^n \frac{\overline{\xi^n}}{n!} 
[\partial_x I(x)]^n$.
Eq.~\eqref{MW_I} generalizes the Montroll-Weiss and Scher result
\cite{Montroll_Weiss_1965,Scher_Motroll_1975} to the case of state-dependent
(multiplicative) noise.

\section{Notable simplifying cases: the Universal Local ME
         Theorem\label{sec:further}}

We now arrive to the central result of this paper, anticipated in
Proposition~\ref{prop:ULME_informal}, as a formal theorem, and 
in Eq.~\eqref{ME_universal_intro} analytically.
In fact, the two propositions that follow
(Propositions~\ref{muG2} and~\ref{prop:approxME})
together constitute its proof in the two complementary regimes.

\begin{teorema_}[Universal Local ME]
\label{prop:ULME}
Let $x(t)$ satisfy the SDE~\eqref{SDE} with $\xi[t]$ a spike stochastic
renewal process whose jump PDF $p(\xi)$ has finite moments and whose
WT PDF $\psi(t)$ generates the renewal rate $R(t)$.
Under the conditions of either finite mean
waiting time $\tau<\infty$ or $\psi(t)\sim (t/T)^{-\mu}$
with $1<\mu<2$ and drift timescale $\gg T$,
the ME for $P(x\pv t)$ is well approximated by the
simplified local in time version given in 
Eq.~\eqref{ME_universal_intro}.
\end{teorema_}
The following three properties hold:

\begin{enumerate}[label=(\roman*)]

\item \textbf{Universality with respect to the jump distribution.}
The operator $\hat{p}(\I\partial_x I(x))$ depends on $p(\xi)$ only
through its Fourier transform evaluated at $\I\partial_x I(x)$.
For specific jump distributions it reduces to known closed forms,
for example
(see Appendix~\ref{app:ME_resuls}):
\begin{align}
    \hat{p}(\I\partial_x I)\,P
    &= \cosh\left[a\,\partial_x I(x)\right]\,P
    &&\text{(symmetric dichotomous, } \xi=\pm a\text{)},
    \label{eq:op_dicho_thm}\\
    \hat{p}(\I\partial_x I)\,P
    &= \exp\!\left[\tfrac{a^2}{2}
       \left(\partial_x I(x)\right)^2\right]P
    &&\text{(Gaussian, variance }a^2\text{)},
    \label{eq:op_gauss_thm}\\
    \hat{p}(\I\partial_x I)\,P
    &= \frac{\sinh\!\left[\sqrt{3}\,a\,\partial_x I(x)\right]}
            {\sqrt{3}\,a\,\partial_x I(x)}\,P
    &&\text{(uniform on }[-\sqrt{3}a,\sqrt{3}a]\text{)}.
    \label{eq:op_flat_thm}
\end{align}
These correspond to the explicit
MEs~\eqref{MEGx_dicho},~\eqref{MEGx_gauss}, and~\eqref{MEGx_flat},
respectively.

\item \textbf{Exact in the Poissonian limit.}
When $R(t)\equiv\tau^{-1}$ (exponential waiting-time distribution),
Eq.~\eqref{ME_universal_intro} is not an approximation but coincides
exactly with the Poissonian ME~\eqref{ME_fin_Pois}.

\item \textbf{Non-Markovian memory encoded in $R(t)$ alone.}
The entire history-dependence of the renewal process enters
Eq.~\eqref{ME_universal_intro} through the single scalar function $R(t)$.
When $R(t)\to\tau^{-1}$ (Blackwell theorem,
\citep{Blackwell1948,FellerVol2}),
Eq.~\eqref{ME_universal_intro} reduces to~\eqref{ME_fin_Pois}.
When $R(t)\sim T^{-1}(T/t)^{2-\mu}$ with $1<\mu<2$, the stochastic
forcing is progressively quenched and the deterministic drift
$\partial_x C(x)P$ eventually dominates.

\end{enumerate}

\noindent
The proof proceeds along two routes corresponding to the two
complementary regimes.

\medskip
\noindent\textit{Proof in the finite-mean-waiting-time case
(Proposition~\ref{muG2}).}
For $\tau$ finite ($\mu>2$ in the case of power-law WT PDF),
the Blackwell theorem \citep{Blackwell1948,FellerVol2}
guarantees $R(t)\to\tau^{-1}$ for $t\gg T$.
By Proposition~\ref{muG2_}, the $n$-time correlation functions of
$\xi[t]$ then converge to the Poissonian ones, and the exact
ME~\eqref{ME_fin} reduces to the local form~\eqref{ME_universal_intro}.
The detailed argument is given in Section~\ref{sec:PLWTmu_gt_2}.

\medskip
\noindent\textit{Proof in the heavy-tailed case
(Proposition~\ref{prop:approxME}).}
For $1<\mu<2$, Proposition~\ref{prop:universal} establishes that the
dominant term in the correlation-function
sum~\eqref{corrGen_delta_pre} is the single-block ($p=1$)
composition.
This implies that the $G$-cumulants of $\xi[t]$ reduce to the
approximate form (Eq.~\eqref{cum-corr_2}):
\begin{equation}
    \doubleangle{
      \xi(u_1)\xi(u_2)\cdots\xi(u_n)
    }^{(G)}
    \;\sim\;
    \overline{\xi^n}\,
    R(u_1-t_0)
    \prod_{j=2}^{n}\delta(u_j - u_{j-1}),
\end{equation}
which, when inserted into the Green-function
expansion~\eqref{GCum_x_L} and then into~\eqref{MEGx_GEN}, yields
Eq.~\eqref{ME_universal_intro} directly (see
Section~\ref{sec:universalCTRW}).
\hfill$\square$

\medskip
\noindent
Extensive numerical evidence for
Theorem~\ref{prop:ULME} is presented in
Figs.~\ref{fig:Def_pdedico_t5-30-100_mu1.5_tz2.00_ga1.2_Cubic_appr}--\ref{fig:Def_comp_t0.5-2-6_pde_eq80_mu2.50_tz1.00_ga5.00_be0.50}.
A striking feature of these results is that
Eq.~\eqref{ME_universal_intro} provides excellent agreement with direct
SDE simulations even at \emph{short} times and even when the drift
timescale is not large compared to $T$, i.e.\ well outside the
parameter regime covered by the formal proof.
This suggests that the Universal Local ME has a wider domain of validity
than is currently established analytically, a point we return to in
the conclusions.

The remainder of this section analyses the two cases of the theorem in
detail.

\subsection{Simplifications arising from specific classes of WT PDFs}

Since our general results depend explicitly on the rate function $R(t)$, defined in Eq.~\eqref{R}, rather than the waiting-time (WT) PDF $\psi(t)$ itself, we first recall how the asymptotic behavior of $\psi(t)$ governs that of $R(t)$.

In the Poissonian renewal case, the rate function is constant, $R(t)=1/\tau$. 
Consequently, the ME for the PDF of the variable $x$ in the SDE~\eqref{SDE} 
reduces to the local-in-time PDE given in Eq.~\eqref{ME_fin_Pois}. This 
equation has been derived previously and has provided valuable analytical 
insights in several concrete cases \citep{sprmPRE83,bpPRE98}. For instance, 
in Appendix~\ref{app:ME_resuls}, we determine the asymptotic behavior of the PDF 
satisfying Eq.~\eqref{ME_fin_Pois} for standard settings, such as linear drift or 
linear multiplicative intensity $I(x)$.

Regarding stationarity, Eq.~\eqref{corrGen_delta} shows that, for $t_1>t_0$, the stationarity of the multi-time correlation functions of $\xi(t)$ is determined solely by the first rate function $R$ appearing in each term of the sum—specifically, the one evaluated at $t_1-t_0$. Hence, the Poissonian spike renewal process is intrinsically stationary. More generally, for different WT PDFs with a finite mean $\tau$, Blackwell's renewal theorem~\citep{Blackwell1948,FellerVol2} implies that the rate function becomes time-independent at large times, leading to an asymptotically stationary regime. Typical examples include the Weibull (stretched-exponential) and Gamma WT PDFs.

In the case of Pareto or Manneville-type WT PDFs, where $\psi(t) \sim T^{-1}(T/t)^{\mu}$, the rate function exhibits the following scaling behaviors \citep{bJSTAT2020,mssbgCSF188}:

\begin{enumerate}[label=\Roman*]
\item\label{I_I} For $1 < \mu < 2$: $R(t) \sim T^{-1} (T/t)^{2 - \mu}$,
\item\label{II_I} For $\mu = 2$: $R(t) \sim T^{-1} / \log(t/T)$,
\item\label{III_I} For $\mu > 2$: $R(t) \sim \tau^{-1}\left[1 + (T/t)^{\mu - 2}\right]$.
\end{enumerate}
These facts lead to the following result.

\begin{proposition}
\label{prop:stationary}
In the Poissonian case, the spike stochastic renewal process $\xi[t]$
is stationary. For non-exponential WT PDFs, the process is
asymptotically stationary only if the average waiting time $\tau$ is
finite. On the other hand, for Pareto/Manneville WT PDFs of the form
$\psi(t) \sim T^{-1}(T/t)^{\mu}$ with $1<\mu\le 2$ (cases~\ref{I_I}
and~\ref{II_I}), the process never reaches stationarity.
\end{proposition}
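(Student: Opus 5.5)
The plan is to read stationarity off the exact $n$-time correlation functions of Proposition~\ref{prop:procedure_fligh}, Eq.~\eqref{corrGen_delta}. In each composition only the first block carries a dependence on the absolute time origin $t_0$, through $\overline{\xi_0^{m_1}}\delta(t_{m_1}-t_0)+\overline{\xi^{m_1}}R(t_{m_1}-t_0)$. All later blocks carry factors $R(t_{m_1+\dots+m_i}-t_{m_1+\dots+m_{i-1}})$ and Dirac deltas, which depend only on time differences among the $t_j$. Hence, for $t_1>t_0$ (where the $\delta(t_{m_1}-t_0)$ term drops out), the process $\xi[t]$ is stationary, in the sense that every finite-dimensional correlation is invariant under a common shift $t_j\to t_j+h$ with $t_0$ fixed, if and only if $R(t_1-t_0)$ is independent of $t_1$. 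Asymptotic stationarity holds if and only if $R(t)$ tends to a nonzero constant as $t\to\infty$.

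For the sufficiency direction, I use $n=2$, Eq.~\eqref{cor2t_delta_}, which already shows that $R$ must be constant or convergent. Since the prefactor $\overline{\xi^{m_1}}$ is fixed, the same condition is then sufficient for all $n$.

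The proof then splits into three cases.

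\emph{Poissonian case.} For exponential $\psi$, Eq.~\eqref{Rtilde} gives $\hat R(s)=1/(\tau s)$, so $R\equiv1/\tau$. Every term of Eq.~\eqref{corrGen_delta} is then shift invariant, and the process is exactly stationary.

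\emph{Finite mean, non-exponential $\psi$.} Blackwell's renewal theorem gives $R(t)\to1/\tau$, for instance in the density form under a mild regularity assumption on $\psi$, or in the Pareto case via item~\ref{III_I}. So the process is asymptotically stationary. It is not exactly stationary: if $R$ were constant, then $\hat R(s)=c/s$, and inverting Eq.~\eqref{Rtilde} forces $\hat\psi(s)=c/(s+c)$, i.e. an exponential $\psi$. This gives the ``only if'' for exact stationarity. Conversely, if $\tau=\infty$, the renewal theorem with infinite mean gives $R(t)\to0$. The limiting correlations would then be identically zero, so no nontrivial stationary regime is reached. This is the sense in which finiteness of $\tau$ is necessary.

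\emph{Pareto/Manneville $\psi$ with $1<\mu\le2$.} I invoke the asymptotics in items~\ref{I_I} and~\ref{II_I}: $R(t)\sim T^{-1}(T/t)^{2-\mu}$ or $T^{-1}/\log(t/T)$. Both decay monotonically to zero without settling to a positive constant. The two-time correlation $\overline{\xi^2}R(t_1-t_0)\delta(t_2-t_1)$ therefore keeps an explicit dependence on the age $t_1-t_0$ at every finite age, so stationarity is never reached. These asymptotics can be recovered by a Tauberian argument from $\hat\psi(s)\simeq1-c\,s^{\mu-1}$ (respectively $1-c\,s\log(1/s)$), which gives $\hat R(s)\sim1/(c\,s^{\mu-1})$.

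The main obstacle I expect is making ``asymptotically stationary only if $\tau<\infty$'' precise. The vanishing limit $R\to0$ in the infinite-mean case must be excluded as a degenerate ``stationary'' state. I will handle this by defining asymptotic stationarity as convergence of the correlations to a nontrivial shift-invariant limit, so that the zero limit is ruled out by definition.
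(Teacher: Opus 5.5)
Your proposal is correct and follows essentially the same route as the paper. Stationarity is read off Eq.~\eqref{corrGen_delta}, where for $t_1>t_0$ only the first-block factor $R(t_1-t_0)$ depends on the age, and the three cases are settled by $R\equiv1/\tau$ (Poissonian), Blackwell's theorem (finite $\tau$), and the decaying asymptotics of $R$ (for $1<\mu\le 2$). Your extra steps sharpen points the paper leaves informal: the inversion $\hat R=c/s\Rightarrow\hat\psi=c/(s+c)$ shows that exact stationarity forces an exponential WT, and excluding the trivial limit $R\to0$ gives content to the ``only if $\tau<\infty$'' clause. One small correction: your $n=2$ argument establishes that a constant (or convergent) $R$ is necessary, not sufficient.
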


\subsection{WT PDFs with finite mean\label{sec:PLWTmu_gt_2}}

From both the general expression for the multi-time correlation functions,
Eq.~\eqref{corrGen_delta}, and the Blackwell renewal theorem~\citep{Blackwell1948,FellerVol2}, 
we obtain:

\begin{proposition}
\label{muG2_}
If the mean waiting time $\tau$ of the spike stochastic renewal process $\xi[t]$ is finite, then for time lags much larger than the characteristic time scale $T$, the multi-time correlation functions of $\xi[t]$ converge to those of a Poissonian spike renewal process with the same mean waiting time.
\end{proposition}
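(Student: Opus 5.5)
The plan is to read the claim directly off the explicit structure of Eq.~\eqref{corrGen_delta} (equivalently Eq.~\eqref{corrGen_delta_pre}), taking $\overline{\xi_0^n}=0$ as in the CTRW setting. In every composition the dependence on the renewal process enters only through the rate functions $R$. The first block carries $R(t_{m_1}-t_0)$, and each later block $i\ge 2$ carries $R(t_{m_1+\dots+m_i}-t_{m_1+\dots+m_{i-1}})$, i.e.\ $R$ evaluated at the gap between consecutive blocks. The jump moments $\overline{\xi^{m_i}}$ and the Dirac deltas $\bm{\delta}_i$ are independent of $\psi$. Therefore it suffices to show that every factor $R(\cdot)$ appearing in a given composition can be replaced by $1/\tau$ when the relevant time lags are much larger than $T$. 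For the Poissonian spike process with mean waiting time $\tau$ one has $R\equiv 1/\tau$ exactly, so the same formula~\eqref{corrGen_delta}, with every $R$ replaced by $1/\tau$, gives the Poissonian correlation functions. The comparison is then term by term over the finite set of $2^{n-1}$ compositions.

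The key input is the Blackwell renewal theorem~\citep{Blackwell1948,FellerVol2}. When $\tau<\infty$ (and $\psi$ is non-lattice, which holds for absolutely continuous WT PDFs), the renewal density satisfies $R(t)\to 1/\tau$ as $t\to\infty$. The precise statement needs some regularity of $\psi$, such as direct Riemann integrability of $\psi$, or the key renewal theorem applied to $\tilde R=\delta+R$. For power-law WTs with $\mu>2$ the explicit asymptotics~\ref{III_I} give $R(t)-1/\tau=O((T/t)^{\mu-2})$, which also provides a rate. I will state the result as follows. Given $\varepsilon>0$, there is $t^*$ (a few multiples of $T$) such that $|R(s)-1/\tau|<\varepsilon$ for $s>t^*$. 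Then for a composition with $p$ blocks, whose inter-block gaps and initial lag $t_1-t_0$ all exceed $t^*$, the product $\prod_i R(\cdot)$ differs from $\tau^{-p}$ by $O(\varepsilon\,\tau^{1-p})$. This follows from the elementary telescoping bound on $\prod a_i-\prod b_i$, using that $R$ is bounded on $[t^*,\infty)$. Summing over the finitely many compositions, with the moments $\overline{\xi^{m}}$ finite, gives the convergence of the $n$-time correlation functions as distributions in $(t_1,\dots,t_n)$.

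The main obstacle is interpreting ``time lags much larger than $T$''. Within a block the times coincide, which the deltas enforce, so no $R$ evaluated at a short lag appears there apart from the $R(0)$-type structure, and that structure never enters~\eqref{corrGen_delta} directly. Between blocks, however, the gap can be arbitrarily small, and there $R(\text{gap})$ need not be close to $1/\tau$. I would handle this in two ways. First, I would state the claim pointwise for configurations whose distinct times are separated by more than $t^*$ and whose first time satisfies $t_1-t_0\gg T$. Second, and more usefully for the ME, I would argue at the level of integrated quantities as in Eq.~\eqref{corr-cum}. The set of configurations with some inter-block gap below $t^*$ has measure $O(t^*/t)$ relative to the full time-ordered simplex. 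Because $R$ is locally integrable (its integral is the renewal function $\langle N(t^*)\rangle<\infty$), the contribution of that set is controlled and vanishes in relative terms on long time windows. The aging lag $t_1-t_0$ enters only the first factor, and Proposition~\ref{prop:stationary} already isolates it. So the statement reduces to Blackwell's theorem applied factor by factor, together with this short-gap estimate.
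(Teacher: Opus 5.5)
Your proposal is correct and follows the paper's own argument. The paper likewise reads the claim off Eq.~\eqref{corrGen_delta}, where the waiting-time statistics enter only through the rate factors $R$, and invokes the Blackwell renewal theorem to replace each of them by $1/\tau$. Your $\varepsilon$--$t^*$ formulation, the regularity caveat needed for the density statement $R(t)\to 1/\tau$, and the explicit treatment of short inter-block gaps make precise what the paper's one-line justification, and its phrase ``time lags much larger than $T$'', leave implicit.
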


This implies that when $\tau$ is finite (as in case~\ref{III_I}) and the characteristic time scale of the drift-driven dynamics in the SDE~\eqref{SDE} is significantly larger than $T$, the process $\xi[t]$ is effectively perceived by the system as being approximately Poissonian with rate $1/\tau$. Consequently, the ME for the PDF of $x$ is well-approximated by the local-in-time PDE given in Eq.~\eqref{ME_fin_Pois}. Indeed, starting from the exact ME in Eq.~\eqref{ME_fin}—which is valid for arbitrary WT PDFs—we obtain the following stronger result:

\begin{proposition}
\label{muG2}
Under the same conditions as Proposition~\ref{muG2_}, and for $t \gg T$, the ME for the PDF of $x$ asymptotically converges to the Poissonian form presented in Eq.~\eqref{ME_fin_Pois}.
\end{proposition}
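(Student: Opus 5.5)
Starting from the exact ME~\eqref{ME_fin} (equivalently Eq.~\eqref{MEGx_I_LaPOI}), the plan is to show that, for $t\gg T$, the memory term collapses to $\tau^{-1}P(x\pv t)$. We first rewrite the bracket in Eq.~\eqref{ME_fin} by integrating by parts in $u$. Since $\frac{\mathrm d}{\mathrm du}\bigl[R(t-u)\bigr]=-R'(t-u)$, and since $e^{\partial_x C u}P(x\pv u)$ is $e^{\mathcal{L}_a t}\tilde P(x\pv u)$ up to the interaction-representation bookkeeping, it is cleaner to work in the interaction representation, Eq.~\eqref{MEGx_I}. The memory term there is
\[
\int_0^t \mathrm{d}u\,R'(t-u)\tilde P(x\pv u)+R(0)\tilde P(x\pv t)
= R(t)\tilde P(x\pv 0)+\int_0^t \mathrm{d}u\,R(t-u)\,\partial_u\tilde P(x\pv u).
\]
This identity is exact. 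It displays the kernel as a convolution of $R$ with the ``velocity'' $\partial_u\tilde P$.

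Next we invoke the Blackwell renewal theorem. Since $\tau<\infty$, $R(s)\to\tau^{-1}$ for $s\gg T$, and $R-\tau^{-1}$ is concentrated on lags $s\lesssim T$ (for power-law WT its tail is $\sim\tau^{-1}(T/s)^{\mu-2}$). We split $R(s)=\tau^{-1}+\rho(s)$. The constant part gives $\tau^{-1}\bigl[\tilde P(x\pv 0)+\tilde P(x\pv t)-\tilde P(x\pv 0)\bigr]=\tau^{-1}\tilde P(x\pv t)$. Inserted into Eq.~\eqref{MEGx_I} and transformed back out of the interaction representation, this reproduces exactly the Poissonian ME~\eqref{ME_fin_Pois}. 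This is consistent with Proposition~\ref{muG2_}: the correlation functions~\eqref{corrGen_delta} become Poissonian once every $R$ is replaced by its limit.

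It remains to show that the remainder $\rho(t)\tilde P(x\pv 0)+\int_0^t\mathrm{d}u\,\rho(t-u)\partial_u\tilde P(x\pv u)$ is negligible for $t\gg T$. The first term vanishes by Blackwell. For the convolution we split the integration range at $t-u=\Lambda$, with $T\ll\Lambda\ll t$.

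On the far part, $t-u>\Lambda$, the factor $\rho$ is small. We bound that contribution by $\sup_{s>\Lambda}|\rho(s)|\int_0^t\|\partial_u\tilde P\|\,\mathrm du$.

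On the near part, $t-u<\Lambda$, we use that $\tilde P$ evolves on the drift/noise timescale, which by hypothesis is much longer than $T$. Hence $\partial_u\tilde P$ is essentially constant over a window of width $\Lambda$. That contribution is then $\approx\partial_t\tilde P(x\pv t)\int_0^\Lambda\rho(s)\,\mathrm ds$. This is a correction of relative order $T/t_{\rm drift}$ to $\partial_t\tilde P$, which can be absorbed by a slight renormalization of the time derivative and vanishes in the timescale-separated limit.

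The main obstacle is the far-part estimate. For power-law WT with $2<\mu<3$, $\rho$ is not integrable, and $\int_0^t\|\partial_u\tilde P\|$ need not stay bounded in operator norm, especially because $\hat p(\I\partial_x I)$ is unbounded. We would handle this by working in Laplace space. Eq.~\eqref{MW_I} and its interaction-picture analogue show that the correction enters as $\hat R(s)-(\tau s)^{-1}$. For $s\ll T^{-1}$ this is subleading, $O(s^{\mu-3})$ relative to $s^{-1}$ when $\mu<3$ and $O(1)$ otherwise. A Tauberian argument then shows that the correction to $P(x\pv t)$ decays at large $t$, while the Poissonian term carries the leading behavior. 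This would establish the convergence, in the weak/asymptotic sense intended, of Eq.~\eqref{ME_fin} to Eq.~\eqref{ME_fin_Pois}.
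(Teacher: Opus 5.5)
\textbf{Verdict: genuine gap in the remainder estimate.} Your integration by parts is exact, and extracting the constant part of $R=\tau^{-1}+\rho$ does reproduce Eq.~\eqref{ME_fin_Pois}. This is a more explicit route than the paper's. The paper transfers Blackwell's limit $R\to\tau^{-1}$ to all inter-block lags via Proposition~\ref{muG2_} and reads off the Poissonian ME without examining the short-lag region.

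\textbf{Near part.} That short-lag region is where your argument breaks. In the interaction picture $\tilde P$ has no drift evolution. Exactly, $\partial_t\tilde P(x\pv t)=A(t)\,[R(t)\tilde P(x\pv 0)+\int_0^t R(t-u)\,\partial_u\tilde P(x\pv u)\,\mathrm{d}u]$ with $A(t)=\hat p(-\I\tilde{\mathcal L}_I(t))-1$. So $\tilde P$ changes at the jump-induced rate, $\partial_t\tilde P\approx\tau^{-1}A(t)\tilde P$, not on the drift time scale. Your near part is therefore $\approx\tau^{-1}c_\Lambda A(t)^2\tilde P(x\pv t)$, with $c_\Lambda=\int_0^\Lambda\rho(s)\,\mathrm{d}s$.
\begin{itemize}
\item Moved to the left, this turns the generator into $\tau^{-1}(1-c_\Lambda A)^{-1}A$. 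That is an operator-valued change, not a slight renormalization of $\partial_t$.
\item It is small only if a single jump barely changes the density, $\|c_\Lambda A\tilde P\|\ll\|\tilde P\|$. This weak-jump condition is not implied by ``drift time scale $\gg T$''. Near stationarity it can plausibly be obtained from the drift--noise balance, but you have to argue it.
\item For $2<\mu<3$, $c_\Lambda\propto\Lambda^{3-\mu}$ is not even bounded.
\end{itemize}
The drift-free case shows the effect is real. Expanding the exact kernel $s\hat R(s)[\hat p(k)-1]$ of Eq.~\eqref{MEGx_} at small $s$ gives, for $\mu>3$, the effective local generator $\tau^{-1}(1-cA)^{-1}A$. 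Here $c=\int_0^\infty\theta^2\psi(\theta)\,\mathrm{d}\theta/(2\tau^2)-1\neq0$ for non-exponential WT; equivalently, the event count $N(t)$ has $\mathrm{Var}\,N(t)\simeq(1+2c)\,t/\tau$ rather than $t/\tau$. Moreover, in the weak-jump regime where the correction is small, the discarded $c_\Lambda A^2/\tau$ is of the same order as the part of $A/\tau$ beyond the white-noise operator of Eq.~\eqref{WNcase}. So this estimate certifies Eq.~\eqref{ME_fin_Pois} only at Fokker--Planck order.

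\textbf{Far part.} The Laplace-space fallback does not work either.
\begin{itemize}
\item Eq.~\eqref{MW_I} holds only for $C=0$. With drift, $A(t)=e^{-\mathcal L_a t}A_0e^{\mathcal L_a t}$, where $A_0=\hat p(\I\partial_x I(x))-1$, depends on $t$. The interaction-picture ME is then no longer a convolution, and there is no Laplace-space product formula to feed into a Tauberian theorem.
\item Even for $C=0$, such an argument would only show that two spreading PDFs approach the same Gaussian. It would not show that the generator becomes Poissonian.
\item Your bound $\sup_{s>\Lambda}|\rho(s)|\int_0^t\|\partial_u\tilde P\|\,\mathrm{d}u$ grows like $t$ at fixed $\Lambda$. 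Near stationarity $\partial_u\tilde P(x\pv u)\approx-e^{-\mathcal L_a u}\mathcal L_aP_{\rm eq}$, which has constant $L^1$ norm.
\end{itemize}
For $\mu>3$ the other pairing, $\int_\Lambda^\infty|\rho|$ against $\sup_u\|\partial_u\tilde P\|$, is the natural one. The unboundedness you worry about is not the obstacle: on $L^1$, $A(t)$ is an average of mass-preserving push-forwards minus the identity (cf.\ Sec.~\ref{sec:numericaltechinques}), hence bounded by $2$. For $2<\mu<3$, where $\rho$ is not integrable, you would need the cancellation that $e^{\mathcal L_a s}$ produces on zero-mass densities in the Schr\"odinger picture. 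That cancellation decays only weakly, against smooth test functions, on the drift time scale. Nothing in your proposal supplies this, so the far part remains uncontrolled in that range.
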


Figure~\ref{fig:Def_comp_t100_dicho_gausspde_mu2.50_tau4_DB} illustrates numerical simulations of the SDE~\eqref{SDE} for a strongly nonlinear additive case driven by renewal noise with a Pareto/Manneville waiting-time distribution, $\psi(t) = (\mu-1)T^{-1}\left(1+t/T\right)^{-\mu}$. Here, $\mu=2.5$, and we consider both dichotomous and Gaussian jump PDFs (see the figure caption for further details). The simulation results are compared with the solutions of the PDEs corresponding to exponential waiting times, evaluated using the same mean waiting time $\tau = T/(\mu-2)$. These PDEs are provided explicitly in Appendix~\ref{app:ME_resuls}), Eqs.~\eqref{MEGx_dicho} and \eqref{MEGx_gauss} for the dichotomous and Gaussian cases, respectively.

Figure~\ref{fig:Def_comp_t100_dicho_gauss_flat_pde_mu3.00_tau2_DB} presents a similar scenario to that of Fig.~\ref{fig:Def_comp_t100_dicho_gausspde_mu2.50_tau4_DB}, but with $\mu=3.5$. This figure additionally includes the case of a flat PDF, for which the theoretical result is obtained in Appendix~\ref{app:ME_resuls}, Eq.~\eqref{MEGx_flat}.

The excellent agreement between the numerical simulations of the SDE and the PDEs for the corresponding Poissonian cases provides strong support for Proposition~\ref{muG2}. To demonstrate that the system operates well outside the regime where the central limit theorem would justify replacing $\xi[t]$ with white noise, the figures also include the PDFs obtained from the white-noise limit—specifically, the Stratonovich-type Fokker--Planck equation given in Eq.~\eqref{WNcase}.

\begin{figure}[ht]
\centering
\includegraphics[width=\textwidth]{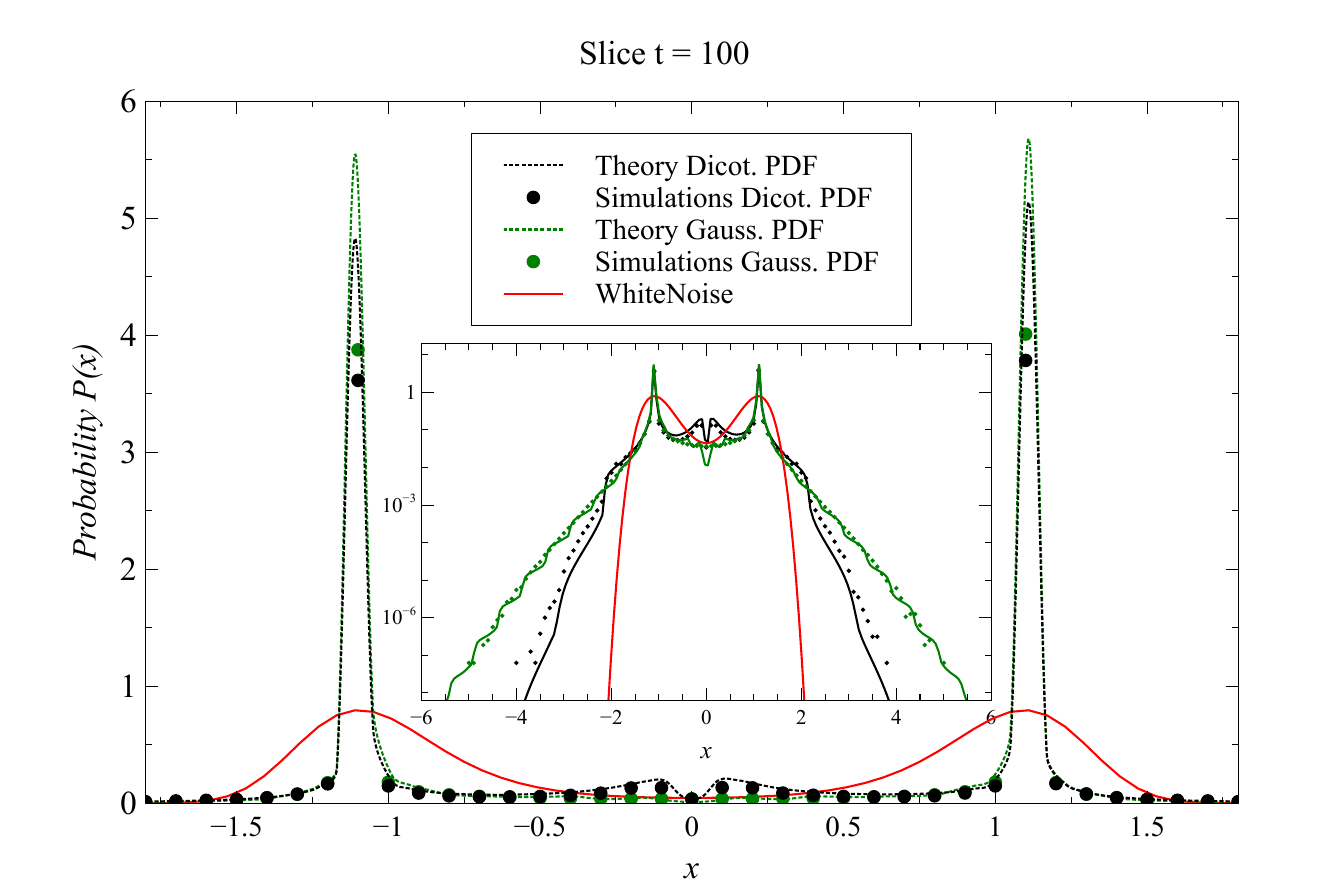}
\caption{\( P(x\pv t) \) at \( t = 100 \) for the SDE~\eqref{SDE} in the
\textit{additive} case, with a strongly nonlinear drift, i.e.,
\( C(x) = -x(1.2 - x^2) \) and \( I(x)=1 \), driven by renewal noise
with both dichotomous and Gaussian PDFs and a \textit{power-law} WT,
$\psi(t) = T^{-1}(\mu-1)\left(1+t/T\right)^{-\mu}$, 
with $T=2$, $\mu=2.5$ and $3$, corresponding to $\tau=4$ and $2$, respectively. The figure shows
the results of numerical simulations of the SDE~\eqref{SDE} (circles)
together with the analytical solutions given in
Eqs.~\eqref{MEGx_dicho}--\eqref{MEGx_gauss} (dotted lines). The red
solid line corresponds to the case in which $\xi[t]$ is replaced by
white noise; it is apparent that the central limit theorem does not
apply in this case. Inset: the same data in logarithmic scale.}
\label{fig:Def_comp_t100_dicho_gausspde_mu2.50_tau4_DB}
\end{figure}
\begin{figure}[ht]
\centering
\includegraphics[width=\textwidth]{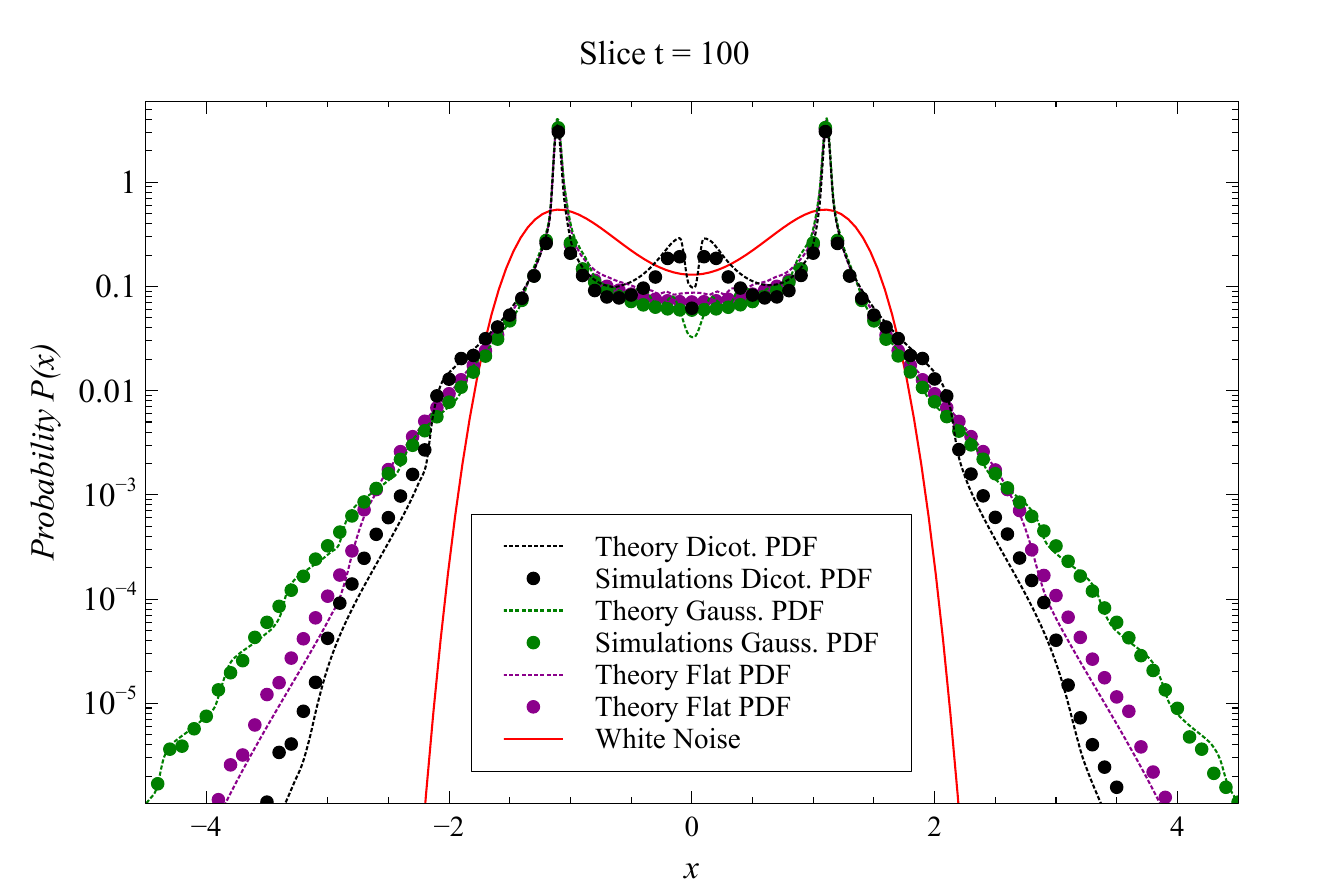}
\caption{Same as Fig.~\ref{fig:Def_comp_t100_dicho_gausspde_mu2.50_tau4_DB}
but with $\mu=3.0$ and $T=2$, from which $\tau=2$. Moreover, here we have
added the case of flat PDF, thus, to improve readability, we show
only the plot in log scale.
}
\label{fig:Def_comp_t100_dicho_gauss_flat_pde_mu3.00_tau2_DB}
\end{figure}
The next section deals with the important case $1 < \mu \le 2$, for which
the WT PDF does not have a finite mean waiting time. In this regime, we
derive an important universal asymptotic result.

\section{The universal limit behavior of the $n$-time  correlation functions for the non stationary case.\label{sec:limt_result}}
If $1 < \mu \le 2$ (case \ref{I_I} and \ref{II_I} of Section~\ref{sec:further}) the system never achieves stationarity, 
as the aging time is infinite. In this situation, given the asymptotic expression of the rate function we have the following:
\begin{proposition}        
\label{prop:universal}
If the WT  PDF exhibits a power-law decay with time scale $T$ and with $1<\mu\le 2$, then for time lags large  relative to $T$, the dominant term 
in the sum \eqref{corrGen_delta} is the one with
$p=1$:
\begin{align}
\label{resultmul2}
&\langle\xi(t_1)\xi(t_2)...\xi(t_n)\rangle_{t_0}
\rightarrow \overline{\xi^{n}}\,\tilde R(t_n-t_0)\delta(t_n-t_1)
\delta(t_2-t_1)\times...
\times 
\delta(t_{n-1}-t_{n-2})
\nonumber \\&
=\frac{\overline{\xi^{n}}}{\overline{\xi^{2}}}\langle\xi(t_1)\xi(t_n)\rangle_{t_0}
\delta(t_2-t_1)\times...
\times 
\delta(t_{n-1}-t_{n-2})
\nonumber \\&
\sim \overline{\xi^{n}}\,T^{-1}\left(\frac{t_n-t_0}{T}\right)^{-(2-\mu)}
\delta(t_n-t_1)\delta(t_2-t_1)\delta(t_3-t_2)\times...
\times 
\delta(t_{n-1}-t_{n-2}).
\end{align}
%
\end{proposition}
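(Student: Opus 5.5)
We start from the exact composition expansion of Proposition~\ref{prop:procedure_fligh}, Eq.~\eqref{corrGen_delta}, with $\overline{\xi_0^{m}}=0$, and compare the contribution of each block number $p$ by their effect on time-ordered integrals. These integrals are the only way the correlation functions enter the $G$-cumulant expansions \eqref{CGx} and \eqref{P_cum_temp}. Because of the Dirac deltas, a $p$-block composition integrated over the ordered simplex $0\le u_1\le\dots\le u_n\le t$ collapses to a $p$-fold integral:
\begin{equation*}
I_p(t)=\int_{0\le s_1\le\dots\le s_p\le t} R(s_1)\,R(s_2-s_1)\cdots R(s_p-s_{p-1})\,\mathrm{d}s_1\cdots \mathrm{d}s_p ,
\end{equation*}
multiplied by moment factors $\prod_i\overline{\xi^{m_i}}$. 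The case $p=1$ gives $I_1(t)=\int_0^t R(s)\,\mathrm{d}s$.

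The key step is an asymptotic estimate of $I_p$ using case~\ref{I_I}, $R(t)\sim T^{-1}(T/t)^{2-\mu}$. Since $I_p$ is a $p$-fold convolution, we work in Laplace space. There $\hat R(s)=\hat\psi(s)/(1-\hat\psi(s))$, and for $1<\mu<2$ one has $1-\hat\psi(s)\simeq \Gamma(2-\mu)(sT)^{\mu-1}$ as $s\to 0$, so $\hat R(s)\sim (sT)^{1-\mu}$. By a Tauberian argument, the Laplace transform of $I_p$, namely $\hat R(s)^p/s$, yields
\begin{equation*}
I_p(t)\sim C_p\,(t/T)^{p(\mu-1)}.
\end{equation*}
This is unbounded in $p$, so block counting alone does not single out $p=1$. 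The comparison therefore has to be made at the level of the \emph{local} quantity that enters the time-local ME, that is, the contribution per unit final time $t_n$ near a given age. The claim in Eq.~\eqref{resultmul2} concerns the correlation at fixed large times with all lags measured relative to $T$.

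We would therefore fix the ages $t_1-t_0\gg T$ and consider separations $t_{j+1}-t_j$ between consecutive blocks. Each extra block boundary at a lag $\Delta\gg T$ costs a factor $R(\Delta)\sim T^{-1}(T/\Delta)^{2-\mu}\to 0$ relative to the collapsed single-block term. This mirrors the argument given after Eq.~\eqref{cor4t_delta}, where term $(a)$ dominates when $t_3-t_2\gg T$. Hence, among all compositions, the $p=1$ term $\overline{\xi^n}R(t_1-t_0)\prod\delta$ is the leading one. Writing $\prod_{j}\delta(t_{j+1}-t_j)$ as $\delta(t_n-t_1)\prod_{j<n-1}\delta(t_{j+1}-t_j)$ and using $R(t_1-t_0)=R(t_n-t_0)$ on the support, we obtain the first line of \eqref{resultmul2}. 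The $\delta(t_1-t_0)$ part of $\tilde R$ is irrelevant for $t_1>t_0$. The second line then follows from \eqref{cor2t_delta_}, and the third from case~\ref{I_I}. The marginal case $\mu=2$ is handled identically, since $R(\Delta)\sim T^{-1}/\log(\Delta/T)$ still vanishes.

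The main obstacle is making ``dominant'' precise, because the blocks with small inter-block lags $\Delta\lesssim T$ are not suppressed. In integrated form they accumulate, as the growth of $I_p$ above shows. We would address this by restricting the claim to the regime where all inter-block lags are $\gg T$. We would then show that the residual short-lag contributions only renormalize the moment factors at order $T/t$ relative to the drift timescale, invoking the hypothesis ``drift timescale $\gg T$'' of Theorem~\ref{prop:ULME}. This step is heuristic rather than rigorous.
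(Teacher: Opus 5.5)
Your core argument is the paper's own: the paper supports the proposition only by the observation, made after Eq.~\eqref{cor4t_delta}, that each additional block boundary carries a factor $R(\Delta)\sim T^{-1}(T/\Delta)^{2-\mu}$ that vanishes for lags $\Delta\gg T$; the three forms in Eq.~\eqref{resultmul2} then follow from the delta support, Eq.~\eqref{cor2t_delta_} and case~\ref{I_I}, just as you write. The paper does not address the integrated comparison you raise either, but your diagnosis of it is off: $\int_0^T R(\Delta)\,\mathrm{d}\Delta=O(1)$ while $\int_T^t R(\Delta)\,\mathrm{d}\Delta\sim (t/T)^{\mu-1}$, so the growth $I_p\sim (t/T)^{p(\mu-1)}$ comes from the \emph{large} inter-block lags (where $R$ is pointwise small), and restricting to lags $\gg T$ therefore keeps rather than removes the obstruction.
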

Note that  the limit result \eqref{resultmul2} depends only on the average of $\xi^n$ and has a power law time decay with exponent $\mu-2<0$.
By combining Proposition~\ref{prop:universal} with the universal result for the two times correlation function found in \cite{bblmCSF196},  we arrive at
the following
\begin{lemma}
\label{lem:2}
Under the conditions outlined in Proposition~\ref{prop:universal},
the common asymptotic expression for the $n$-time 
correlation functions \uline{for
any  stochastic renewal process of the spike type},
is given by Eq.~\eqref{resultmul2}, and it corresponds to the universal two-time
correlation function evaluated at the extreme times.
\end{lemma}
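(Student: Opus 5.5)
The plan is to derive Lemma~\ref{lem:2} as a direct corollary of Proposition~\ref{prop:universal} combined with the exact two-time result, Eq.~\eqref{cor2t_delta_}. No new asymptotic estimate is needed, because all of the analytic content already sits in Proposition~\ref{prop:universal}. What remains is an identification step followed by a universality check.

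First, take the dominant $p=1$ term of Eq.~\eqref{corrGen_delta}, with $\xi_0=0$ as agreed in Section~\ref{sec:corr_leapers}. This term is
\[
\overline{\xi^{n}}\,R(t_1-t_0)\,\delta(t_2-t_1)\cdots\delta(t_n-t_{n-1}).
\]
Under time ordering, the chain of deltas can be rewritten as $\delta(t_n-t_1)\,\delta(t_2-t_1)\cdots\delta(t_{n-1}-t_{n-2})$, since the last factor is traded for the delta between the extreme times. On the support of these deltas one has $t_1=t_n$, so $R(t_1-t_0)=R(t_n-t_0)$. For $t_1>t_0$ the singular part of $\tilde R$ does not contribute, so $R$ may be replaced by $\tilde R$.

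Next, compare this with Eq.~\eqref{cor2t_delta_} evaluated at the extreme times:
\[
\langle\xi(t_1)\xi(t_n)\rangle_{t_0}=\overline{\xi^2}\,R(t_1-t_0)\,\delta(t_n-t_1).
\]
This gives exactly the second line of Eq.~\eqref{resultmul2}, namely the two-time correlation at $(t_1,t_n)$ multiplied by $\overline{\xi^n}/\overline{\xi^2}$ and by the $n-2$ interior deltas. Inserting the asymptotics of case~\ref{I_I}, $R(t)\sim T^{-1}(T/t)^{2-\mu}$, then yields the third line.

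Universality is the final check. The two-time function in \cite{bblmCSF196} depends on $p(\xi)$ only through $\overline{\xi^2}$, and on $\psi$ only through $R$. The prefactor in the $n$-time case therefore depends only on the single moment $\overline{\xi^n}$. No other feature of the jump distribution enters, because in the $p=1$ composition the moments $\overline{\xi^{m_i}}$ with $m_i<n$ never appear. The expression is thus common to every spike renewal process with the given $\mu$ and $T$, whether the jumps are dichotomous, Gaussian or uniform.

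The main obstacle lies in the hypothesis inherited from Proposition~\ref{prop:universal}, not in the algebra. One has to make precise the sense in which the $p=1$ term dominates, given that the terms of Eq.~\eqref{corrGen_delta} are distributions supported on different coincidence sets. I would interpret the dominance after integration against smooth test functions over the ordered simplex, as is done in the $G$-cumulant integrals. In that setting each additional block contributes a factor of the form
\[
\int R(u_B-u_{B-1})\,du_B\sim T^{\mu-2}(\Delta u)^{\mu-1},
\]
which must be weighed against the single factor $R(t_1-t_0)$. I would take this dominance as given from Proposition~\ref{prop:universal}, including the case $\mu=2$ with logarithmic corrections, and note that for $\mu=2$ the last line of Eq.~\eqref{resultmul2} is replaced by the form $T^{-1}/\log$ from case~\ref{II_I}.
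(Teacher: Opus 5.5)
Your proposal is correct and follows the paper's route: the paper obtains Lemma~\ref{lem:2} simply by combining the $p=1$ dominance of Proposition~\ref{prop:universal} with the two-time result \eqref{cor2t_delta_pre}--\eqref{cor2t_delta_} evaluated at the extreme times, and your steps (rewriting the delta chain, identifying the $p=1$ term with $\langle\xi(t_1)\xi(t_n)\rangle_{t_0}$, and noting that only $\overline{\xi^n}$ and $R$ survive) spell out that one-line combination explicitly. One small slip in your side remark: since $R(t)\sim T^{1-\mu}t^{\mu-2}$, each extra block gives $\int R\,\mathrm{d}u_B\sim(\Delta u/T)^{\mu-1}$ rather than $T^{\mu-2}(\Delta u)^{\mu-1}$; this does not affect the lemma, because you, like the paper, take the dominance as given from Proposition~\ref{prop:universal}.
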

%


The reader should appreciate that the universal property stated in Lemma~\ref{lem:2} implies a corresponding universal statistical behavior for any Brownian variable with drift perturbed by renewal noise, as modeled in Eq.~\eqref{SDE}.

Before proving Proposition~\ref{prop:universal}, it is worth highlighting that, in general, once the $n$ time-points of the correlation function are fixed, only one of the $2^{n-1}$ compositions in Eq.~\eqref{corrGen_delta} survives. Consequently, at first glance, the aforementioned Proposition and Lemma might appear vacuous.

However, we recall that in practical applications, the spike stochastic renewal process $\xi[t]$ serves as a noise source within an SDE, such as Eq.~\eqref{SDE}. As a result, the $n$-time correlation functions of $\xi[t]$ always appear within $n$-fold time integrals. When applying Eq.~\eqref{corrGen_delta}, this $n$-fold integral naturally decomposes into a sum of $2^{n-1}$ distinct integrals, each corresponding to a specific composition of the $n$ time variables.

Proposition~\ref{prop:universal}, in conjunction with Lemma~\ref{lem:2}, asserts that under a time-scale separation between the dynamics of the unperturbed system \eqref{SDE} and those of the noise, the dominant contribution among these $2^{n-1}$ integrals arises from the term corresponding to the single-block composition ($p = 1$). This term contains the $n$-fold time integral defined in Eq.~\eqref{resultmul2}.

Another significant implication of Proposition~\ref{prop:universal} concerns cases where the PDF of the random variable $\xi$ follows a power-law distribution, $p(\xi) \sim \xi^{-\beta}$. In such instances, the convergence expressed in Eq.~\eqref{resultmul2} is achieved not only in the limit of large time lags but also by increasing the order $n$.

This behavior occurs because, for such heavy-tailed distributions, the coefficients in Eq.~\eqref{corrGen_delta} of the form $\left[(\overline{\xi^{m_1}})(\overline{\xi^{m_2}})\dots(\overline{\xi^{m_p}})\right]$—comprising $p$ moments where $\sum_{i=1}^p m_i = n$—are dominated for large $n$ by the $p = 1$ term. This term involves a single block with the $n$-th moment $\overline{\xi^n}$, precisely as described in Proposition~\ref{prop:universal}.

\section{The universal limit behavior of the $n$-time correlation functions for
heavy-tailed PDFs of $\xi$.\label{sec:limt_result_p}}

This section introduces Proposition~\ref{prop:PDF_power}, which is
analogous to Proposition~5 of~\cite{bblmCSF202}. Since the arguments
and conclusions are essentially identical, we restrict ourselves here
to stating the result. For the detailed discussion and proof, the
reader is referred to the cited paper:
\begin{proposition}
\label{prop:PDF_power}
In the case of a power-law PDF for the random variable $\xi$, i.e.,
$p(\xi) \sim \xi^{-\beta }$ with $\beta  < n+1$ and $n$ a given integer, let us
redefine $\overline{\xi^m}$ for $m \le n$ as the empirical average of $\xi^m$
computed over a large but finite number $N$ of realizations. In this setting,
$\overline{\xi^n}$ increases with $N$, and for any fixed $n$, the convergence
described by Eq.~\eqref{resultmul2} is achieved simply by increasing $N$,
regardless of the time lags.
\end{proposition}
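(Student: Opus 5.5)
The plan is to compare, inside Eq.~\eqref{corrGen_delta} (with $\overline{\xi_0^m}=0$), the coefficient of the single-block term, $\overline{\xi^n}$, with the coefficients $\prod_{i=1}^p\overline{\xi^{m_i}}$ of the multi-block compositions ($p\ge2$, $\sum m_i=n$). The time-dependent parts do not depend on the sample size $N$: they are the products of $R$'s and Dirac deltas fixed by the composition. Since the paper says these correlation functions always appear inside $n$-fold time integrals, I will compare the integrated contributions. Each composition then gives a finite, $N$-independent time factor multiplied by its moment product. The claim therefore reduces to a purely statistical one: for a heavy-tailed $p(\xi)\sim|\xi|^{-\beta}$ with $\beta<n+1$, the ratio $\prod_i\overline{\xi^{m_i}}/\overline{\xi^n}\to0$ as $N\to\infty$ for every composition with $p\ge2$. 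Here $\overline{\xi^m}$ denotes the empirical mean $N^{-1}\sum_{k=1}^N\xi_k^m$.

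For that ratio I will use extreme-value scaling. The largest of $N$ samples satisfies $\xi_{\max}\sim N^{1/(\beta-1)}$. For $m>\beta-1$ the empirical moment is dominated by the largest samples, giving $\overline{\xi^m}\sim N^{-1}\xi_{\max}^m\sim N^{m/(\beta-1)-1}$. For $m<\beta-1$ it converges to a finite constant. Write $\gamma=1/(\beta-1)$. A block with $m_i>\beta-1$ contributes $N^{m_i\gamma-1}$, and a light block contributes $O(1)$, which is bounded by $N^{\max(0,m_i\gamma-1)}$. The product over $p\ge2$ blocks is therefore at most $N^{\sum_i\max(0,m_i\gamma-1)}$. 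The single block gives $N^{n\gamma-1}$, and $n\gamma-1>0$ exactly because $\beta<n+1$. If at least one block is heavy, let $q\ge1$ be the number of heavy blocks. Their total size $\sum_{\rm heavy}m_i$ is at most $n$ minus the sizes of the light blocks, so the exponent is at most $n\gamma-q$. This is strictly below $n\gamma-1$ when $q\ge2$, or when $q=1$ and there is at least one light block. If no block is heavy, the product is $O(1)$ and is swamped by $N^{n\gamma-1}\to\infty$. Every $p\ge2$ composition is therefore subleading, and the sum collapses onto the $p=1$ term, i.e.\ onto Eq.~\eqref{resultmul2}, regardless of the time lags.

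The main obstacle is that these scalings hold in distribution, not deterministically, and they are all driven by the same few extreme samples, so the moments are strongly correlated. I would handle this as in Proposition~5 of \cite{bblmCSF202}. Condition on the largest order statistic $\xi_{(1)}$. Show that for $m>\beta-1$ one has $\overline{\xi^m}=N^{-1}\xi_{(1)}^m(1+o_P(1))$, using the generalized-central-limit fact that the sum of $m$-th powers is a stable law dominated by its maximum. The ratio is then $\approx N^{-(q-1)}\xi_{(1)}^{-\sum_{\rm light}m_i}\cdot(\text{ratio of }O(1)\text{ terms})$, which tends to zero in probability. One further check is that odd moments of a symmetric law do not cancel the $p=1$ coefficient itself. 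For odd $n$ the sign of $\overline{\xi^n}$ is random, but its magnitude still scales as $N^{n\gamma-1}$, so convergence in ratio still holds.
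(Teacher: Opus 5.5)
Your proposal is correct and is essentially the paper's argument: the paper only asserts, deferring to Proposition~5 of \cite{bblmCSF202}, that for heavy-tailed $p(\xi)$ the single-block coefficient $\overline{\xi^n}$ dominates every product $\prod_i\overline{\xi^{m_i}}$ with $p\ge 2$, and your extreme-value exponent count ($N^{n\gamma-1}$ against at most $N^{\gamma\sum_{\rm heavy}m_i-q}$) is a quantitative version of exactly that. One technical correction: the claim $\overline{\xi^m}=N^{-1}\xi_{(1)}^m\,(1+o_P(1))$ for $m>\beta-1$ is false, because for tail index $(\beta-1)/m\in(0,1)$ the ratio of $\sum_k|\xi_k|^m$ to its largest term has a nondegenerate limit law (Darling's theorem); however, you only need $N^{-1}\xi_{(1)}^m\le\overline{|\xi|^m}\le N^{-1}\xi_{(1)}^m\,O_P(1)$, which does hold and leaves your bound $N^{-(q-1)}\xi_{(1)}^{-\sum_{\rm light}m_i}\,O_P(1)\to 0$ intact.
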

%


\section{The  universal limit statistical behavior of generic systems driven by $\xi[t]$
for heavy-tailed power law  $\psi(t)$ \label{sec:universalCTRW}}
Let us consider the variable of interest $x$, governed by the SDE
in Eq.~\eqref{SDE}. The general expression for its PDF, written in the
interaction representation, depends on all multi-time correlation
functions of $\xi[t]$ and is given in Eq.~\eqref{pExp2_}. The
corresponding ME is reported in Eqs.~\eqref{ME_fin} and
\eqref{MEGx_I_LaPOI}, but the resulting expressions are rather
cumbersome.

However, a crucial simplification comes from the following
\begin{proposition}\label{prop:approxME}
Let us consider the general SDE of Eq.~\eqref{SDE}.
If  $\xi[t]$ is a spike stochastic renewal process with arbitrary
 jump PDFs with finite moments 
 and  WT PDF decaying as $(t/T)^{-\mu}$ with
 $1<\mu< 2$, and if the unperturbed (drift) dynamics of $x$ 
 is characterized by
 a timescale larger than $T$, then for $t\gg T$ the ME for the PDF of $x$
 is well approximated by that in Eq.~\eqref{ME_universal_intro}.
\end{proposition}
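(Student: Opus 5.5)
The plan is to start from the exact interaction-representation ME~\eqref{MEGx_I}, or equivalently from the expansion~\eqref{pExp2_}, and insert the asymptotic dominance result of Proposition~\ref{prop:universal}. Concretely, for $1<\mu<2$ and $t\gg T$, Proposition~\ref{prop:universal} and Lemma~\ref{lem:2} say that inside the $n$-fold time-ordered integrals of Eq.~\eqref{pExp2_} the single-block composition ($p=1$) dominates the sum over the $2^{n-1}$ compositions. I would first make the role of the drift timescale precise. Over the width of the memory kernel the operators $\tilde{\mathcal L}_I(u_j)$ appearing in different blocks are essentially equal: $\tilde{\mathcal L}_I(u)=e^{-\mathcal L_a u}\mathcal L_I e^{\mathcal L_a u}$ varies only on the drift timescale, which is assumed $\gg T$. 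Hence the comparison of compositions in Proposition~\ref{prop:universal}, which was carried out for scalar weights, carries over to operator-valued weights $[\tilde{\mathcal L}_I(u_B)]^{|B|}$.

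Next I would identify the $G$-cumulants in this regime. I replace the sum over compositions by its $p=1$ term and compare with the moment--$G$-cumulant relation~\eqref{corr-cum_Gen}. This gives the approximate cumulants
$\doubleangle{\xi(u_1)\cdots\xi(u_n)}^{(G)}\sim\overline{\xi^n}R(u_1-t_0)\prod_{j=2}^n\delta(u_j-u_{j-1})$.
There is a consistency requirement: the products of two or more cumulants on the right of~\eqref{corr-cum_Gen} reproduce the $p\ge2$ compositions. Those terms are subleading by assumption, so the identification is consistent to leading order.

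The alternative route is to start from the exact identification~\eqref{M-GC} and note that the kernel $R'(t-u)$ in~\eqref{MEGx_I} is concentrated. For $1<\mu<2$ its integral over $[0,t]$ is $R(t)-R(0)$, and it combines with the local $R(0)\delta$ term to give effectively $R(t)\delta(t-u)$ plus a correction. The correction is controlled by $\int_0^t R'(t-u)[\tilde P(u)-\tilde P(t)]\,du$, and it is small when $\tilde P$ changes slowly compared with the decay of $R'$. I regard this as the cleaner check and would present both.

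Substituting the approximate cumulants into the Green function~\eqref{GCum_x_L} collapses every inner time integral through the deltas. This yields $G\approx[\hat p(-\I\tilde{\mathcal L}_I(t))-1]\,R(t)\,\delta(t-u)$, so that~\eqref{MEGx_GEN} becomes local:
$\partial_t\tilde P=R(t)[\hat p(-\I\tilde{\mathcal L}_I(t))-1]\tilde P$.
Undoing the interaction representation with~\eqref{Ptilde} and~\eqref{LITilde} gives $\hat p(-\I\tilde{\mathcal L}_I)=e^{-\mathcal L_a t}\hat p(\I\partial_x I)e^{\mathcal L_a t}$ and restores the drift term. The result is Eq.~\eqref{ME_universal_intro}.

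The main obstacle is controlling the discarded terms uniformly in $n$, in the operator setting. Proposition~\ref{prop:universal} is an asymptotic statement about individual correlation functions. The series in $n$, however, involves unbounded differential operators, and the error of order $\int R'(t-u)[\tilde P(u)-\tilde P(t)]du$ must be shown small relative to $R(t)\tilde P(t)$. My first attempt would be to bound this error using the power-law form $R'(s)\sim -(2-\mu)T^{-1}(s/T)^{\mu-3}$ together with the slow variation of $\tilde P$ on the drift timescale. That bound should show the correction is $O(T/t_{\rm drift})$ relative to the retained term. I would accept the argument at this heuristic-asymptotic level, consistent with the paper's ``well approximated'' statement.
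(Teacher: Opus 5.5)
Your first route is essentially the paper's proof: single-block dominance from Proposition~\ref{prop:universal}, the approximate $G$-cumulants \eqref{cum-corr_2}, collapse of the Green function to $R(t)\,[\hat p(-\I\tilde{\mathcal L}_I(t))-1]\,\delta(t-u)$, and removal of the interaction representation. The only difference is that you read the cumulants off the forward relation \eqref{corr-cum_Gen}. The paper instead inserts \eqref{resultmul2} into the inverse relation \eqref{cum-corr} and invokes single-block dominance a second time. At that heuristic level this difference does not matter.

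The genuine problem is your second route. You call it the cleaner check, and it carries the only quantitative claim in the proposal. The identity
\[
\int_0^t R'(t-u)\,\tilde P(x\pv u)\,\mathrm{d}u+R(0)\,\tilde P(x\pv t)
=R(t)\,\tilde P(x\pv t)+\int_0^t R'(t-u)\,\bigl[\tilde P(x\pv u)-\tilde P(x\pv t)\bigr]\,\mathrm{d}u
\]
is exact. So Eq.~\eqref{ME_universal_intro} is Eq.~\eqref{MEGx_I} with the remainder dropped (after leaving the interaction representation), and the remainder is the right object to study. But it is not small, for three reasons:
\begin{itemize}
\item For $1<\mu<2$ the kernel is not concentrated on the scale $T$. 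From $R'(s)\simeq-(2-\mu)T^{-2}(s/T)^{\mu-3}$ (note $T^{-2}$, not $T^{-1}$) one gets $\int_0^t s\,|R'(s)|\,\mathrm{d}s\simeq\frac{2-\mu}{\mu-1}(t/T)^{\mu-1}\to\infty$. There is no finite memory width, which also undercuts the width you invoke in your first route.
\item In the interaction representation $\tilde P$ changes only through the renewal-driven noise term. So "slow variation on the drift timescale" is not available.
\item Without drift the failure is explicit. Eq.~\eqref{MW} gives $\hat P(k\pv t)\simeq E_{\mu-1}(-\Gamma(\mu)\,z)$ (Mittag-Leffler function), with $z=[1-\hat p(k)]\Lambda(t)$ and $\Lambda(t)=\int_0^tR(u)\,\mathrm{d}u$. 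The local equation gives $e^{-z}$ instead. The two agree only to first order in $z$, even though $T/t_{\mathrm{drift}}=0$ there.
\end{itemize}

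The failure persists inside your hypothesis. Take $C(x)=\gamma x$, $I=1$, $x(0)=0$ and symmetric jumps. Exact and local dynamics have the same second moment. Their fourth cumulants differ through the $p=2$ composition of Eq.~\eqref{corrGen_delta}, where the fresh rate $R(u_2-u_1)$ replaces the Poissonian $R(u_2)$:
\[
\kappa_4-\kappa_4^{\mathrm{loc}}=6\,(\overline{\xi^2})^2\int_0^t\!\mathrm{d}u_1\int_{u_1}^t\!\mathrm{d}u_2\,R(u_1)\,\bigl[R(u_2-u_1)-R(u_2)\bigr]\,e^{-2\gamma(2t-u_1-u_2)},
\]
with $\kappa_4^{\mathrm{loc}}=\overline{\xi^4}\int_0^tR(u)\,e^{-4\gamma(t-u)}\,\mathrm{d}u$. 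For $t\gg T,\gamma^{-1}$ the ratio $(\kappa_4-\kappa_4^{\mathrm{loc}})/\kappa_4^{\mathrm{loc}}$ tends to $6(\overline{\xi^2})^2\hat R(2\gamma)/\overline{\xi^4}$. For the paper's $\psi(t)=(\mu-1)T^{-1}(1+t/T)^{-\mu}$:
\begin{itemize}
\item if $\gamma T\ll1$, then $\hat R(2\gamma)\simeq[\Gamma(2-\mu)(2\gamma T)^{\mu-1}]^{-1}\gg1$;
\item if $\gamma T\gg1$, then $\hat R(2\gamma)\simeq(\mu-1)/(2\gamma T)\ll1$.
\end{itemize}
So the size of your remainder is set by $\hat R(1/t_{\mathrm{drift}})$, the mean number of further renewals within one drift time after an event (burstiness), not by $T/t_{\mathrm{drift}}$. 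A correct estimate supports the local equation when the drift is fast compared with $T$. It contradicts it in the regime $t_{\mathrm{drift}}\gg T$ assumed in the statement.

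For the same reason, the single-block dominance your first route borrows (as does the paper) does not survive the time integrations in that regime, since $\int_0^{t_{\mathrm{drift}}}R(s)\,\mathrm{d}s\sim(t_{\mathrm{drift}}/T)^{\mu-1}\gg1$. Once the remainder is estimated correctly, your two routes do not agree.
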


To demonstrate Proposition~\ref{prop:approxME}, we note that in this situation
Proposition~\ref{prop:universal} applies, thus
 the dominant contribution in the composition sum of
Eq.~\eqref{corrGen_delta} is the term corresponding to $p=1$.

This observation can be exploited in two different ways. The first
approach consists in using the $G$-cumulant method to derive the Green
function associated with the ME~\eqref{MEGx_GEN} for the
PDF of $x$. Alternatively, one may directly evaluate the PDF of $x$ by
expanding the exponential series in Eq.~\eqref{corrGen_delta}.

We begin with the first approach. To this end, it is useful to recall
the relation giving multi-time $G$-cumulants  in terms of multi-time correlation
functions (the inverse of the one exploited in Eq.~\eqref{mukappaCombiMulti},see, e.g., Sec.~4.4.3 of~\citep{bbJSTAT4}):
\begin{align}
\label{cum-corr}
\doubleangle{\xi(u_1)\xi(u_2)\dots\xi(u_n)}
=
\sum_{\pi(n)} (-1)^{\left|\pi\right|-1}
\prod_{B\in\pi(n)}
\left\langle \prod_{i\in B}\xi(u_i)\right\rangle ,
\end{align}
where $\pi(n)$ runs again through the list of all set-compositions (ordered partitions) of $n$ 
distinguishable objects, $B$ labels the blocks
of the composition $\pi$, and $\left|\pi\right|$ is the number of blocks in the composition
(the number $p$ in Eq.~\eqref{corrGen_delta}).

Using Proposition~\ref{prop:universal} and Eq.~\eqref{resultmul2} in the
r.h.s. of Eq.~\eqref{cum-corr}, we obtain
\begin{align}
\label{cum-corr_}
\doubleangle{\xi(u_1)\xi(u_2)\dots\xi(u_n)}
\sim
\sum_{\pi(n)} (-1)^{|\pi|-1}
\prod_{B\in\pi(n)}
\overline{\xi^{|B|}}
\,R(u_B-t_0)
\delta(\Delta u_B).
\end{align}
where $\,R(u_B-t_0)
\approx T^{-1}\!\left(\frac{u_B-t_0}{T}\right)^{-(2-\mu)}$.

By the same argument used in Proposition~\ref{prop:universal}, the
dominant contribution in Eq.~\eqref{cum-corr_} arises from the partition
with a single block, i.e.\ $\left|\pi\right|=1$, yielding
\begin{align}
\label{cum-corr_2}
\doubleangle{\xi(u_1)\xi(u_2)\dots\xi(u_n)}
\sim
\overline{\xi^{n}}\,
R(u_1-t_0)
\prod_{j=2}^{n}\delta(u_j-u_{j-1}).
\end{align}

Inserting Eq.~\eqref{cum-corr_2} into Eq.~\eqref{GCum_x_L}, and then into
Eq.~\eqref{MEGx_GEN}, we obtain (setting $t_0=0$)
\begin{equation}
\label{MEGx_I_mu}
\partial_t \tilde P(x\pv t)
\sim 
T^{-1}\!\left(\frac{t}{T}\right)^{-(2-\mu)}\!
\left[
\hat p\!\left(-\I \tilde{\mathcal{L}}_I(t)\right)-1
\right]
\tilde P(x\pv t).
\end{equation}

Removing the interaction representation yields the limit ME already presented in section~\ref{sec:summary}, adapted to the case $1<\mu<2$:
\begin{align}
\label{ME_fin_mu}
\partial_t P(x\pv t)
\sim
&\;
\partial_x C(x)\,P(x\pv t)
\nonumber\\
&+
T^{-1}\!\left(\frac{t}{T}\right)^{-(2-\mu)}\!
\left[
\hat p\!\left(\I\partial_x I(x)\right)-1
\right]
P(x\pv t).
\end{align}
This end the demonstration of Proposition~\ref{prop:approxME}.

The second approach is based on a direct evaluation of the PDF of $x$.
In this case, Eq.~\eqref{pExp_} becomes
\begin{align}
\label{pExp2__}
\tilde P(x\pv t)
&=
\left\langle
\exp_O\!\left[
\int_0^t \!\mathrm{d}u\,
\tilde{\mathcal{L}}_I(u)
\right]
\right\rangle
P(x\pv 0)
\nonumber\\
&\approx
\left\{
1+
\int_0^t\!\mathrm{d}u\,R(u)
\left[
\hat p\!\left(\tilde{\mathcal{L}}_I(u)\right)-1
\right]
\right\}
P(x\pv 0),
\end{align}
where $R(t)=T^{-1}(t/T)^{-(2-\mu)}$ with $1<\mu<2$.

Integrating Eq.~\eqref{MEGx_I_mu} and retaining only terms linear in
$R$, one readily recovers Eq.~\eqref{pExp2__}. Removing the interaction
representation finally gives
\begin{align}
\label{pExp2__P}
P(x\pv t)
\approx
\left\{
1+
\int_0^t\!\mathrm{d}u\,R(u)
\left[
\hat p\!\left(-\I\tilde{\mathcal{L}}_I(u-t)\right)-1
\right]
\right\}
P_a(x\pv t),
\end{align}
where $P_a(x\pv t)=\exp(\mathcal{L}_a t)P(x\pv 0)$ denotes the unperturbed
evolution.

In the simplified case $\mathcal{L}_a=0$, the Laplace transform of
Eq.~\eqref{pExp2__P} yields
\begin{align}
\label{PDF_SDE}
\hat P(k\pv s)
\approx
\frac{1}{s}
\left[
1+
\hat R(s)
\left(
\hat p\!\left(\I\partial_x I(x)\right)-1
\right)
\right]
\hat P(k\pv 0).
\end{align}

Equations~\eqref{pExp2__P}–\eqref{PDF_SDE} are valid under the assumption
$R(t)=T^{-1}(t/T)^{-(2-\mu)}$ with $1<\mu<2$.

\section{The dichotomous jump PDF case\label{sec:dico_leapers}}
The dichotomous jump-PDF case deserves attention, as it exhibits
compelling formal properties despite its limited practical relevance.

The standard symmetric dichotomous process—for a comprehensive discussion of its central role in stochastic processes, see Ref.~\cite{bCSF159} and references therein—assumes one of two values, $\pm a$. It maintains this value for a random duration $\theta$ until it either switches or remains the same based on a random coin toss. This process is of the \emph{step-type}, characterized by the PDF $p(\xi) = \frac{1}{2}\delta(\xi + a) + \frac{1}{2}\delta(\xi - a)$.

For the \emph{spike-type} symmetric dichotomous noise, the PDF of the random variable $\xi$ remains $p(\xi) = \frac{1}{2}\delta(\xi + a) + \frac{1}{2}\delta(\xi - a)$; however, the trajectories consist of discrete spikes rather than constant steps. As demonstrated below, this structural difference leads to several crucial consequences.

We begin by recalling that the \textit{standard} (step-type) dichotomous noise is a stationary process when the waiting time (WT) PDF decays exponentially (i.e., a Poissonian process). In this case, known as telegraph noise, the two-time correlation function also exhibits exponential decay (see, for example, \cite{bblmCSF196, bblmCSF202}):
\begin{align}\label{poisson_}
\langle\xi(t_1)\xi(t_{2})\rangle =
a^2\,
e^{-\frac{1}{\tau}(t_2-t_1)}.
\end{align}
%
Furthermore, for telegraph noise, the $n$-time correlation function satisfies the following factorization property:
\begin{align}
    \label{corrGen_delta_telegraph}
    \langle\xi(t_1)\xi(t_2)\dots\xi(t_n)\rangle = 
    \langle\xi(t_1)\xi(t_2)\rangle \langle\xi(t_3)\xi(t_4)\rangle \dots 
   \langle\xi(t_{n-1})\xi(t_n)\rangle.
\end{align}

This factorization is of fundamental importance because it implies that $\xi(t)$ is $G$-Gaussian; that is, all $G$-cumulants beyond the second order vanish~\cite{bCSF159}. Consequently, this yields a ME characterized by a simple memory kernel, applicable to both the characteristic function of $\xi[t]$ and the PDF of the variable $x$ in the SDE~\eqref{SDE}. A significant result from Ref.~\cite{bCSF159} (specifically Proposition~2 and Lemma~1) is that this factorization property also holds for \emph{non-dichotomous} Poissonian stochastic renewal processes of the \emph{step} type in the limit of large time lags.

In contrast, for Poissonian dichotomous processes of the \emph{spike} type---the primary focus of this work---the two-time correlation function does not exhibit exponential decay. Since the rate function is constant, Eq.~\eqref{cor2t_delta_pre} yields:
\begin{align}
    \label{cor2t_delta_dicho}
    \langle \xi(t_1)\xi(t_2)\rangle_{t_0} &= \overline{\xi_0^2}\,\delta(t_1-t_0)\delta(t_2-t_1) + \frac{a^2}{\tau}\,\delta(t_2-t_1).
\end{align}
For $\overline{\xi_0^2}=0$ (the standard assumption when considering the SDE~\eqref{SDE} with shot noise) and/or for $t_1 > t_0$, this expression coincides with the correlation function of white noise (see also~\cite{bblmCSF196}).

Furthermore, the standard factorization property described in Eq.~\eqref{corrGen_delta_telegraph} does \emph{not} hold for spike-type processes. Instead, they obey a more general, non-stationary factorization (see Appendix~\ref{app:dicho}):
\begin{align}
\label{corrGen_delta_dicho}
\langle \xi(t_1)\xi(t_2)\cdots\xi(t_n)\rangle_{t_0}
&=
\langle \xi(t_1)\xi(t_2)\rangle_{t_0}
\langle \xi(t_3)\xi(t_4)\rangle_{t_2}
\langle \xi(t_5)\xi(t_6)\rangle_{t_4}
\times \nonumber \\
&\qquad \cdots
\times
\langle \xi(t_{n-1})\xi(t_n)\rangle_{t_{n-2}} .
\end{align}
Remarkably, Eq.~\eqref{corrGen_delta_dicho} holds not only for the Poissonian case but for any symmetric dichotomous spike process, \emph{independently of the specific waiting time (WT) PDF}. This shows that the generalized factorization observed explicitly for $n=4$ and $n=6$ is not incidental; rather, it is a generic feature of renewal processes characterized by ``spike-like'' trajectories when $\xi$ is a two-state random variable.

However, despite its elegant form, the generalized factorization property in Eq.~\eqref{corrGen_delta_dicho} does not imply $G$-Gaussianity. The explicit dependence of each factor on the preceding time-point renders the process non-$G$-Gaussian, thereby making this factorization ineffective for truncating the $G$-cumulant expansion. Even in the Poissonian limit, Eq.~\eqref{corrGen_delta_dicho} does not reduce to the standard factorization, as can be immediately verified by substituting Eq.~\eqref{cor2t_delta_dicho} into Eq.~\eqref{corrGen_delta_dicho}.

Indeed, no stochastic renewal process is truly $G$-Gaussian, as $G$-Gaussianity would require an exact truncation of the Master Equation~\eqref{ME_fin} at the second differential order—a condition that is never satisfied for a physically admissible PDF.

\section{Numerical techniques\label{sec:numericaltechinques}} 
In this section we detail the various numerical tools employed in the paper.

\subsection{Integration of the SDE}
The SDE of Eq.~\eqref{SDE} is integrated using the Heun's scheme which enforces Stratonovitch 
calculus~\cite{RMannellaSDE,RMannellaItoStrat,
RMannellaHeun2025}, using discrete time steps $h$. Given
$x(t)$, the elementary evolution over the time step $h$ to get
$x(t+h)$ depends on whether or not within $h$ we are going to
have at least one delta kick. Assume that the next kick is going to
        happen at the time $t_1$.
If there is no delta kick within $h$ (i.e. $t_1 > t+h$), the algorithm reads
\begin{equation}
\left.
\begin{aligned}
\tilde{x} &= x(t) - h C(x(t)) \\
x(t+h) &= x(t) - \frac{h}{2} \left( C(x(t)) + C(\tilde{x}) \right)
\end{aligned} \quad
\right\} \text{NOKICK} \label{eq:nojump}
\end{equation}
If at least one delta kick is expected
        within $h$ (i.e. $t< t_1 < t+h$), we generate both the amplitude of the delta kick at the time $t_1$ ($\xi_1$, from the PDF) and the time to the following kick ($\Delta t_2$, from the WT). If the time of the following kick (i.e.
$t_2 = t_1+\Delta t_2$) is larger than $t+h$, we carry out the integration as
\begin{equation}
\left.
\begin{aligned}
\tilde{x} = & x(t) - h C(x(t)) + \xi_1 I(x(t)) \\
x(t+h) = & x(t) - \frac{h}{2} \left( C(x(t))  + C(\tilde{x})\right) + \frac{\xi_1}{2} \left( I(x(t))+I(\tilde{x})\right)
\end{aligned} \quad 
\right\} \text{KICK} \label{eq:jump}
\end{equation}
        On the other hand, 
        if the time of the following kick
$t_2$ is still within the current $[t,t+h]$ interval, we carry out a KICK
        step as in Eq.~\eqref{eq:jump} using $t_2-t$ as time step instead of $h$ to
        get $x(t_2)$; then  we generate the time to the next kick ($\Delta t_3$,
        from the WT), and
        the kick amplitude at $t_2$ ($\xi_2$, from the PDF).
        Again, we check whether $t_3 = t_2+\Delta t_3 > t+h$: if it does (no further 
        kicks within $[t,t+h]$), we move from $x(t_2)$ to
        $x(t+h)$ with the KICK step of Eq.~\eqref{eq:nojump} using a time step $h-t_2$ rather than $h$, and kick amplitude  $\xi_2$. If, on the other hand, $t_3$ is still within  $[t,t+h]$, we move
        from $x(t_2)$ to $x(t_3)$ using
        a KICK step with time step $t_3-t_2$, and kick amplitude $\xi_2$, and so on, until eventually the time for the next kick exceeds $t+h$.
        The algorithm is then repeated, moving from the interval $[t,t+h]$ to the
        interval $[t+h,t+2 h]$ etc..
The presence of $\frac{\xi_i}{2} \left( I(x(t))+I(\tilde{x})\right)$ in the second stage of a KICK step instead of
$\xi_i I(x(t))$
is the algorithmic implementation of \^Ito calculus. 
In the simulations, given the parameters used, typically we have $h\approx 10^{-3}$; 
$x(t=0)$ is extracted from an appropriate chosen 
initial distribution (typically, either a $\delta(x)$ or 
$\propto e^{-x^2/(2 \sigma_0^2}$; averages are computed over
$1.6 \times 10^8$ independent stochastic trajectories, using OPENMPI for parallelization and RAN2 as basic number generator.

\subsection{Numerical Schemes for Non-Local Fokker-Planck like Equations}
The probability density $P(x,t)$ evolves according to a non-local partial differential equation of the form:
\begin{equation}
\partial_t P(x,t) = \partial_x [C(x) P(x,t)] + \hat{\mathcal{L}} P(x,t)
\end{equation}
where $C(x) $ represents a local drift velocity and $\hat{\mathcal{L}}$ is dealt with as a non-local operator generating jumps modulated by a spatial function $I(x)$.


All three jump models utilize a common numerical architecture to handle the state-dependent non-locality of $\hat{\mathcal{L}}$ 
while ensuring conservation of probability mass.
A central identity used in the discretization for all jump models is the interpretation of the operator $\hat{\mathcal{D}} = \partial_x I(x)$ as the generator of a density-conserving transformation. Unlike a simple translation operator $e^{s\partial_x}$, which merely shifts a function by a constant distance, the operator $e^{s \partial_x I(x)}$ accounts for a state-dependent velocity field $I(x)$. 

\subsubsection*{Formal Derivation of the operator via the Continuity Equation}
The action of the operator can be understood by considering the evolution of a density $P(x,s)$ under the virtual "time" parameter $s$:
\begin{equation}
\partial_s P(x,s) = \partial_x [I(x) P(x,s)]
\end{equation}
This is a one-dimensional Liouville or continuity equation. The formal solution at $s$ is given by $P(x,s) = e^{s \partial_x I(x)} P(x,0)$. Using the method of characteristics, we define the mapping $x = \phi_s(x_0)$ through the autonomous flow:
\begin{equation}
\frac{dx}{ds} = I(x), \quad x(0) = x_0
\end{equation}
The conservation of probability mass requires that $P(x,s)dx = P(x_0,0)dx_0$. Consequently, the action of the exponential operator is defined by the push-forward:
\begin{equation}
e^{s \partial_x I(x)} P(x) = P(x_{\text{dest}}(-s)) \left| \frac{\partial x_{\text{dest}}(-s)}{\partial x} \right|
\label{eq:coreshift1}
\end{equation}
For a one-dimensional autonomous flow, the Jacobian of the transformation $J = |\partial x_{\text{dest}} / \partial x_0|$ can be computed explicitly. By differentiating the characteristic equation with respect to $x_0$ and applying the chain rule, we have that:
\begin{equation}
J = \frac{I(x_{\text{dest}})}{I(x_0)}
\label{eq:coreshift2}
\end{equation}
This ratio represents the local stretching or compression of the phase space volume. If $I'(x) > 0$, the velocity increases with $x$, leading to a "dilution" of the density at the destination; conversely, $I'(x) < 0$ leads to a concentration of probability. This approach ensures that even for complex, non-linear $I(x)$ profiles, the total probability $\int P(x) dx$ remains constant to within machine precision, as any spatial "thinning" of the density is perfectly compensated by the Jacobian scaling~\cite{CROUSEILLES20101927}.

In the numerical implementation, Eqs.~(\ref{eq:coreshift1}) and~(\ref{eq:coreshift2}) allows the non-local integral to be treated as a sum of discrete, weighted transformations. 

\subsubsection*{Dichotomous (Hyperbolic Cosine) Jumps}
In this case, the particles undergo discrete jumps of magnitude $a$ in either direction. The operator is (see Eq.~\eqref{MEGx_dicho}):
\begin{equation}
\hat{\mathcal{L}}_{\text{dico}} = \frac{1}{\tau} \left[ \cosh(a \partial_x I(x)) - 1 \right] = \frac{1}{2\tau} \left[ e^{a \partial_x I(x)} + e^{-a \partial_x I(x)} - 2 \right]
\end{equation}
This is implemented by calculating two discrete advective translations at $s = \pm a$,
i.e.
\begin{align}
\hat{\mathcal{L}}_{\text{dico}} P(x,t) = \frac{1}{2\tau} &\left[ e^{a \partial_x I(x)} + e^{-a \partial_x I(x)} - 2 \right] P(x,t) \nonumber \\
= \frac{1}{2\tau} &\left[  P(x_{\text{dest}}(-a))\frac{I(x_{\text{dest}}(-a))}{I(x)}+\right. \nonumber \\
&\;\;\; \left. P(x_{\text{dest}}(a))\frac{I(x_{\text{dest}}(a))}{I(x)} - 2 P(x,t) \right]
\label{MEGx_dichoNum}
\end{align}
Eq.~\eqref{MEGx_dichoNum} has also an elegant theoretical justification,  obtained using the following formal equalities based on Lie evolution:
\begin{align}
e^{\partial_x I(x) a} P(x\pv t)
& =e^{\partial_x I(x)^\times a}\left[P(x\pv t)\right]e^{\partial_x I(x) a} \nonumber \\
&=P(x_I(x,-a)\pv t)e^{\partial_x I(x) a} \nonumber \\
&=P(x_I(x,-a)\pv t)\left(e^{\partial_x I(x) a}I(x)\frac{1}{I(x)}\right) \nonumber   \\  
& =P(x_I(x,-a)\pv t)e^{\partial_x I(x)^\times a}\left[I(x)\right]
\left(e^{\partial_x I(x) a}
\frac{1}{I(x)}\right) \nonumber  \\
&=P(x_I(x,-a)\pv t)I(x_I(x,-a)) \frac{1}{I(x)} \nonumber 
\end{align}

\subsubsection*{Gaussian Jump Operator}
The Gaussian case assumes a jump distribution following a normal profile. The operator is (see Eq.~\eqref{MEGx_gauss_}):
\begin{equation}
\hat{\mathcal{L}}_{\text{gauss}} = \frac{1}{\tau} \left[ e^{\frac{a^2}{2} (\partial_x I(x))^2} - 1 \right]
\end{equation}
This operator can be approximated via an $n$-point Gauss-Hermite quadrature~\cite{Abramowitz2012-qo}. The nodes $s_k$ and weights $w_k$ sample the Gaussian distribution to approximate the operator as a sum of discrete shifts:
\begin{equation}
\hat{\mathcal{L}}_{\text{gauss}} P \approx \frac{1}{\tau} \left[ \left( \sum_{k=1}^{n} w_k e^{s_k a \partial_x I(x)} P \right) - P \right]
\end{equation}
\subsubsection*{Uniform (Flat) Jump Operator}
The operator corresponds to a uniform jump distribution over the interval $[-\sqrt{3}a, \sqrt{3}a]$ (see Eq.~\eqref{MEGx_flat_}). In practice, the operator is rewritten as
\begin{equation}
\hat{\mathcal{L}}_{\text{flat}} = \frac{1}{\tau} \left[ \frac{\sinh(\sqrt{3} a \partial_x I(x))}{\sqrt{3} a \partial_x I(x)} - 1 \right] = \frac{1}{\tau} \left[ \frac{1}{2} \int_{-1}^{1} e^{s \sqrt{3} a \partial_x I(x)} ds - 1 \right]
\end{equation}
The integral is solved numerically using a $n$-point Gauss-Legendre quadrature~\cite{Abramowitz2012-qo}:
\begin{equation}
\hat{\mathcal{L}}_{\text{flat}} P \approx \frac{1}{\tau} \left[ \left( \sum_{k=1}^{n} w_k e^{s_k \sqrt{3} a \partial_x I(x)} P \right) - P \right]
\end{equation}
Again, each term in the quadrature sum represents a weighted advective translation.

\subsubsection*{Implementation Details}

The three cases above yield PDE's which are flux conservative ones: the typical numerical approach to integrate them would be a Lax Wendroff numerical scheme. On the other hand,
if one Taylor expands the operator for small $a$'s, the first order term $\partial_x I(x)$ vanishes: this suggests that also a Crank Nicolson numerical scheme could be worth 
investigating for the numerical integration of the PDE's. Furthermore, 
        it should be appreciated that an approach similar to the one used to deal with the Jump Operator can be applied for the advective term: the idea is that the PDE
        $ \partial_t P(x,t) = \partial_x C(x) P(x,t)$
        can be integrated using the characteristics method, moving forward the 
        probability distribution found in each grid point. This is the approach used for the 
        numerical results presented in this paper.
        
        With reference to Eq.~\eqref{ME_fin_Pois}, assuming we want to propagate
        $P(x,t)$ from $t$ to $t+h$, the  actual numerical integration was done with the
        following procedure (Strang splitting~\cite{Strang1968}, scheme $S_k^{(5)}$, where $L_x = \partial_x C(x)$ and 
        $L_y = \hat{\mathcal{L}}$):\begin{itemize}
            \item $\partial_t P(x,t) = \partial_x [C(x) P(x,t)]  P(x,t)$ was integrated
            for a time $h/2$ to get $P_{1/2}(x,t+h/2)$
            \item $\partial_t P(x,t) =  \hat{\mathcal{L}} P(x,t+h/2)$ was 
            integrated from $P_{1/2}(x,t)$ for a time $h$ using an Euler step to obtain a tentative $P^*(x,t+h)$ applying the operator to $P_{1/2}(x,t+h/2)$
            \item $\partial_t P(x,t) =  \hat{\mathcal{L}} P(x,t)$ was 
            integrated again from $P_{1/2}(x,t+h/2)$ for a time $h$ using an Euler step but applying the operator to $P^*(x,t+h)$ to get $P^{**}(x,t+h)$
            \item Heun's step: the average of $P^*(x,t+h)$ and $P^{**}(x,t+h)$ was taken, to get
            $P^{***}(x,t+h)$ 
            \item $\partial_t P(x,t) = \partial_x [C(x) P(x,t)]  P(x,t)$ was integrated
            for a time $h/2$ starting from     $P^{***}(x,t+h)$ to finally get $P(x,t+h)$
        \end{itemize}
        For the time dependent problem like the one in Eq.~\eqref{ME_fin} the procedure 
        follows immediately from the algorithm just described.
The actual integration is done with these details:
\begin{description}
\item[Characteristic curve mapping:] For any given jump 
magnitude $s$ the jump destinations are computed using a 4th-order Runge-Kutta (RK45) scheme to account for the spatial variation of $I(x)$ or $C(x)$ when 
            an analytical solution is not available. 
\item [Jacobian factor:] The density value is updated using the pre-calculated Jacobian factor $J= I(x_{\text{dest}}) / I(x_0)$ or $J= C(x_{\text{dest}}) / C(x_0)$.
\item [Interpolation:] Given that the jump 
destination $x_{\text{dest}}$ is typically off-lattice, a linear interpolation distributes probability mass to the two nearest computational grid points. Destination points outside the range of spatial integration are discarded.
\end{description}
Numerical stability is also further enforced through:
\begin{itemize}
\item Zero-Thresholding: Values below $10^{-20}$ in the $P(x,t)$ are truncated to zero; values of $I(x)$ below $10^{-10}$  are truncated to $10^{-10}$.
\item Normalization: The total probability is monitored via the trapezoidal rule and re-normalized to unity at each output step to prevent numerical drift.
\end{itemize}

\section{Conclusions\label{sec:conclusions}}
Numerous systems, such as climate dynamics (burst-like atmospheric forcing 
of slow modes), computational neuroscience (synaptic shot noise), materials 
science (intermittent crack growth), and transport phenomena with trapping 
and release events, exhibit intermittent, non-Gaussian forcing with memory. 
Deterministic drift, temporal correlations, and state-dependent responses 
interact to shape the observable dynamics and statistics in each of these situations.

A unified and flexible framework for describing such phenomena is provided by 
the stochastic model in Eq.~\eqref{SDE}: the fundamental components of 
non-Markovian dynamics driven by discrete, impulsive events are captured in a 
simple formulation by mixing deterministic drift with multiplicative renewal spike
noise. 

In this paper we have presented an exact and unified framework linking  this 
renewal spike/shot-noise to the statistical property of the variable of interest $x$ of the model \eqref{SDE}. 

Our first main result concerns the noise itself. It is a very general, 
closed-form 
expression for the $n$-time correlation functions of the 
spike/shot-renewal processes, 
formulated as a sum over all ordered partitions of the observation times 
(Eq.~\eqref{corrGen_delta} or Eq.~\eqref{corrGen_delta_pre} for a more
compact form). This formulation highlights the role of renewal 
blocks, the rate function $R(t)$, and the jump moments, while interfacing naturally 
and efficiently with the $G$-cumulant formalism.

Our second contribution is the derivation, via $G$-cumulants, of an exact and 
non-asymptotic ME for the PDF of the variable of interest $x$ of the
model~\eqref{SDE}, valid for arbitrary WT distributions and jump-amplitude 
PDFs with finite moments (Eq.~\eqref{ME_fin}). 
To the best of our knowledge, this is the first
time an ME of such generality has been obtained. The equation captures the full
coupling between drift and intermittent forcing, demonstrating that the familiar
additive decomposition of deterministic and stochastic currents--which 
is exact in the Poissonian case--breaks down under heavy-tailed 
renewal statistics.

Notably, written in the interaction representation, the structure of the
general exact ME~\eqref{ME_fin} is identical to that of the standard
Montroll-Weiss-Scher 
result~\citep{Montroll_Weiss_1965,Scher_Motroll_1975},
which applies to the simple drift-less and additive-noise CTRW case
(see Section~\ref{sec:Gcumulants_x}).

Most importantly, this exact nonlocal ME collapses at long times onto a
\emph{universal local-in-time structure} (Eq.~\eqref{ME_universal_intro})
parameterized by the \textit{instantaneous} renewal rate $R(t)$. This
structural simplification is a central result of this work; it is not a
phenomenological closure but follows systematically from the
$G$-cumulant formulation and preserves the nontrivial coupling between
drift and renewal statistics. The universal ME in
Eq.~\eqref{ME_universal_intro} recovers the exact Poissonian result when
$R(t) =\tau^{-1}$ and, by virtue of the Blackwell renewal theorem
\citep{Blackwell1948,FellerVol2}, converges to the same Poissonian limit
whenever the mean waiting time $\tau$ is finite. Furthermore, it
quantitatively captures the progressive weakening of stochastic forcing
for power-law waiting times with $1 < \mu < 2$. Extensive numerical
tests show that this local form reproduces the full nonlocal dynamics of
Eq.~\eqref{ME_fin} with remarkable accuracy, even for short times and
well beyond the regime of separation of time scales invoked by
Proposition~\ref{prop:approxME} to justify it.

Moreover, this simple and local-in-time ME structure allows standard
analytical treatments to gain insight into the PDF, such as the
eigenvector/eigenvalue technique or local/asymptotic analysis. Some
examples of these approaches are presented in Appendix~\ref{app:ME_resuls}.

This framework therefore provides a straightforward route for deriving macroscopic 
evolution equations directly from renewal microdynamics. It is immediately applicable 
to systems driven by intermittent, state-dependent impulses, such as impulsively 
forced climate modes, synaptic shot noise in computational neuroscience, and fatigue 
or crack-growth processes in materials science.

In the present work, we considered the standard setting in which the jump-amplitude 
and waiting-time distributions factorize. However, the extension to more general 
cases where these two random variables are statistically dependent is conceptually 
straightforward. Future research may address multidimensional systems, heavy-tailed 
jump amplitudes, and the data-driven inference of renewal statistics, all of which 
represent promising directions for further development.

\section*{Data availability}
Data are available upon reasonable request.

\section*{Acknowledgments}
We thank the Green Data Center of University of Pisa for providing the computational power needed for the present paper. 

\section*{Funding}
This work was supported by “National Centre for HPC,
Big Data and Quantum Computing,” under the National Recovery and Resilience Plan 
(NRRP), Mission 11 4 Component 2 Investment 1.4 funded from the European Union – 
NextGenerationEU. 

\section*{Appendices}
\appendix
\section{The \mbox{$n$}-time correlation function for the spike noise with renewal\label{app:spikes}}

In this appendix we start from the definition of the multi-time joint
correlation function given in Eq.~\eqref{n_corr_def_2_delta}, with the
aim of deriving a general analytical expression. Since the procedure is
conceptually straightforward but algebraically cumbersome, we first
illustrate the main steps in the simplified case $n=4$, assuming that
the odd moments of $\xi$ vanish (i.e., $p(\xi)$ is an even function).
        
        \subsection{The rigorous derivation of the four-time correlation function for the renewal spike noise\label{app:4-time}}
Starting from the definition of the multi-time correlation function given in
Eq.~\eqref{n_corr_def_2_delta}, we rigorously derive the same result as in
Eq.~\eqref{cor4t_delta} for the four-time correlation function of $\xi[t]$. 

For $n=4$ and assuming the times are ordered, i.e., $t_i\le t_{i+1}$, for any $i\in [1,n-1]$, it is convenient to rewrite
Eq.~\eqref{n_corr_def_2_delta} as follows (we set $t_0 = 0$ and we define $\theta_0=0$):
\begin{align}
\label{4_corr_def_2_delta}
&  \langle\xi( t_1)\xi( t_2)\xi( t_3)\xi( t_4)\rangle\nonumber \\
&=
\int  \sum_{i=0}^{\infty}\;      \sum_{j=i}^{\infty}\sum_{l=j}^{\infty}\sum_{m=l}^{\infty} 
\xi_{i}  \delta\left(t_1-\sum_{k_1=0}^i\theta_{k_1}\right)      \nonumber \\
&\times
\xi_{j}\delta\left(t_2-\sum_{k_1=0}^i\theta_{k_1}-\xsum_{k_2=i+1}^j\theta_{k_2}\right)          \nonumber \\
&\times
\xi_{l}\delta\left(t_3-\sum_{k_1=0}^i\theta_{k_1}-\xsum_{k_2=i+1}^j\theta_{k_2}-\xsum_{k_3=j+1}^l\theta_{k_3}\right) \nonumber \\
&\times
\xi_{m}\delta\left(t_4-\sum_{k_1=0}^i\theta_{k_1}-\xsum_{k_2=i+1}^j\theta_{k_2}-\xsum_{k_3=j+1}^l\theta_{k_3}-\xsum_{k_4=l+1}^m\theta_{k_4}\right)
\nonumber \\
&\times
p_0(\xi_0)\text{d}\xi_0\prod_{q=1}^\infty\psi(\theta_{\!q})\text{d}\theta_{\!q}\,p(\xi_q)\text{d}\xi_q
\end{align}
where we continue adopting the convention that a
primed summation is zero if the upper limit is smaller than the lower one. As we have already observed, 
this condition can occur when two or more times are equal.
The first Dirac delta function, namely
$\delta\!\left(t_1 - \sum_{k_1 = 0}^i \theta_{k_1}\right)$, when
multiplied by the product of the WT PDFs associated with the intervals
$\theta_{k_1}$ and integrated over these waiting times, yields the
probability that the $i$th transition event occurs at time $t_1$, as
defined in Eq.~\eqref{psi_i}. Therefore, after integrating over
$\theta_1,\theta_2...,\theta_i$ we get
\begin{align}
\label{4_corr_def_2_delta_3}
&  \langle\xi( t_1)\xi( t_2)\xi( t_3)\xi( t_4)\rangle\nonumber \\
&=
\int  \sum_{i=0}^{\infty}\;      \sum_{j=i}^{\infty}\sum_{l=j}^{\infty}\sum_{m=l}^{\infty} 
\xi_{i}  \psi_{i}(t_1)
\nonumber \\
&\times
\xi_{j}\delta\left(t_2-t_1-\xsum_{k_2=i+1}^j\theta_{k_2}\right)           \psi(\theta_{i+1})...\psi(\theta_{j}) 
\text{d}\theta_{i+1}...\text{d}\theta_{j}  \nonumber \\
&\times
\xi_{l}\delta\left(t_3-t_1-\xsum_{k_2=i+1}^j\theta_{k_2}-\xsum_{k_3=j+1}^l\theta_{k_3}\right) 
\psi(\theta_{j+1})...\psi(\theta_{l}) \text{d}\theta_{j+1}...\text{d}\theta_{l} \nonumber \\
&\times
\xi_{m}\delta\left(t_4-t_1-\xsum_{k_2=i+1}^j\theta_{k_2}-\xsum_{k_3=j+1}^l\theta_{k_3}-\xsum_{k_4=l+1}^m\theta_{k_4}\right)
\nonumber \\
&\times           \psi(\theta_{l+1})...\psi(\theta_{m}) \text{d}\theta_{l+1}...\text{d}\theta_{m}
\nonumber \\
&\times
p_0(\xi_0)\text{d}\xi_0\prod_{q=1}^\infty p(\xi_q)\text{d}\xi_q.
\end{align}

As before, the first Dirac delta in
Eq.~\eqref{4_corr_def_2_delta_3} and the related WT PDF, when integrated over the corresponding waiting times, gives the probability that a the time $t_2-t_1$ 
we have exactly the $j-i$th transition event. Thus, we can write
\begin{align}
\label{4_corr_def_2_delta_5}
&  \langle\xi( t_1)\xi( t_2)\xi( t_3)\xi( t_4)\rangle\nonumber \\
&=
\int  \sum_{i=0}^{\infty}\;      \sum_{j=i}^{\infty}\sum_{l=j}^{\infty}\sum_{m=l}^{\infty} 
\xi_{i}  \psi_{i}(t_1)
\nonumber \\
&\times
\xi_{j}\psi_{j-i}\left(t_2-t_1\right)  \nonumber \\
&\times
\xi_{l}\delta\left(t_3-t_2-\xsum_{k_3=j+1}^l\theta_{k_3}\right)
\psi(\theta_{j+1})...\psi(\theta_{l}) \text{d}\theta_{j+1}...\text{d}\theta_{l}\nonumber \\
&\times
\xi_{m}\delta\left(t_4-t_2-\xsum_{k_3=j+1}^l\theta_{k_3}-\xsum_{k_4=l+1}^m\theta_{k_4}\right)
\psi(\theta_{l+1})...\psi(\theta_{m}) \text{d}\theta_{l+1}...\text{d}\theta_{m}
\nonumber \\
&\times
p_0(\xi_0)\text{d}\xi_0\prod_{q=1}^\infty p(\xi_q)\text{d}\xi_q.
\end{align}
Repeating this procedure for all intermediate variables $\theta$, we obtain
\begin{align}
\label{4_corr_def_2_delta_6}
&  \langle\xi( t_1)\xi( t_2)\xi( t_3)\xi( t_4)\rangle\nonumber \\
&=
\sum_{i=0}^{\infty}\;      \sum_{j=i}^{\infty}\sum_{l=j}^{\infty}\sum_{m=l}^{\infty} 
\psi_{i}(t_1)
\psi_{j-i}\left(t_2-t_1\right) 
\psi_{l-j}\left(t_3-t_2\right)
\psi_{m-l}\left(t_4-t_3\right)  \nonumber \\
&\times
\int    \xi_{i}         \xi_{j} \xi_{l} \xi_{m}         p_0(\xi_0)\text{d}\xi_0\prod_{q=1}^\infty p(\xi_q)\text{d}\xi_q.
\end{align}

To be consistent with
Eq.~\eqref{cor4t_delta}, we now implement the assumption that all odd moments of $\xi$
vanish.
When averaging over the four $\xi$ variables in
Eq.~\eqref{4_corr_def_2_delta_6}, this assumption implies that the correlation is non-zero
only in two cases: either $i = j = l = m$, or $i = j$, $j\ne l$ and $l = m$. These cases correspond
to the following time coincidences (we recall that we have defined $\psi_0(t):=\delta(t)$):
\begin{itemize}
\item $t_1=t_2=t_3=t_4$,
\item $t_1=t_2$, $t_2\ne t_3$ and $t_3=t_4$,
\end{itemize}   
respectively. Thus we get:
\begin{align}
\label{4_corr_def_2_delta_7}
&  \langle\xi( t_1)\xi( t_2)\xi( t_3)\xi( t_4)\rangle\nonumber \\
&=      \left(\overline{\xi_0^4}\,\delta(t_1)+\overline{\xi^4}\sum_{i=1}^{\infty}       \psi_{i}(t_1)\right)
\delta\left(t_2-t_1\right)  
\delta\left(t_3-t_2\right)  
\delta\left(t_4-t_3\right)  \nonumber \\
&+
\left(\overline{\xi_0^2}\,\delta(t_1)+\overline{\xi^2}\sum_{i=1}^{\infty}       \psi_{i}(t_1)\right)
\delta\left(t_2-t_1\right)  
\overline{\xi^2}\sum_{l=i+1}^{\infty}   \psi_{l-i}\left(t_3-t_2\right)
\delta\left(t_4-t_3\right)      
\end{align}

Finally, by using the definition of the rate function $R$ of Eq.~\eqref{Rtilde}, we obtain
\begin{align}
\label{cor4t_delta_2}
&\langle\xi( t_1)\xi( t_2)\xi( t_3)\xi( t_4)\rangle\nonumber \\
&       =       \left[ \overline{\xi_0^4} \,\delta(t_1)+ \overline{\xi^4}\, R(t_1)\right]  
\delta(t_2-t_1) \delta(t_3-t_2) \delta(t_4-t_3)
\nonumber\\
& 
+\overline{\xi^2}\left[\overline{\xi_0^2}\,\delta(t_1)
+\overline{\xi^2}\,R(t_1) \right] \delta(t_2-t_1) R(t_3-t_2) \delta(t_4-t_3).
\end{align}
Observing that here we have set $t_0=0$, Eq.~\eqref{cor4t_delta_2} is the same as   Eq.~\eqref{cor4t_delta},
ending the demonstration.

We now note that, according to  
the definition of the primed summation in  
Eqs.~\eqref{4_corr_def_2_delta}--\eqref{4_corr_def_2_delta_5}, when $t_3 = t_2$
the second term on the right-hand side of Eq.~\eqref{cor4t_delta_2}  
should be discarded.  On the other hand, if $t_3 = t_2 + \epsilon$, with  
$\epsilon$ small but nonzero,  the same term should be retained (while the first one is zero). In this case,  
the second term approximately takes the form:
\begin{align}
\label{epsilon0}
&\overline{\xi^2} \left[\overline{\xi_0^2}\,\delta(t_1) \overline{\xi^2}\,R(t_1) \right] 
\delta(t_2 - t_1) R(t_3 - t_2) \delta(t_4 - t_3) \nonumber \\
\approx\,&\overline{\xi^2} \left[\overline{\xi_0^2}\,\delta(t_1) \overline{\xi^2}\,R(t_1) \right] 
\delta(t_2 - t_1) R(0) \delta(t_4 - t_3) \ne 0.
\end{align}
The final inequality in \eqref{epsilon0} holds because $R(0) = \psi(0)$, which is nonzero. 

Therefore, from a  
rigorous standpoint, the two terms on the right-hand side of Eq.~\eqref{cor4t_delta_2}  
should not be added together. Instead, they should be treated as mutually exclusive: if all  
time arguments exactly coincide, only the first term is retained; otherwise, only the second term  
should be kept.  

However, the first term is nonzero only when $\epsilon = 0$, i.e., for $t_3 = t_2$, where its  
value is infinite. In contrast, the second term remains finite for $\epsilon=0$ at is apparent
from \eqref{epsilon0}, thus it is negligible with respect to the first one.
Therefore, in Eq.~\eqref{cor4t_delta_2}   we may include both  
terms regardless of the relationship between $t_3$ and $t_2$ without affecting the final  
result.

          \subsection{The general case\label{sec:n-time}}     
        
        For the reader's convenience, we rewrite the  multi-time joint
correlation function given in Eq.~\eqref{n_corr_def_2_delta} as:
{\small 
\begin{align}
\label{app:n_corr_def_2_delta}
&  \langle\xi( t_1)\xi( t_2)...\xi( t_n)\rangle_{t_0}=
\!
\int  \sum_{i_1=0}^{\infty}\;    \sum_{i_2=i_1}^{\infty} \dots\sum_{i_n=i_{n-1}}^{\infty} 
\xi_{i_1}  \xi_{i_2}\cdots \xi_{i_n} 
\; \delta\left(t_1-t_0-\sum_{k_1=0}^{i_1} \theta_{k_1}\right)\nonumber \\
&\times
\delta\left(t_2-t_0-\sum_{k_2=0}^{i_2} \theta_{k_2}\right)
\times...
\delta\left(t_n-t_0-\sum_{k_n=0}^{i_n} \theta_{k_n}\right)
p_0(\xi_0)\text{d}\xi_0        \nonumber \\&\times 
\prod_{q=1}^\infty\psi(\theta_{\!q})\text{d}\theta_{\!q}\,p(\xi_q)\text{d}\xi_q
\end{align}
}
As in the previous $n=4$ case, one could collect the WT PDFs in groups of
``$i$'' elements, corresponding to the sets of ``$i$'' waiting times
involved in each Dirac delta function, and then perform the integrations
over these waiting times. This yields $\psi_i(\theta)$, i.e., the WT PDFs associated
with the $i$ jumps. However, here we adopt an equivalent but alternative procedure:
we perform variable transformations that allow us to move from the waiting
times (i.e., the time intervals between successive events) to the
absolute times of the events.The first   Dirac-delta function in Eq.~\eqref{app:n_corr_def_2_delta}
says that the first event, after the initial time $t_0$, happens at the time 
$u_{i_{1}}=\theta_0+\theta_1+\theta_2+...+\theta_{i_{1}}$ (note: $\theta_0:=0$). Thus, we perform
the  change of variable 
$\theta_1\to u_{i_{1}}=\theta_0+\theta_1+\theta_2+...+\theta_{i_{1}}$, i.e., $\theta_{1}=u_{i_{1}}-\xsum_{k_1=2}^{i_1}\theta_{k_1}$,
where the summation with a prime symbol indicates that it equals zero when the upper  
limit is less than the lower limit. 

Doing that, the multiple integral of Eq.~\eqref{n_corr_def_2_delta} that involves the times $\theta_2,\theta_3, ...,\theta_{i_{1}}$ gives: 
{\small 
\begin{align}
\label{psi_i_2}
&
\int_0^{u_{i_{1}}} \text{d}\theta_2
\int_0^{u_{i_{1}}-\theta_2} \text{d}\theta_3
\int_0^{u_{i_{1}}-\theta_2-\theta_3} \text{d}\theta_4 \dots
\int_0^{u_{i_{1}}-\theta_2 \dots -\theta_{{i_1}-1}} \text{d}\theta_{i_1} \nonumber \\
&\psi(u_{i_{1}}-\theta_2-\theta_3-\theta_4- \dots -\theta_{{i_1}})
\psi(\theta_2)\psi(\theta_3)\psi(\theta_4) \dots \psi(\theta_{{i_1}}):=\psi_{i_1}(u_{i_{1}}).
\end{align}
}
%
It is not difficult to realize that  
Eq.~\eqref{psi_i_2} is an alternative way to write the  ${i_1}$-fold convolution of the
WT PDF, here evaluated at the time $u_{i_{1}}$. In other word,   $\psi_{i_1}(u_{i_{1}})$
of the r.h.s. of 
Eq.~\eqref{psi_i_2} is the same function already introduced in Eq.~\eqref{Rtilde}.
Note that setting $\psi_{0}(\theta):=0$, as already done in Section~\ref{sec:preliminary},
we include also the case where $i_1=0$. 
Thus,
after this change of variable, Eq.~\eqref{n_corr_def_2_delta} becomes
{\small 
\begin{align}
\label{n_corr_def_2_delta_2}
&  \langle\xi( t_1)\xi( t_2)...\xi( t_n)\rangle_{t_0}
=
\int  \sum_{i_1=0}^{\infty}\;    \sum_{i_2=i_1}^{\infty} \dots\sum_{i_n=i_{n-1}}^{\infty} 
\xi_{i_1}  \xi_{i_2}\cdots \xi_{i_n} 
\psi_{i_1}(u_{i_{1}})\delta\left(t_1-t_0-u_{i_{1}}\right)\nonumber \\
&\times
\delta\left(t_2-t_0-u_{i_{1}}-\xsum_{k_2=i_1+1}^{i_2} \theta_{k_2}\right)\times \cdots\nonumber \\
&
\cdots\times \delta\left(t_n-t_0-u_{i_{1}}-
\xsum_{k_2=i_1+1}^{i_2} \theta_{k_2}-...-\xsum_{k_n=i_{n-1}+1}^{i_n} \theta_{k_n}\right)
\nonumber \\ &
\times \text{d}u_{i_{1}}\,\left(\prod_{r=i_1+1}^\infty\psi(\theta_r)\text{d}\theta_r\,\right)
p_0(\xi_0)\text{d}\xi_0\, \prod_{q=1}^\infty p(\xi_q)\text{d}\xi_q.
\end{align}
}
%
%
where the summation with a prime symbol indicates that it equals zero when the upper  
limit is less than the lower limit. 
Now we consider the second Dirac-delta function and we repeat the same change of variables concerning
the sum of waiting times in its argument, i.e., $\theta_{i_1+1}
=u_{i_{2}}-\xsum_{k_2=i_1+2}^{i_2}\theta_{k_2}$ and  Eq.~\eqref{n_corr_def_2_delta_2} reads
{\small 
\begin{align}
\label{n_corr_def_2_delta_2_}
&  \langle\xi( t_1)\xi( t_2)...\xi( t_n)\rangle_{t_0}
=\int  \sum_{i_1=0}^{\infty}\;  \sum_{i_2=i_1}^{\infty} 
\dots\sum_{i_n=i_{n-1}}^{\infty} \xi_{i_1}  \xi_{i_2}\cdots \xi_{i_n} 
\psi_{i_1}(u_{i_{1}})\delta\left(t_1-t_0-u_{i_{1}}\right)
\nonumber \\&\times
\psi_{i_2-i_1}(u_{i_{2}})
\delta\left(t_2-t_0-u_{i_{1}}-u_{i_{2}}\right)
\cdots
\delta\left(t_n-t_0-u_{i_{1}}-u_{i_{2}}-...-\xsum_{k_n=i_{n-1}+1}^{i_n} \theta_{k_n}\right)
\nonumber \\ 
&\times \text{d}u_{i_{1}}\text{d}u_{i_{2}}\left(\prod_{r=i_2+1}^\infty\psi(\theta_r)\text{d}\theta_r\,\right)p_0(\xi_0)\text{d}\xi_0\, \prod_{q=1}^\infty p(\xi_q)\text{d}\xi_q
\end{align}
} 
%
%
By applying the same procedure to the waiting times of all the other Dirac-delta functions, we get
{\small 
\begin{align}
\label{n_corr_def_2_delta_n}
&  \langle\xi( t_1)\xi( t_2)...\xi( t_n)\rangle_{t_0}
=\int  \sum_{i_1=0}^{\infty}\;  \sum_{i_2=i_1}^{\infty} \dots\sum_{i_n=i_{n-1}}^{\infty} \xi_{i_1}  \xi_{i_2}\cdots
\xi_{i_n} \psi_{i_1}(u_{i_{1}})\delta\left(t_1-t_0-u_{i_{1}}\right)
\nonumber \\&\times
\psi_{i_2-i_1}(u_{i_{2}})
\; \delta\left(t_2-t_0-u_{i_{1}}-u_{i_{2}}\right)
\cdots
\nonumber \\
&\cdots \times
\psi_{i_n-i_{n-1}}(u_{i_{n}})\delta\left(t_n-t_0-u_{i_{1}}-u_{i_{2}}-...-u_{i_{n-1}}-u_n\right)
&\nonumber \\ &\times 
\text{d}u_{i_{1}}\text{d}u_{i_{2}}...\text{d}u_{i_{n}}
p_0(\xi_0)\text{d}\xi_0\, \prod_{q=1}^\infty p(\xi_q)\text{d}\xi_q
\end{align}
}

By exploiting the property of the Dirac-delta function, the  integrals with respect to the absolute times $u_1, u_2,...,u_n$  can be easily performed, giving:
{\small 
\begin{align}
\label{n_corr_def_2_delta_n_1}
&  \langle\xi( t_1)\xi( t_2)...\xi( t_n)\rangle_{t_0}
=\int  \sum_{i_1=0}^{\infty}\;   \sum_{i_2=i_1}^{\infty} \dots\sum_{i_n=i_{n-1}}^{\infty} 
\xi_{i_1}  \xi_{i_2}\cdots
\xi_{i_n} \
\psi_{i_1}(t_1-t_0) \nonumber \\ 
&\times \psi_{i_2-i_1}(t_2-t_1)
\cdots \psi_{i_n-i_{n-1}}(t_n-t_{n-1})
p_0(\xi_0)\text{d}\xi_0\, \prod_{q=1}^\infty p(\xi_q)\text{d}\xi_q
\end{align}
}
%
%
We recall that $\psi_{i_k}(t)$ is the PDF to have the $i_{k}+1$-th event at the time $t$ after 
the previous event. Therefore, each term of the sum of Eq.~\eqref{n_corr_def_2_delta_n_1} says that at the time
$t_k$ we have one event (the $(i_k+1)$-th)  of $\xi$ with intensity $\xi_{i_k}$ (this is in agreement
with points 1 and 2 of Section~\ref{sec:corr_leapers_2-4}). 

Now, observing that

\begin{enumerate}
\item 
for${i_{k+1}=i_k}$ we have that $t_{k+1}=t_k$;
\item 
we have ``$i_1\le i_2\le ...\le i_n$''  thus  
$\xi_{i_1}  \xi_{i_2}\cdots \xi_{i_n}$ is a set of ordered factors;
\item 
from the multiple sum of Eq.~\eqref{n_corr_def_2_delta_n_1}
we see that the cases in which a set of $j$ indices are equals, i.e.,   ${i_k}= {i_{k+1}}=\dots= {i_{k+j}}$ (the
lower extreme  of each sum), we have $\xi_{i_{k}}\xi_{i_{k+1}}\dots\xi_{i_{k+j}}=\xi_{i_k}^j$, and $t_k=t_{k+1}=\dots=t_{k+j}$; 
\end{enumerate}
then, for any ``$\infty^n$'' terms  of the multiple sum Eq.~\eqref{n_corr_def_2_delta_n_1}, after averaging
over the fluctuations of $\xi$, we can rearrange the terms
partitioning  the indices  ``$i_1,i_2,...,i_n$''
in blocks 
$\left\{ m_{i_1}\right\}\left\{ m_{i_2}\right\}...\left\{ m_p\right\}$, where
in each block there is a number  $m_i$ of consecutive indices with the same value 
{\small 
\begin{align}
\label{compositions}
&   \left\{i_1,\,i_2=i_1,\,...,i_{m_{i_1}}=i_1\right\}
\left\{
i_{m_{i_1}+1},\,i_{m_{i_1}+2}=i_{m_{i_1}+1},...,i_{m_{i_1}+m_{i_2}}=i_{m_{i_1}+1}\right\}\,...
\nonumber \\[5pt]
&\,\left\{
\,i_{m_{i_1}+m_{i_2}+\dots+m_{p-1}+1},i_{m_{i_1}+m_{i_2}+\dots+m_{p-1}+2}=i_{m_{i_1}+m_{i_2}+\dots+m_{p-1}+1},...\right.
\nonumber \\
&\left. ...
i_{m_{i_1}+m_{i_2}+\dots+m_{p-1}+m_{p}}=i_{m_{i_1}+m_{i_2}+\dots+m_{p-1}+1}\right\}
\end{align}
}
leading to a factor $(\overline{\xi^{m_{i_1}}})(\overline{\xi^{m_{i_2}}})...(\overline{\xi^{m_p}})$
in each term of \eqref{n_corr_def_2_delta_n_1}. Of course,
there are 
$2^{n-1}$ such possible partitions (a block separator between any  position can be turned on or of). 

To simplify the notation,  we rewrite  \eqref{n_corr_def_2_delta_n_1} by exploiting the following change of variables:where $j_1:=i_1$ and $j_k=i_k-i_{k-1}$:
{\small 
\begin{align}
&  \langle\xi( t_1)\xi( t_2)...\xi( t_n)\rangle_{t_0}
\nonumber \\&
=
\int 
\sum_{j_1=0}^{\infty} \sum_{j_2=0}^{\infty} \cdots\sum_{j_n=0}^{\infty} 
\xi_{j_1}  \xi_{j_1+j_2}\cdots
\xi_{j_1+j_2+...+j_n} \psi_{j_1}(t_1-t_0)\psi_{j_2}(t_2-t_1)
\; \times...\nonumber \\&
..\times \psi_{j_n}(t_n-t_{n-1})
p_0(\xi_0)\text{d}\xi_0\, \prod_{q=1}^\infty p(\xi_q)\text{d}\xi_q
\label{n_corr_def_2_delta_n_1_bis}
\end{align}
}

Applied to Eq.~\eqref{n_corr_def_2_delta_n_1_bis}, the aforementioned partition corresponds
to  any composition of the set $j_1,j_2,j_3,...,j_n$ in blocks 
$\left\{ m_{1}\right\}\left\{ m_{2}\right\}...\left\{ m_p\right\}$, where
in each block there is an index $j_k$ different to zero followed by a number $m_i-1$ of consecutive indices $j$ put equal to zero:
{\small 
\begin{align}
&\{j_1,j_2=0,j_3=0,...,j_{m_{1}}=0\}\{j_{m_{1}+1}\ne0,j_{m_{1}+2}=0 ,j_{m_{1}+3}=0 ,\dots,j_{m_{1}+m_{2}}=0\}
\dots\nonumber \\
&\dots\{j_{\underbrace{m_1+m_{2}+\dots+m_{p-1}}_{=n-m_p}+1}\ne0,j_{m_1+m_{2}+\dots+m_{p-1}+2}=0,\dots
\nonumber\\
&\dots,
j_{\underbrace{m_1+m_{2}+\dots+m_{p-1}+m_p}_{=n}}=0\}
\end{align}
}
\noindent
in which $m_1+m_{2}+...+m_{p-1}+m_p=n$. Of course,
for any fixed number $p$ of blocks, there are  
$N(p)=\frac{\left(n-1\right)!}{(p-1) !\left[n-p\right] !}$) possible compositions. 
Thus the total number of compositions is still 
$\sum_{p=1}^{n} N(p)=2^{n-1}$. In this way,  Eq.~\eqref{n_corr_def_2_delta_n_1_bis}
becomes:
{\small 
\begin{align}
\label{corrGen_delta_temp_}
&  \langle\xi( t_1)\xi( t_2)...\xi( t_n)\rangle_{t_0}
=  
\sum_{p=1}^{ n} 
\sum_{\substack{\{m_{i}\}:\\ \sum_{i=1}^p m_{i}={ n}}}
\;\sum_{j_1=0}^{\infty} \,\sum_{j_{m_{1}+1}=1}^{\infty} \cdots\sum_{j_{n-m_p+1}=1}^{\infty} 
(\overline{\xi^{m_{1}}})(\overline{\xi^{m_{2}}})...(\overline{\xi^{m_p}})
\nonumber \\
&\times
\psi_{j_{1}}(t_{1}-t_{0})\psi_{0}(t_{2}-t_{1})
\dots\psi_{0}(t_{m_{1}}-t_{m_{1}-1}) \nonumber \\
&\times
\psi_{j_{m_{1}+1}}(t_{m_{1}+1}-t_{m_{1}})
\psi_{0}(t_{m_{1}+2}-t_{m_{1}+1})
\dots\psi_{0}(t_{m_{1}+m_2}-t_{m_{1}+m_2-1})\times...\nonumber
\\
& ...\times\psi_{j_{n-m_p+1}}(t_{n-m_p+1}-t_{n-m_p})\psi_{0}(t_{n-m_p+2}-t_{n-m_p+1})...\psi_{0}(t_{n}-t_{n-1})  
\end{align}
}

Renaming the indices $j_k$ of the last multiple sum as $j_1=l_1$,
$j_{m_{1}+1}=l_2$,...,$j_{n-m_p+1}=l_p$, we get%
{\small 
\begin{align}
\label{corrGen_delta_temp_2} 
&  \langle\xi( t_1)\xi( t_2)...\xi( t_n)\rangle_{t_0}=
\sum_{p=1}^{ n} 
\sum_{\substack{\{m_{i}\}:\\ \sum_{i=1}^p m_{i}={ n}}}
%
\;\sum_{l_1=0}^{\infty}\; \sum_{l_2=1}^{\infty} \cdots\sum_{l_p=1}^{\infty} 
%
(\overline{\xi^{m_{1}}})(\overline{\xi^{m_{2}}})...(\overline{\xi^{m_p}})
\nonumber \\&
\psi_{l_1}(t_{1}-t_{0})\psi_{0}(t_{2}-t_{1})
\dots \psi_{0}(t_{m_{1}}-t_{m_{1}-1})
\nonumber
\\
& \times
\psi_{l_2}(t_{m_{1}+1}-t_{m_{1}})\psi_{0}(t_{m_{1}+2}-t_{m_{1}+1})
\dots \psi_{0}(t_{m_{1}+m_2}-t_{m_{1}+m_2-1})\times\dots\nonumber
\\
& \dots\times\psi_{l_p}(t_{n-m_p+1}-t_{n-m_p})\psi_{0}(t_{n-m_p+2}-t_{n-m_p+1})...\psi_{0}(t_{n}-t_{n-1})  
\end{align}  
}
After the sum over all the $l_k$ indices, and considering Eq.~\eqref{Rtilde} and that $\psi_{0}(t)=\delta(t)$, Eq.~\eqref{corrGen_delta_temp_2} becomes:
{\small 
\begin{align}
\label{corrGen_delta_temp_3}
&  \langle\xi( t_1)\xi( t_2)...\xi( t_n)\rangle_{t_0}=
\sum_{p=1}^{ n} 
\sum_{\substack{\{m_{i}\}:\\ \sum_{i=1}^p m_{i}={ n}}}
%
(\overline{\xi^{m_{1}}})(\overline{\xi^{m_{2}}})...(\overline{\xi^{m_p}})
\nonumber \\&
\tilde R(t_{1}-t_{0})  \delta(t_{2}-t_{1})
\dots \delta(t_{m_{1}}-t_{m_{1}-1})\times\nonumber
\\
&R(t_{m_{1}+1}-t_{m_{1}})\delta(t_{m_{1}+2}-t_{m_{1}+1})
\dots\delta(t_{m_{1}+m_2}-t_{m_{1}+m_2-1})\times\dots\nonumber
\\
& \dots\times R(t_{n-m_p+1}-t_{n-m_p})\delta(t_{n-m_p+2}-t_{n-m_p+1})...\delta(t_{n}-t_{n-1})  
\end{align}  
}
that is equal to Eq.~\eqref{corrGen_delta}.\\

\section{Brief summary of the basic concepts of $M$-cumulants\label{app:M-cumulants}}
\subsection{The  definition and use of $G$-cumulants\label{sec:Gcumulants}}
A powerful method for deriving the  ME for the random variable $x$ in Eq.~\eqref{SDE}, particularly in the 
general case where the drift term $-C(x) $ is nonzero and the noise may be multiplicative (i.e., $I(x) \ne \text{const}$), involves 
fully exploiting the concept of $G$-cumulants, as developed in a series of previous works~\cite{bbJSTAT4, bCSF148, bCSF159}.  
Therefore, in this section, we shortly derive the $G$-cumulants of the  spike stochastic renewal process $\xi[t]$, 
and from this result, we will try to obtain the ME for the random variable  $x(t)$ of the SDE~\eqref{SDE}.

\subsection{$G$-cumulants for  a stochastic process $\xi[t]$\label{sec:Gcumulants_xi}}
Joint cumulants for a random variable are defined from the characteristic function (CF) of  the same process.  
The CF of a random variable is the Fourier transform
of the PDF of the variable. A stochastic process $\xi[u]$ can be viewed as a multivariate random variable whose components 
are indexed by a parameter, typically the time $u\in [0,t]$,
that ranges over a real interval rather than a discrete set of integers. Therefore,  its CF is defined as (we set $t_0=0$):
\begin{align}
\label{MG}
&       \hat{P}_{\xi[t]}(k(\cdot)\pv t)
:=\langle \exp\left[\text{i}\int_{0}^t \text{d}u\,k(u) \xi(u)\right]\rangle
\nonumber \\
&=\sum_{n=0}^\infty \frac{\I ^n}{n!}
\int^t_0{\text{d}u_n}\int^t_0{\text{d}u_{n-2}}\dots 
\int^t_0\text{d}u_1\,
k(u_1)k(u_2)\dots \times 
\nonumber \\
&       \dots \times k(u_n)
\langle \xi (u_1)\xi (u_2)\dots \xi (u_n)\rangle%
\end{align}
where $k(\cdot)$, the continuous version of the wave vector $\vec{k}$,  is any ``enough'' smooth function of $u$: $k(u)\in\R$ with $u\in[0,t]$. 
As it is apparent from the series expansion in Eq.~\eqref{MG},  the CF is also the generator function of the multi-time correlation functions (or multivariate moments).

Indicating with $\doubleangle{\xi (u_1)\xi (u_2)\dots \xi (u_n)}^{(G)}$ the $n$-time $G$-cumulant  of $\xi[t]$, the corresponding 
generator ${K}^{(G)}_{\xi[t]}(k(\cdot)\pv t)$ is then given by
\begin{align}
\label{CG}
&       {K }^{(G)}_{\xi[t]}(k(\cdot)\pv t)
:=\sum_{n=1}^\infty \frac{\I ^n}{n!}
\int^t_0{\text{d}u_n}\int^t_0{\text{d}u_{n-2}}\dots 
\int^t_0\text{d}u_1\,
k(u_1)k(u_2)\dots \times 
\nonumber \\
&       \dots \times k(u_n)
\doubleangle{\xi (u_1)\xi (u_2)\dots \xi (u_n)}^{(G)}
\end{align}

Correlation functions (or multivariate moments) and $G$-cumulants are related to each other by the following definition involving
the corresponding generator functions: 
(see~\cite[Section 4.4.3]{bbJSTAT4}):
\begin{equation}
\label{MG2}
\hat{{P }}_{\xi[t]}(k(\cdot)\pv t)=        \exp_{G}\left[{K}^{(G)}_{\xi[t]}(k(\cdot)\pv t)\right].
\end{equation}

Equation~\eqref{MG2} corresponds to the standard definition of the cumulant
generating function. The subscript $G$ appearing in the exponential on the
right-hand side indicates, however, that the \emph{total time ordering} (TTO)
projection--denoted by the letter $G$--must be applied to the argument of the
exponential itself.

The TTO is a stricter projection operator than the more commonly used
\emph{partial time ordering} (PTO) projection, denoted by the letter $O$, which
leads to the usual time-ordered exponential (often called the
$t$-exponential and denoted by $\overleftarrow{\exp}[\dots]$). The difference
between the two projections becomes relevant when they act on objects that
depend on at least two time variables.

To illustrate this point, let ${\cal A}(t,u)$ be a function of two time
arguments, $t$ and $u$, taking values in a space of mutually non-commuting
operators. Denoting by $\{ {\cal U} \}_M$ the result of applying a generic map
$M$ to ${\cal U}$, we have
\begin{equation}
\label{exR=O}
\left\{ {\cal A}(t,u)\,{\cal A}(t',u') \right\}_{O}
=
\left\{ {\cal A}(t',u')\,{\cal A}(t,u) \right\}_{O}
=
\begin{cases}
{\cal A}(t,u)\,{\cal A}(t',u') & \text{for } t \ge t', \\
{\cal A}(t',u')\,{\cal A}(t,u) & \text{for } t' \ge t,
\end{cases}
\end{equation}
independently of the values of the second time arguments $u$ and $u'$.

In contrast, the total time ordering projection yields
\begin{equation}
\label{exR=G}
\left\{ {\cal A}(t,u)\,{\cal A}(t',u') \right\}_{G}
=
\left\{ {\cal A}(t',u')\,{\cal A}(t,u) \right\}_{G}
=
\begin{cases}
{\cal A}(t,u)\,{\cal A}(t',u') & \text{for } t \ge u \ge t' \ge u', \\
{\cal A}(t',u')\,{\cal A}(t,u) & \text{for } t' \ge u' \ge t \ge u, \\
0 & \text{otherwise}.
\end{cases}
\end{equation}

From the above example it results apparent that while the $O$ (or PTO) map does not  discard any element (it is always possible to order product of functions respect to one parameter), the $G$ (or TTO) map is a 
true projection, that erase all the product of functions that do not satisfy a total time ordering.  Thus, 
the $O$ map is ineffective when applied to $c$-numbers\footnote{For historical reason we shall use the definition
of $q$-numbers as objects of a {\em non commutative} algebra, as opposed to $c$-numbers that are
objects of a {\em  commutative} algebra. Time dependent operators are generally $q$-numbers.}, 
while the $G$ map 
is effective also with $c$-numbers. 
For our purpose, the key difference between these two definitions of 
generalized exponential functions, is that while the $t$-exponential (reducing to the standard exponential for $c$-numbers), 
gives rise (or originates from) 
a local-time ME, the $G$-exponential leads to (or comes from) a non local ME. 
More precisely, considering the notation $\exp_O[...]:=\overleftarrow{\exp}[...]$, from the definition
\begin{equation}
\label{MO2}
\hat{{P }}_{\xi[t]}(k(\cdot)\pv t)=        \exp_{O}\left[{K}^{(O)}_{\xi[t]}(k(\cdot)\pv t)\right]
\end{equation}
(thus, ${K}^{(O)}_{\xi[t]}(k(\cdot)\pv t)$ is the generator of the ``standard cumulants'' of the stochastic process $\xi[t]$),
then  we have a local time ME:
\begin{equation}
\label{ME_lPTO}
\partial_t      \hat{{P }}_{\xi[t]}(k(\cdot)\pv t)=\left(\partial_t {K}^{(O)}_{\xi[t]}(k(\cdot)\pv t)\right)\hat{{P }}_{\xi[t]}(k(\cdot)\pv t).
\end{equation} 
On the other hand, from \eqref{MG2} we obtain a ME with memory kernel  (for details, see~\cite[Section 4.4.3]{bbJSTAT4} and
\cite{bCSF159}):
\begin{equation}
\label{MEG}
\partial_t      \hat{{P }}_{\xi[t]}(k(\cdot)\pv t)=\int_0^t \text{d}u\,G(k(\cdot)\pv t,u)
\hat{{P }}_{\xi[t]}(k(\cdot)\pv u),
\end{equation} 
where the Green function, or memory kernel $G(k(\cdot)\pv t,u)$ is related to  ${K}^{(G)}_{\xi[t]}(k(\cdot)\pv t)$, 
by
\begin{align}
\label{K_G}
&       {K }^{(G)}_{\xi[t]}(k(\cdot)\pv t)
:=
\int^t_0{\text{d}u}\int^{u}_0{\text{d}u'}G(k(\cdot)\pv u,u').%
\end{align}

By using Eqs.~\eqref{CG} and \eqref{K_G}, the Green function 
$G(k(\cdot)\pv t,u)$ can be, in turn, expanded in series of $G$-
cumulants, see~\cite[Section~4.4.3, Eq.~(90)]{bbJSTAT4}:
\begin{align}
\label{GCum_xi}
& G(k(\cdot)\pv u,u')
=\sum_{n=1}^\infty \int_{u'}^{u} \text{d}u_{n-1} \int_{u'}^{u_{n-1}} \text{d}u_{n-2}...\int_{u'}^{u_{3}} \text{d}u_{2}
\nonumber\\
&\times k(u')k(u_2)\dots k(u_{n-1})k(u)
\doubleangle{\xi(u') \xi(u_{2}) \dots \xi(u_{n-1})\xi(u)}^{(G)} 
\end{align}

As for  standard cumulants, inserting Eqs.~\eqref{MG}-\eqref{CG} in Eq.~\eqref{MG2}, we obtain
%
%
\begin{align}
\label{corrGen_delta_cum}
&\int^t_0{\text{d}u_n}\int^{u_{n}}_0{\text{d}u_{n-1}}\dots 
\int^{u_{2}}_0\text{d}u_1\,
k(u_1)k(u_2)\dots k(u_n) 
\nonumber \\
&       
\langle\xi(u_1)\xi(u_2)...\xi(u_n)\rangle
\nonumber \\
& =
\sum_{p=1}^{ n}
\sum_{\{m_i\}:\sum_{i=1}^p m_i={ n}}   
\int^t_0{\text{d}u_n}\int^{u_{n}}_0{\text{d}u_{n-1}}\dots 
\int^{u_{2}}_0\text{d}u_1\,
k(u_1)k(u_2)\dots k(u_n) 
\nonumber \\
&
\doubleangle{\xi(u_1)\xi(u_2)...\xi(u_{m_1})}^{(G)}
\doubleangle{\xi(u_{m_1+1})\xi(u_{m_1+2})...\xi(u_{m_1+m_2})}^{(G)}\times ...
\nonumber \\
&
...\times  \doubleangle{\xi(u_{m_1+...+m_{p-1}+1})\, \xi_2(u_{m_1+...+m_{p-1}+2}),
...\xi(u_{m_1+...+m_{p-1}+m_p})}^{(G)}.
\end{align}
that, because it must hold for \textit{any} ``dummy'' wave function $k(u)$, leads to
\begin{align}
\label{corrGen_delta_cum_2}      
&\langle\xi(u_1)\xi(u_2)...\xi(u_n)\rangle =
\sum_{p=1}^{ n}
\sum_{\{m_i\}:\sum_{i=1}^p m_i={ n}}   
\doubleangle{\xi(u_1)\xi(u_2)...\xi(u_{m_1})}^{(G)}\nonumber \\
&\times
\doubleangle{\xi(u_{m_1+1})\xi(u_{m_1+2})...\xi(u_{m_1+m_2})}^{(G)}\times ...
\nonumber \\
&
...\times  \doubleangle{\xi(u_{m_1+...+m_{p-1}+1})\, \xi_2(u_{m_1+...+m_{p-1}+2})
...\xi(u_{m_1+...+m_{p-1}+m_p})}^{(G)}.
\end{align}

    The r.h.s. of Eqs.~\eqref{corrGen_delta_cum}-\eqref{corrGen_delta_cum_2} differs from that of the standard cumulants 
by the fact that all the cumulants that in the former are not fully time ordered have been discarded (the $G$ map has been 
applied to all the terms of the sums in the r.h.s. of Eq.~\eqref{corrGen_delta_cum}).

    \section{The $n=8$ case\label{app:nequals8case}}

We show how to work out the eight-time joint correlation function: the compositions of the ordered times are
\begin{align*}
&\vert t_1\,t_2\,t_3\,t_4\,t_5\,t_6\,t_7\,t_8\vert  \pv &(p=1)\\
&\vert t_1\,t_2\,t_3\,t_4\,t_5\,t_6\vert \,t_7\,t_8\vert \text{ , }
\vert t_1\,t_2\,t_3\,t_4\vert \,t_5\,t_6\,t_7\,t_8\vert \text{ , }
\vert t_1\,t_2\vert \,t_3\,t_4\,t_5\,t_6\,t_7\,t_8\vert \pv &(p=2)\\ 
&\vert t_1\,t_2\,t_3\,t_4\vert \,t_5\,t_6\vert \,t_7\,t_8\vert \text{ , }
\vert t_1\,t_2\vert \,t_3\,t_4\,t_5\,t_6\vert \,t_7\,t_8\vert \text{ , }
\vert t_1\,t_2\vert \,t_3\,t_4\vert \,t_5\,t_6\,t_7\,t_8\vert \pv &(p=3)
\\&\vert t_1\,t_2\vert \,t_3\,t_4\vert \,t_5\,t_6\vert \,t_7\,t_8\vert .&(p=4)
\end{align*}
from which, by using  the result \eqref{corrGen_delta_2} of the procedure illustrated
in Proposition~\ref{prop:procedure_fligh}, becomes
\begin{equation}
\label{corr8_}
\begin{array}{ll}
\left\langle\xi(t_1)\xi(t_2)\xi(t_3)\xi(t_4)\xi(t_5)\xi(t_6)\xi(t_7)\xi(t_8)\right\rangle  & 
\\
\;&\;\\
=\frac{\overline{\xi^{8}}}{\overline{\xi^{2}}}\left\langle \xi(t_1)\xi(t_8)\right\rangle 
\;\;\;&(p=1) \\
\;&\;\\
+\big[
\frac{\overline{\xi^{6}}}{\overline{\xi^{2}}} 
\left\langle \xi(t_1)\xi(t_6)\right\rangle  
\left(\left\langle\xi(t_7)\xi(t_8)\right\rangle_{t_6}-\overline{\xi^{2}}\delta(t_{7}-t_{6})\right) \\
+\left(\frac{\overline{\xi^{4}}}{\overline{\xi^{2}}}\right)^2 
\left\langle \xi(t_1)\xi(t_4)\right\rangle  
\left(\left\langle\xi(t_5)\xi(t_8)\right\rangle_{t_4}-\overline{\xi^{2}}\delta(t_5-t_4)\right)\\
+\frac{\overline{\xi^{6}}}{\overline{\xi^{2}}}\left\langle \xi(t_1)\xi(t_2)\right\rangle  
\left(\left\langle\xi(t_3)\xi(t_8)\right\rangle_{t_2}-\overline{\xi^{2}}\delta(t_3-t_2)\right)
\big]
\;\;\;&(p=2)
\\
\;&\;\\
+\frac{\overline{\xi^{4}}}{\overline{\xi^{2}}}
\big[
\left\langle \xi(t_1)\xi(t_4)\right\rangle  
\left(\left\langle\xi(t_5)\xi(t_6)\right\rangle_{t_4} 
-\overline{\xi^{2}}\delta(t_5-t_4)\right)\left(\left\langle\xi(t_7)\xi(t_8)\right\rangle_{t_6} -\overline{\xi^{2}}\delta(t_7-t_6)\right)\\
+\,\left\langle \xi(t_1)\xi(t_2)\right\rangle  
\left(\left\langle\xi(t_3)\xi(t_6)\right\rangle_{t_2} -\overline{\xi^{2}}\delta(t_3-t_2)\right)
\left(\left\langle\xi(t_7)\xi(t_8)\right\rangle_{t_6}-\overline{\xi^{2}}\delta(t_7-t_6)\right)\\
+\left.\left\langle \xi(t_1)\xi(t_2)\right\rangle  
\left(\left\langle\xi(t_3)\xi(t_4)\right\rangle_{t_2}-\overline{\xi^{2}}\delta(t_3-t_2)\right)
\left(\left\langle\xi(t_5)\xi(t_8)\right\rangle_{t_4}
-\overline{\xi^{2}}\delta(t_5-t_4)\right)\right]
\;\;\;&(p=3)
\\
\;&\;\\
+
\big[\left\langle \xi(t_1)\xi(t_2)\right\rangle  
\left(\left\langle\xi(t_3)\xi(t_4)\right\rangle_{t_2}
-\overline{\xi^{2}}\delta(t_4-t_3)\right)
\left(\left\langle\xi(t_5)\xi(t_6)\right\rangle_{t_4}
-\overline{\xi^{2}}\delta(t_6-t_5)\right)\times
\;\;\;&
\\
\;&\;\\
\times 
\left(\left\langle\xi(t_8)\xi(t_7)\right\rangle_{t_6}
-\overline{\xi^{2}}\delta(t_8-t_7)\right)\big]
\;\;\;&(p=4)
\end{array}
\end{equation}

\section{Obtaining the Montroll-Weiss-Scher result for the CTRW\label{app:Montroll_Weiss}}

When the drift  $C(x) $ of \eqref{SDE} is not present and $I(x)=1$, we have the stanrdard L\'evy flights CTRW $\dot x=\xi[t]$, 
for which a ME for the PDF
of $x$ has been
obtained since the pioneering works of Montroll Weiss and Scher 
\cite{Montroll_Weiss_1965,Scher_Motroll_1975}, together with a closed formal expression for the same PDF. In Fourier Laplace transform it reads:

\begin{equation}
\label{MW_ks}
\hat P(k\pv s)=\frac{\hat \Psi(s)}{1-\hat \psi(s)\hat  p(k)} \hat P(k\pv 0)
\end{equation}
where $P(x,0)$ is the
initial PDF of $x$ and $\Psi(t)$ is the survival probability, defined as the
probability that, after a time interval $t$ since the last transition, the
random variable $\xi$ has not changed value:
\begin{equation}
\label{Psi}
\Psi(t)
:= \int_t^\infty \psi(u)\,\mathrm{d}u
= 1-\int_0^t \psi(u)\,\mathrm{d}u
\;\Rightarrow\;
\hat{\Psi}(s) = \frac{1-\hat{\psi}(s)}{s}.
\end{equation}
 
In the space-time variables the Montroll-Weiss equation reads
\begin{equation}
P(x\pv t)=\int_{-\infty}^{+\infty} p(y) \int_0^t \psi(\tau) P(x-y\pv \,t-\tau) \text{d}y \,\text{d} \tau+\Psi(t) P(x\pv 0)
\label{eq:mws}
\end{equation}

Considering that   \textit{for any 
trajectory realization}  $\xi(u)$, with $t_0\le u\le t$, of   
the stochastic process $\xi[t]$, we have 
\begin{equation}
\label{x_CTRW}
{x}\left(t\right)= \! \int^t_0{\xi (u)\text{d}u},
\end{equation} 
then, fixed the time $t$,  the Fourier transform of the PDF, also called the characteristic function (CF) of $x(t)$, is
{\small 
\begin{align}
\label{pExp2x}
&\hat{P}(k\pv t):=
\langle\exp \left[\I k \, x(t)\right]\rangle P(k\pv 0)
=\langle \exp \left[\I k \int^t_0\text{d}u\, \xi(u)\right]\rangle P(k\pv 0)
\nonumber \\
&=\left\{\sum_{n=0}^\infty  \frac{(\I k)^n}{n!}
\int^t_0{\text{d}u_n}\int^{t}_0{\text{d}u_{n-1}}\dots \int^{t}_0\text{d}u_1
\langle \xi (u_1)\xi (u_2)\dots \xi (u_n)\rangle
\right\}P(k\pv 0)
%
%
%
\end{align}
}

In the multiple integral of Eq.~\eqref{pExp2x}, all the times $ u_1, u_2, \dots, u_n $ range 
from 0 to $ t $, and are therefore not ordered. However, since the integrand—namely, the joint 
correlation function 
$ \langle \xi(u_1)\xi(u_2)\dots \xi(u_n) \rangle $—is fully symmetric with respect to the $ n $ 
time variables, the unordered multiple integral can be rewritten as $ n! $ times the ordered 
integral, where 
$ u_1 < u_2 < \dots < u_n $. 
This accounts for all $ n! $ permutations of the times yielding the same result.
However, in the case where a subset of $ m $ times are equal to each other, permutations among those 
identical times should not be counted, thus an extra factor $1/m!$ must be inserted.

Then, by substituting the general result from Eq.~\eqref{corrGen_delta} into Eq.~\eqref{pExp2x}, 
properly accounting for the time ordering and also considering that the Dirac-delta functions
contraint the integrals to be evaluated at ``diagonal'' of the multidimensional hyper-cube
$[0,t]^n$, we obtain:
{\small 
\begin{align}
\label{pExp2x_}
&\hat{P}(k\pv t)=\Bigg\{1
\nonumber \\
&+\sum_{n=1}^\infty (\I k) ^n
\int^t_0{\text{d}u_n}\int^{u_{n}}_0{\text{d}u_{n-1}}\dots \int^{u_{2}}_0\text{d}u_1
\sum_{p=1}^n 
\;\;\sum^\infty_{\substack{m_0,m_1,m_2,...,m_p =0\\ \sum_{i=0}^{p}m_i = n}} \;\;
\prod_{i=0}^{p} \frac{1}{m_i!}
\nonumber \\&
\left( \overline{\xi_0^{m_1}}\,\delta(u_{m_1}-t_0)+\overline{\xi^{m_1}}\, \bm{\delta}_1(\bm{\Delta t}_{m_1})
R(u_{m_1}-t_0)\right)
\nonumber \\&
\times\overline{\xi^{m_2}}\,\bm{\delta}_2(\bm{\Delta t}_{m_2})R(u_{m_1+m_2}-u_{m_1})
\times\overline{\xi^{m_3}}\,\bm{\delta}_3(\bm{\Delta t}_{m_3})R(u_{m_1+m_2+m_3}-u_{m_1+m_2})
\nonumber \\&
\times...\times \overline{\xi^{m_p}}\bm{\delta}_p(\bm{\Delta t}_{m_p})
R(u_{n}-u_{_{n-m_{p}}})\Bigg\}\hat P(k\pv 0)
\nonumber \\
&=\Bigg\{1+\sum_{n=1}^\infty (\I k) ^n
\sum_{p=1}^n \;
\;\;\sum^\infty_{\substack{m_0,m_1,m_2,...,m_p =0\\ \sum_{i=0}^{p}m_i = n}} \;\;
\prod_{i=0}^{p} \frac{\overline{\xi^{m_i}}}{m_i!}
\times\nonumber\\
&
\times  \int^t_0
\text{d}u_{n}
\int^{u_{n}}_0 R(u_{n}-u_{m_1+...+m_{p-1}}){\text{d}u_{m_1+m_2+...+m_{p-1}}}
\nonumber       \\
&\int^{u_{m_1+m_2+...+m_{p-1}}}_0R(u_{m_1+...+m_{p-2}+m_{p-1}}-u_{m_1+...+m_{p-2}}){\text{d}u_{m_1+m_2+...+m_{p-2}}}\dots\nonumber\\
& \int^{u_{m_1+m_2+m_3+m_4+m_5+m_6}}_0
R(u_{m_1+m_2+m_3+m_4+m_5+m_6}-u_{m_1+m_2+m_3+m_4+m_5})
\text{d}u_{m_1+m_2+m_3+m_4+m_5}
\nonumber \\
& \int^{u_{m_1+m_2+m_3+m_4+m_5}}_0
R(u_{m_1+m_2+m_3+m_4+m_5}-u_{m_1+m_2+m_3+m_4})
\text{d}u_{m_1+m_2+m_3+m_4}
\nonumber \\
& \int^{u_{m_1+m_2+m_3+m_4}}_0
R(u_{m_1+m_2+m_3+m_4}-u_{m_1+m_2+m_3})
\text{d}u_{m_1+m_2+m_3}
\nonumber \\
&
\int^{u_{m_1+m_2+m_3}}_0
R(u_{m_1+m_2+m_3}-u_{m_1+m_2})
\text{d}u_{m_1+m_2}
\nonumber \\
&
\int^{u_{m_1+m_2}}_0
R(u_{m_1+m_2}-u_{m_1})
\tilde R(u_{m_1})
\text{d}u_{m_1}\Bigg\}\hat P(k\pv 0)
\end{align}
}
The product of integrals in noting but the integral (the last one, i.e., $\int_0^t \text{d}u_n..$) of the $p$-fold convolution of the rate function $R(u)$, thus,  Laplace transforming  Eq.~\eqref{pExp2x_} yields:
{\small 
\begin{align}
\label{pExp2x_Fin}
\hat{P}(k\pv s)
&=\Bigg\{\frac{1}{s}+\frac{1}{s}\sum_{n=1}^\infty (\I k) ^n
\sum_{p=1}^n 
\left[\hat{ {R}}(s) \right]^{p}
\sum^\infty_{\substack{m_0,m_1,m_2,...,m_p =0\\ \sum_{i=0}^{p}m_i = n}} \;\;
\prod_{i=0}^{p} 
\frac{\overline{\xi^{m_i}}}{m_i!} 
\Bigg\}P(k\pv 0)
\nonumber\\
& =\Bigg\{\frac{1}{s}+\frac{1}{s}\sum_{q=1}^\infty 
\left[\hat R(s)\right]^q        \left(\sum_{m=1}^\infty 
(\I k)^m        \frac{\overline{\xi^{m}}}{m!}\right)^q
\Bigg\}P(k\pv 0) 
\nonumber \\
&
=\Bigg\{
\frac{1}{s}+\frac{1}{s} \frac{\hat R(s) \left[\hat p(k) -1\right]}{1-\hat R(s) \left[\hat p(k) -1\right]}
\Bigg\}P(k\pv 0)
\nonumber\\
&       =\frac{1}{s}\;\frac{1}{1-\hat R(s) \left[\hat p(k) -1\right]}
P(k\pv 0).
\end{align}
}
Finally, by making use of Eqs.~\eqref{Rtilde} and \eqref{Psi}, we arrive precisely at the  
Montroll--Weiss result given in Eq.~\eqref{MW_ks}.

\section{A ME for the generalized CTRW with drift and state dependent dichotomous steps\label{sec.stationary}}
Let us consider the system of interest $x$ of the SDE~\eqref{SDE}, in the generic case where the drift $-C(x) $ is not zero and
the multiplicative function $I(x)$ effectively depends on $x$ (i.e., it is not a constant).

For any realization 
$\xi(u)$ $|u\in[0,t]$ of the noise, the time-evolution
of the PDF of the system \eqref{SDE}, that we indicate with 
$P(x,\xi(t)\pv t)$, 
satisfies the continuity equation:%
\begin{align}
\label{app:stochLiouv}
\partial_t P(x,\xi(t)\pv t)={\mathcal{L}}_a P(x,\xi(t)\pv t) +\,\partial_x I(x) \xi(t)P(x,\xi(t)\pv t),
\end{align} 
where 
\begin{equation}
\label{app:La}
{\mathcal{L}}_a:=\partial_x C(x) 
\end{equation}
is the unpertubed Liouvillian. Then, passing to the interaction representation, we have:
\begin{equation}
\label{continuity}
{\partial}_t \tilde P(x,\xi(t)\pv t)=
\tilde {\mathcal{L}}_I (t)\xi(t) P(x,\xi(t)\pv t)
=\Omega(t)\tilde P(x,\xi(t)\pv t)
\end{equation} 
where 
\begin{equation}
\label{app:Ptilde}
\tilde P(x,\xi(t)\pv t) :=e^{-{\mathcal{L}}_a t}
P(x,\xi(t)\pv t)
\end{equation}
and $\Omega:=\tilde {\mathcal{L}}_I (t) \xi(t) $, 
with\footnote{Taking into account Eq.~\eqref{app:La}, and that, for any couple of operators $\mathcal{A}$ and $\mathcal{B}$ 
the Hadamard lemma says:
$e^{\mathcal{A}} \mathcal{B} e^{-\mathcal{A}}=e^{\mathcal{A}^\times} [\mathcal{B}]$ where 
$\mathcal{A}^\times [\mathcal{B}]:=\mathcal{A}\mathcal{B}-\mathcal{B}\mathcal{A}$ is the standard commutation operation, 
we have the following  chain
of equalities (see also Ref.~\cite{bJMP59}) : $e^{{-\mathcal{L}}_at} {\partial}_{x} I(x) e^{{\mathcal{L}}_at}
=e^{{-\mathcal{L}}_a t^\times} \left]{\partial}_{x} I(x)\right]=
\partial_x C(x)  e^{{-\mathcal{L}}_a t^\times}\left[\frac{I(x)}{C(x) }\right]:=   \partial_x C(x)  \frac{I(x_0(x\pv -t))}{C(x_0(x\pv -t))}$.}
\begin{equation}
\label{app:LITilde}
\tilde {\mathcal{L}}_I (t):= e^{{-\mathcal{L}}_at} {\partial}_{x} I(x) e^{{\mathcal{L}}_at}=\partial_x C(x)  \frac{I(x_0(x\pv t))}{C(x_0(x\pv t))}
\end{equation}
where $x_0(x\pv t)$ is the  unperturbed evolution (i.e. with $\xi(t)=0$) of the system of interest, for a time $t$, 
starting from the initial position at $x(0)=x$. 

It is apparent that  $\Omega[t]$ is a stochastic differential operator
with a one-to-one correspondence between the realizations of 
$\Omega(u)$ and those of $\xi(u)$, with 
$0\le u\le t$. 
Note that the stochastic operator $\Omega(u)$ is a $q$-number, i.e., it does not commute with itself when computed at different times: 
$\left[\Omega(u),\Omega(u^\prime)\right]\ne 0$, 
$u,u^\prime\in\left[0,t\right]$ (i.e.,  
$\Omega(u), u\in\left[0,t\right]$ 
belongs to anon  commutative algebra). Assuming that
the initial preparation of the ``ensemble'' $P(x\pv 0)$ does not
depend on the possible values of $\xi$ (i.e., the initial PDF is factorized
as a reduced PDF for $x$, $P(x\pv 0)$ by the PDF of $\xi$), the temporal
integration of \eqref{continuity}, provided with the average over all the possible realizations
of $\xi(u)$, $u\in\left[0,t\right]$, leads to the reduced PDF
for the variable $x$ at any time $t\geq 0$:
\begin{equation}
\label{pExp}
\tilde P(x\pv t):=\left \langle\tilde  P(x,\xi(t)\pv t)\right \rangle   P(x\pv 0)
=\left\langle 
\overleftarrow{ \exp} \left[ \int^t_0 \text{d}u\,\Omega(u)\right]\right \rangle P(x\pv 0)
\end{equation}
where the symbol $\overleftarrow{ \exp}[...]$ means that to the argument of the exponential function the Partial Time Ordering (PTO) map must be applied,
i.e., $\overleftarrow{ \exp}[...]$ is the standard time ordered, or $t-$ordered exponential function. According to the theory of $M$-cumulant
we will use the notation $\exp_O[...]:=\overleftarrow{ \exp}[...]$, where the symbol ``O''  stays for the PTO map.
By comparing \eqref{pExp} with \eqref{CF_CTRW}, and setting $\I k=1$, the time-evolution operator 
$\langle \exp_O{\left[\int_{0}^{t}\text{d}u\Omega(u)\right]}\rangle$
of \eqref{pExp} can be considered as the $O$-Characteristic Function ($O$-CF) 
of the random operator given by $\mathscr{S}(t):=\int_0^t\text{d}u\Omega(u)$:
\begin{align}
\label{pExp2}
&       {\hat{\mathscr{P}}}_{\mathscr{S}}(k\pv t):= \left\langle \exp_O\!\left[\I k \,\mathscr{S}(t)\right]\right \rangle=
\left\langle  \exp_O\!\left[\I k \int^t_0 \text{d}u\,\Omega(u)\right]\right \rangle      
\nonumber\\
&=\sum_{n=0}^\infty (\I \,k)^n
\int^t_0{\text{d}u_n}\int^{u_{n-1}}_0{\text{d}u_{n-2}}\dots 
\int^{u_{2}}_0\text{d}u_1\,
\partial_x I(x\pv u_1)\partial_x I(x\pv u_2)\dots \times 
\nonumber \\
&       \dots \times \partial_x I(x\pv u_n)
\langle \xi (u_1)\xi (u_2)\dots \xi (u_n)\rangle. 
\end{align}
The reader should appreciate that we have used the following fact:\\
$\left\langle \Omega(u_1)\Omega(u_2)...\Omega(u_n)\right\rangle=
\partial_x I(x\pv u_1)\partial_x I(x\pv u_2)...\partial_x I(x\pv u_n)\langle \xi(u_1)\xi(u_2)...\xi(u_2)\rangle$.

To avoid confusion, we note that while  the CF $\hat{P}_{x}(k\pv t)$
of  \eqref{CF_CTRW} is the Fourier transform of the PDF of $x$, now
$\hat{\mathscr{P}}_{\mathscr{S}}(k\pv t)$
of \eqref{pExp2} is the \textit{time-evolution operator } of the PDF of $x$ (in interaction representation),
as it is clear from  \eqref{pExp}-\eqref{pExp2}.  

Akin to what done in Section~\ref{sec:Gcumulants}, we indicate 
with $\doubleangle{ \Omega (u_1)\Omega (u_2)\dots \Omega (u_n)}^{(G)}$ the $n$-time $G$-cumulant  of $\Omega[t]$. 
The corresponding 
generator $\mathscr{K}^{(G)}_{\Omega[t]}(k(\cdot)\pv t)$, defined by  
\begin{equation}
\label{app:MG2_Omega}
\hat{\mathscr{P }}_{\Omega[t]}(k(\cdot)\pv t)=     \exp_{G}\left[\mathscr{K}^{(G)}_{\Omega[t]}(k(\cdot)\pv t)\right]
\end{equation}
is then 

\begin{align}
\label{app:CG_Omega}
&       \hat{\mathscr{K }}^{(G)}_{\Omega[t]}(k(\cdot)\pv t)
:=\sum_{n=1}^\infty (\I\,k)^n
\int^t_0{\text{d}u_n}\int^{u_{n-1}}_0{\text{d}u_{n-2}}\dots 
\int^{u_{2}}_0\text{d}u_1\,
\nonumber \\
&\times 
\doubleangle{ \Omega (u_1)\Omega (u_2)\dots \Omega (u_n)}^{(G)}
\nonumber \\&
=\sum_{n=1}^\infty \I^n
\int^t_0{\text{d}u_n}\int^{u_{n-1}}_0{\text{d}u_{n-2}}\dots 
\int^{u_{2}}_0\text{d}u_1\,
\partial_x I(x\pv u_1)\partial_x I(x\pv u_2)\dots \times 
\nonumber \\
&       \dots \times \partial_x I(x\pv u_n)
\doubleangle{ \xi (u_1)\xi (u_2)\dots \xi (u_n)}^{(G)}%
\end{align}
where the last equation holds because of the specific property of the $G$-cumulants that, thanks to the TTO map, 
are always proportional to the corresponding cumulants of $\xi[t]$~\cite{bbJSTAT4}.

The equivalent to Eq.~\eqref{MEGx} is the following ME
\begin{equation}
\label{MEGx_Omega}
\partial_t      \hat{\mathscr{P}}_{x}(k\pv t)=\int_0^t \text{d}u\,\mathscr{G}(k\pv t,u)
\hat{\mathscr{P}}_{x}(k\pv u)
\end{equation} 
with
\begin{align}
\label{K_Gx_Omega}
&       \hat{\mathscr{K }}_{x}(k\pv t)
:=
\int^t_0{\text{d}u}\int^{u}_0{\text{d}u'}{\mathscr{G(k\pv u,u')}}.%
\end{align}

\section{The spike dichotomous case\label{app:dicho}}

To demonstrate the generalized factorization property expressed
by Eq.~\eqref{corrGen_delta_dicho},
we rewrite  Eq.~\eqref{corrGen_delta_2}
in a more compact and intriguing
way, also implementing the fact that the odd moments of $\xi$ are zero.
For that, we first  introduce the definitions of the following  closed round ``bra'' and
``ket'' braces that simplifies the formulas
($0<i< j$, $k>0$): %
%
Thus we have
\begin{equation}
\label{triangle_prop}
\limg t_i,t_j\rimg\delta(t_{j+k}-t_{j})=\limg t_i,t_{j+k}\rimg,
\end{equation}
and the following projection property holds:
\begin{equation}
\label{triangle_prop2}
\limg t_i,t_j,t_{j+k}\rimg=     \limg t_i,t_{j+k}\rimg.
\end{equation}
For the sake of convenience, we give also a bilinear  property to these closed round braces:
\begin{equation}
\label{triangle_proj}
\xi^n\limg t_i,t_{j+k}\rimg=\limg \xi^n\, t_i,t_{j+k}\rimg=\limg t_i,t_{j+k}\,\xi^n\rimg.
\end{equation}
It will result convenient to use also a brackets-like notation for the average of a function over the PDF of jumps:
\begin{equation}
\label{mean}
\lceil{f(\xi)}\rceil:=\overline{f(\xi)}.
\end{equation}

By exploiting the alternative definition of average over the random variable  $\xi$, given
in Eq.~\eqref{mean}, the two-time correlation function given of Eq.~\eqref{cor2t_delta_pre} can be rewritten as
\begin{align}
\label{cor2t_delta_new}
\langle \xi(t_1) \xi(t_2) \rangle  &:=\lceil \xi^2 \rceil\limg  t_1,t_2\rimg=\lceil\limg \xi^2\,  t_1,t_2\rimg \rceil .
\end{align}
%
%
%
%

By using these definitions, we are able to rewrite the central result given in Eq.~\eqref{corrGen_delta_2} in the following
way:
\begin{align}
\label{corrGen_delta_2_}
&\langle\xi(t_1)\xi(t_2)\cdots \xi(t_n)\rangle 
\sum_{p=1}^{ n/2} \bigg[
\sum_{\{m_i\}:\sum_{i=1}^p m_i={ n}} 
\nonumber \\&
\lceil 
\limg {\xi}^{m_1} t_{1},t_{m_1}
\big(\rimg\rceil 
\lceil\limg- \rceil\lceil
\big) 
{\xi}^{m_2} t_{m_1+1},t_{m_1+m_2} 
\big(\rimg\rceil 
\lceil\limg
- \rceil\lceil
\big)\times
\nonumber \\
&\times {\xi}^{m_3}t_{m_1+m_2+1},t_{m_1+m_2+m_3}
\big(
\rimg\rceil 
\lceil\limg
- \rceil\lceil
\big)
\times\cdots  
\nonumber \\
&\cdots \times {\xi}^{m_{p-1}}t_{n-m_p-m_{p-1}},t_{n-m_p}
\big(
\rimg\rceil 
\lceil\limg
- \,
\rceil\lceil
\big)
{\xi}^{m_p}t_{n-m_p+1},t_{n}\!\rimg
\rceil
\bigg].
%
\end{align}
%
If we note that  all the possible compositions of $ n $ ordered times 
(or objects, more generally) into blocks of even elements, here separated each other by the vertical bar $\vert$, 
can be expressed as
\begin{align}
\label{corrGen_theta_cu}
\vert  t_1\,t_{2} \left(\vert  + 1 \right) t_{3}\,t_4 \left(\vert  + 1 \right) t_{5}\,t_{6}
\left(\vert  + 1 \right) t_{7}\,t_{8}
\dots \left(\vert  + 1 \right) t_{n-1}\,t_n \vert ,
\end{align}
then, from point~\ref{iteme} of 
Proposition~\ref{prop:procedure_fligh}, it is clear that a more compact 
alternative to Eq.~\eqref{corrGen_delta_2_} for expressing the 
\(n\)-time correlation functions of \(\xi[t]\) is the following:
\begin{align}
\label{corrGen_delta_fin}
&\langle\xi(t_1)\xi(t_2)\cdots \xi(t_n)\rangle 
= \lceil 
\limg {\xi}^{2} t_{1},t_{2}
\big(\rimg{}\rceil 
\lceil\limg- \rceil\lceil+1
\big) 
{\xi}^{2} t_3,t_{4} 
\big(\rimg\rceil 
\lceil\limg
- \rceil\lceil+1
\big)\times
\nonumber \\
&\times {\xi}^{2}t_{5},t_{6}
\big(
\rimg\rceil 
\lceil\limg
- \rceil\lceil+1
\big)
\times\cdots  
\times {\xi}^{2}t_{n-3},t_{n-2}
\big(
\rimg\rceil 
\lceil\limg
- 
\rceil\lceil+1
\big)
{\xi}^{2}t_{n-1},t_{n}\rimg
\rceil.
\end{align}
For simplicity of notation, but without any loss of generality, we assume that $\xi$ takes 
the values $\pm 1$. In this case all even powers of $\xi$ reduce to 1, i.e., 
$\xi^n = 1$ for even $n$. Consequently, the symbolic ``braces'' notation 
$\lceil \cdots \rceil$ introduced earlier can be replaced with the identity. 
Applying this simplification to Eq.~\eqref{corrGen_delta_fin} amounts to setting 
$\rceil\lceil = 1$, yielding:

\begin{align}
\label{corrGen_delta_fin_3}
\langle\xi(t_1)\xi(t_2)...\xi(t_n)\rangle 
=&  
\limg t_{1},t_{2}
\rimg\limg 
t_3,t_{4} 
\rimg\limg
t_{5},t_{6}
\rimg\times\nonumber \\
&... 
\times \limg t_{n-3},t_{n-2}
\rimg \limg t_{n-1},t_{n}\rimg\nonumber \\
= &  \langle\xi( t_{1})\xi( t_{2}
)\rangle 
\langle\xi(t_3)\xi(t_{4})\rangle_{t_2}
\langle\xi(t_{5})\xi(t_{6})\rangle_{t_4}
\times\nonumber \\
&... 
\times \langle\xi( t_{n-3})\xi(t_{n-2})\rangle_{t_{n-4}} 
\langle\xi( t_{n-1})\xi(t_{n})\rangle_{t_{n-2}},
\end{align}
where, for the last equality, we have exploited Eq.~\eqref{cor2t_delta_new}.  
Equation~\eqref{corrGen_delta_fin_3} matches Eq.~\eqref{corrGen_delta_dicho}, 
thereby concluding the demonstration.

\section{Some analytical insight from the universal PDE in~\eqref{ME_universal_intro}\label{app:ME_resuls}}

In the Poissonian case, the ME for the PDF of the process defined by
Eq.~\eqref{SDE}, reported in Eq.~\eqref{ME_fin_Pois}, has already been derived in
the literature (see, e.g., \citep{sprmPRE83,bpPRE98}). However, since
Proposition~\ref{muG2} shows that the same equation also governs the long-time
behavior of the PDF when the WT distribution decays as a power law
with $\mu>2$, it is instructive to analyze this ME in detail.

At first sight, the master equation~\eqref{ME_fin_Pois} may appear cumbersome,
since the operator $\hat p(-\I\partial_x I(x)) - 1
= \sum_{n=1}^\infty \overline{\xi^{n}}\,(\partial_x I(x))^n/n!$ involves an
infinite series of spatial derivatives. Nevertheless, this structure encodes
important physical information.

For instance, the differential order of the equation clearly does \emph{not}
depend directly on the timescale $\tau$, but is instead determined by the
highest nonvanishing moment of the random variable $\xi$. The accuracy of any
truncation of the series depends both on how the moments $\overline{\xi^n}$
grow with increasing $n$ and on how the drift $C(x)$ constrains the spatial
region accessible to the dynamics.

As an illustrative example, consider the white-noise limit of the
SDE~\eqref{SDE}, obtained by replacing $\xi[t]$ with Gaussian white noise. In
this case, the PDF of $x$ obeys the well-known Stratonovich-type Fokker--Planck
equation
\begin{align}
\label{WNcase}
\partial_t P(x\pv t)
&= \partial_x C(x)\, P(x\pv t)
+ \frac{\overline{\xi^{2}}}{2\tau}
\left[\partial_x I(x)\right]^2 P(x\pv t).
\end{align}
If the spatial scales over which $C(x)$ and $I(x)$ vary are not much larger than
the typical magnitude of the moments of $\xi$, the series in
Eq.~\eqref{ME_fin_Pois} does not, in general, converge to the
Fokker--Planck equation~\eqref{WNcase} (the central
limit situation), even at long times. Nevertheless, in
many relevant situations the resulting statistics can still be  controlled.

Indeed, for several cases of practical interest, Eq.~\eqref{ME_fin_Pois} leads
to expressions that are both physically transparent and numerically
tractable (see Section~\ref{sec:numericaltechinques} for the numerical
implementation) and can also be
 simplified in appropriate limiting regimes.

We now discuss some illustrative examples.

When $\xi[t]$ is a symmetric dichotomous process, taking values $\xi = \pm a$, one
obtains
\begin{align}
\label{MEGx_dicho_}
\partial_t P(x\pv t)
=&\;\partial_x C(x) \,P(x\pv t)
+ \frac{1}{\tau}
\sum_{n=1}^\infty \frac{(a^{2})^{n}}{(2n)!}
\left(\partial_x I(x)\right)^{2n} P(x\pv t).
\end{align}
If the series is absolutely convergent, it can be resummed as
\begin{align}
\label{MEGx_dicho}
&\partial_t P(x\pv t)
=\;\partial_x C(x) \,P(x\pv t)
+ \frac{1}{\tau}
\left\{\cosh\!\left[a\,\partial_x I(x)\right] - 1\right\}
P(x\pv t)\nonumber\\
&=\;\partial_x C(x) \,P(x\pv t)
+ \frac{1}{2\tau}
\left[e^{a\partial_x I(x)} + e^{-a\partial_x I(x)} - 2\right]
P(x\pv t)
\end{align}

Note that the two exponential terms
$e^{\pm a\partial_x I(x)}$ in Eq.~\eqref{MEGx_dicho} can be directly interpreted
in terms of advective transport over a ``time'' $a$ in opposite directions,
driven by the drift field $I(x)$ (see
Section~\ref{sec:numericaltechinques}). Note also that for $I(x)=1$ this
transport reduces to a simple spatial shift by $\pm a$.

When the values of $\xi$ are drawn from a Gaussian PDF, namely
$p(\xi)=\exp\!\left(-\xi^{2}/2\sigma^{2}\right)/(\sqrt{2\pi}\sigma)$, one obtains
\begin{align}
\label{MEGx_gauss_}
\partial_t P(x\pv t)
&=\partial_x C(x) \,P(x\pv t)
+ \frac{1}{\tau}
\sum_{n=1}^\infty
\frac{(2n-1)!!\,(\sigma^{2})^{n}}{(2n)!}
\left(\partial_x I(x)\right)^{2n} P(x\pv t).
\end{align}
If the series is absolutely convergent, this expression can be resumed as
\begin{align}
\label{MEGx_gauss}
\partial_t P(x\pv t)
&=\partial_x C(x) \,P(x\pv t)
+ \frac{1}{\tau}
\left\{
e^{\frac{1}{2}\left[\sigma\,\partial_x I(x)\right]^2} - 1
\right\}
P(x\pv t).
\end{align}

When $p(\xi)$ is a flat distribution over the interval
$[-\sqrt{3} a,\,\sqrt{3}a ]$ (for $a=1$ the variance equals unity), one finds
\begin{align}
\label{MEGx_flat_}
\partial_t P(x\pv t)
&=\partial_x C(x) \,P(x\pv t)
+ \frac{1}{\tau}
\sum_{n=1}^\infty
\frac{3^{n} a^{2n}}{(2n+1)!}
\left(\partial_x I(x)\right)^{2n} P(x\pv t),
\end{align}
which leads to
\begin{align}
\label{MEGx_flat}
\partial_t P(x\pv t)
&=\partial_x C(x) \,P(x\pv t)
+ \frac{1}{\tau}
\left\{
\frac{\sinh\!\left(\sqrt{3}a\,\partial_x I(x)\right)}
{\sqrt{3}a\,\partial_x I(x)} - 1
\right\}
P(x\pv t).
\end{align}
The
expressions in Eqs.~\eqref{MEGx_gauss} and \eqref{MEGx_flat}, corresponding to Gaussian
and flat noise, respectively, can also be recast in terms of advective operators,
very similar to the dichotomous case, as
shown in Section~\ref{sec:numericaltechinques}. 
We now study the analytical solution of Eq. (\ref{MEGx_dicho}) in the case of $I(x)=1+ \beta x$. As first step let us write Eq. (\ref{MEGx_dicho}) as the continuity equation, i.e.,

\begin{equation}
\label{con}
\partial_t      P(x,t)
=-\partial_x J(x,t)=\partial_x\left[C(x) P(x, t)+ \frac{a^{2}}{\tau} I(x)\hat O P(x, t)\right]
\end{equation}
where

\begin{equation}
\label{hato}
\hat{O}=\sum_{k=0}^{\infty}\frac{a^{2k}(\partial_x I(x))^{2k+1}}{(2k+2)!}
\end{equation}
We now focus on the equilibrium solution, $\partial_t  P(x,t)=0$.  Imposing a vanishing current we have

\begin{equation}
\label{con2}
C(x) P_{eq}(x)+ \frac{a^{2}}{\tau} I(x)\hat{O}P_{eq}(x)=0
\end{equation}
The eigenfunctions of the operator $\partial_x I(x)$ are

\begin{equation}
\label{hato2}
\partial_x[ I(x) F]=\lambda F
\end{equation}
Solving in the case of $I(x)=1+\beta  x$, we have

\begin{equation}
\label{hato3}
F=(1+\beta  x)^{\frac{\lambda}{\beta } -1}
\end{equation}
Assuming that  $C(x)=\gamma x$ and

\begin{equation}
P_{eq}(x) =\sum_{n=0}^{\infty}c_n (1+\beta  x)^{\frac{\lambda_n}{\beta } -1}
\end{equation}
we have

\begin{eqnarray}\nonumber
&&  \frac{\gamma}{\beta }(1+\beta  x)\sum_{n=0}^{\infty}c_n(1+\beta  x)^{\frac{\lambda_n}{\beta } -1}-\frac{\gamma}{\beta }\sum_{n=0}^{\infty}c_n(1+\beta  x)^{\frac{\lambda_n}{\beta } -1}+
\\\label{hato4}
&&  \frac{a^{2}}{\tau} (1+\beta  x) \sum_{n=0}^{\infty}\sum_{k=0}^{\infty}\frac{a^{2k}\lambda_n^{2k+1}}{(2k+2)!}\sum_{n=0}^{\infty}c_n(1+\beta  x)^{\frac{\lambda_n}{\beta } -1}=0
\end{eqnarray}
Performing the sum on $k$

\begin{eqnarray}\nonumber
&&  \frac{\gamma}{\beta }(1+\beta  x)\sum_{n=0}^{\infty}c_n(1+\beta  x)^{\frac{\lambda_n}{\beta } -1}-\frac{\gamma}{\beta }\sum_{n=0}^{\infty}c_n(1+\beta  x)^{\frac{\lambda_n}{\beta } -1}+
\\\label{hato5}
&&  \frac{a^{2}}{\tau} (1+\beta  x)\sum_{n=0}^{\infty}\frac{ \cosh\! \left(a \lambda_n\right)-1}{\lambda_na^{2}}c_n(1+\beta  x)^{\frac{\lambda_n}{\beta } -1}=0
\end{eqnarray}
After a little algebra

\begin{eqnarray}\nonumber
&&  \frac{\gamma}{\beta } \sum_{n=0}^{\infty}c_n(1+\beta  x)^{\frac{\lambda_n}{\beta }  }-\frac{\gamma}{\beta }\sum_{n=0}^{\infty}c_n(1+\beta  x)^{\frac{\lambda_n}{\beta } -1}+
\\\label{hato6}
&&    \sum_{n=0}^{\infty}\frac{ \cosh\! \left(a \lambda_n\right)-1}{\tau\lambda_n}c_n(1+\beta  x)^{\frac{\lambda_n}{\beta } }=0
\end{eqnarray}

To be able to have a recursive relationship, and an integrable $P_{eq}(x)$, we must set 
    $\lambda_n/\beta=-(\alpha+n)$, where $\alpha$ is to be determined. It this way we obtain
\begin{eqnarray}\nonumber
&&   \sum_{n=0}^{\infty}\frac{c_n}{(1+\beta  x)^{n+\alpha }}-\sum_{n=0}^{\infty}\frac{c_n}{(1+\beta  x)^{n+1+\alpha }}-
\\\label{hato7}
&&    \sum_{n=0}^{\infty}\frac{ \cosh\! \left[a\beta (n+\alpha)\right]-1}{\gamma\tau(n+\alpha)}\frac{c_n}{(1+\beta  x)^{n+\alpha }}=0,
\end{eqnarray}
that yields the following recursion relationship
\begin{eqnarray} 
c_n\left(1-\frac{ \cosh\! \left[a\beta (n+\alpha)\right]-1}{\gamma\tau(n+\alpha)}\right)= c_{n-1}.
\label{hato8}
\end{eqnarray}
   Note that the above equation depends on the two products $\gamma \tau$ and $a \beta$.
   Observing that in terms of $\alpha$ the PDF \eqref{hato4} becomes
    
\begin{equation}
\label{hato4_2}
P_{eq}(x) =\sum_{n=0}^{\infty}\frac{c_n}{ (1+\beta  x)^{n+1+\alpha}}
\end{equation}
the large $x$ behavior of $P(x)$ is 
\begin{equation} 
P_{eq}(y) \approx \frac{A}{(1+\beta x)^{\alpha +1}},\,\, x\to\infty,
\label{power_asymp}
\end{equation}
that is obtained by setting 
 $n=0$ in Eq.~\eqref{hato4_2}. Setting  $c_{-1}=0$
 in \eqref{hato8},  we obtain the following equation for  $\alpha$:

\begin{equation} 
\gamma\tau -\frac{ \cosh\! \left[a\beta  \alpha\right]-1}{ \alpha} =0.
\label{power2}
\end{equation}
This is a transcendental equation that can be solved numerically. In Fig.~\ref{fig:alpha}
we plot $\alpha$ vs $\gamma \tau$ for some fixed values of $\beta$. 
\begin{figure}
    \centering
    \includegraphics[width=0.5\linewidth]{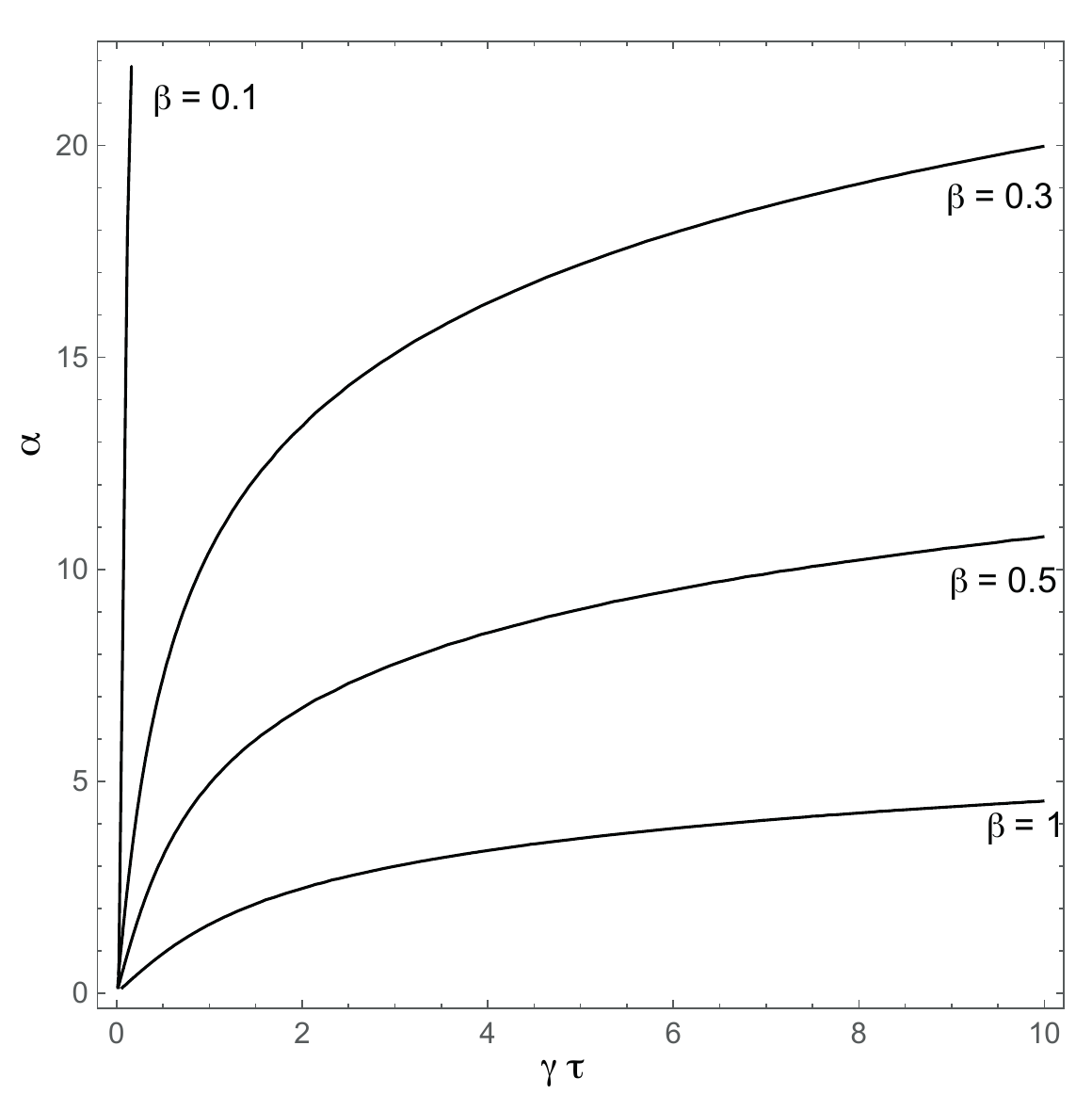}
    \caption{The $\alpha$ exponent vs $\gamma \tau$ for different $\beta$ values (see text for details).}
    \label{fig:alpha}
\end{figure}

If we make the assumption $a\beta \alpha \ll 1$, then we obtain the known result

\begin{equation} 
\alpha\approx \frac{ 2\gamma\tau}{a^2\beta^2} ,\,\,\,\, \frac{ 2\gamma\tau}{a\beta }\ll 1.
\label{power3}
\end{equation}
Next we study Eq. (\ref{MEGx_dicho}) near to the point $I(x)=0$ or $x=-1/\beta $. To do that we perform the variable change $y=1+\beta x$ with $y\to0$. Rewriting Eq. (\ref{con2})
as

\begin{equation}
\label{con2bis}
\gamma\tau(y-1) P_{eq}(y)+y \sum_{k=0}^{\infty}\frac{a^{2k+2}(\partial_y y)^{2k+1}}{(2k+2)!}P_{eq}(y)=0
\end{equation}
and keeping the first order of $y$ in the series we obtain

\begin{eqnarray}\nonumber
&&\left[\gamma\tau\frac{(y-1)}{y}+\cosh[a\beta ]-1\right] P_{eq}(y)+ 
\\\label{con2ter}
&& \left[\frac{1}{2}\left(\cosh[2a\beta ]-1\right) -\cosh(a\beta )+1\right]  yP'_{eq}(y)=0.
\end{eqnarray}
The solution is

\begin{eqnarray}  
P_{eq}(y)\approx A\frac{ \exp\left[-\frac{\gamma\tau}{D y}\right]}{y^{\frac{\gamma\tau+\cosh[a\beta ]-1}{D}}}=
A\frac{ \exp\left[-\frac{\gamma\tau}{D (1+\beta x)}\right]}{(1+\beta x)^{\frac{\gamma\tau+\cosh[a\beta ]-1}{D}}}
\end{eqnarray}
where

\begin{eqnarray} \nonumber
D= \frac{ \cosh[2a\beta ]+1}{2}-\cosh[a\beta ]
\\\label{power3_}
\end{eqnarray}
For $a\beta \ll 1$ we recover the known result

\begin{eqnarray}  
P_{eq}(x)\approx A\frac{ \exp\left[-\frac{2\gamma\tau}{a^2\beta^2 (1+\beta x)}\right]}{(1+\beta x)^{\frac{2\gamma\tau}{a^2\beta^2 }+1}}
\end{eqnarray}
Finally we study the case $I=1$ that, as well as being a new example, it reproduces the behavior near $x=0$ of the previous case. Writing the equation of the interest

\begin{equation}
\label{MEGx_dichobis}
\partial_x [\gamma\tau x P(x)]+   
\left( \cosh\left[a \partial_x \right]-1\right)
P(x).
\end{equation}
Passing to fourier transform
\begin{equation}
\label{MEGx_dichoter}
-k\partial_k [\gamma\tau P(k)]+   
\left( \cos\left[a k\right]-1  \right)
P(k).
\end{equation}
or

\begin{equation}
\label{MEGx_dichoter.1}
P(k) = 
\exp\left[ \int_{0}^{k} \frac{\cos\left[a u\right]-1}{u}du\right] .
\end{equation}
For $x\to\infty$ 
\begin{equation}P(x)\approx \frac{\exp\left[-\frac{x^2}{a^2}\right]}{\sqrt{\pi } a} 
\end{equation} 
while for $x\to 0$ we have

\begin{equation}
\label{MEGx_dichoquat}
P(x)\approx c_1+c_2 x^2+c_3 | x| ^{\frac{1}{\gamma  \tau }-1}
\end{equation}
We infer that for $\gamma  \tau >1$ the $P(x)$ has an integrable divergence in $x=0$, for $1/3<\gamma  \tau <1$ we have an angular point and for $\gamma  \tau <1/3$ we have a smooth maximum. 

The approach used for the analysis of the solution of Eq.~(\ref{MEGx_dicho}) can be easily extended to Eqs.~(\ref{MEGx_gauss}), (\ref{MEGx_flat}).

%

\bibliography{BiblioCentrale}
\end{document}